\newcommand*{\N}{\mathbb{N}}
\newcommand*{\C}{\mathbb{C}}
\newtheorem{theorem}{Theorem}
\newtheorem{lemma}[theorem]{Lemma}
\newtheorem{definition}[theorem]{Definition}
\newtheorem{corollary}[theorem]{Corollary}
\theoremstyle{definition}
\newtheorem{remark}{Remark}
\newtheorem{conjecture}{Conjecture}
\title{\bf Strong Spatial Mixing for General 2-Spin Systems:\\
A Unified Approach from Zero-Freeness}
\date{}
\author{Shuai Shao\thanks{School of Computer Science and Technology \& Hefei National Laboratory, University of Science and Technology of China. Supported by the Innovation Program for Quantum Science and Technology, 2021ZD0302901.}\\ 
{\tt  shao10@ustc.edu.cn}
\and Xiaowei Ye\thanks{\'Ecole Polytechnique. This work was done while Xiaowei Ye was an undergraduate student at the School of the Gifted Young, University of Science and Technology of China.}\\
\tt {yxwustc@mail.ustc.edu.cn}}
\begin{document}
\begin{titlepage}
    \maketitle    \thispagestyle{empty}
\begin{abstract}
We study the algorithmic implications of zero-free regions for the partition functions of 2-spin systems. While Barvinok's algorithm yields FPTASes in such regions, the applicability of Weitz's algorithm is limited to parameter regimes where strong spatial mixing (SSM) can be established. It remains open whether Weitz's algorithm can be applied to general zero-free regions, particularly in settings where no standard tree-recurrence-based proof of SSM is known.

In this paper, we establish new SSM results and thereby extend the applicability of Weitz's FPTAS  to all currently known zero-free regions of 2-spin systems with pinned vertices. 
We achieve this through a unified approach to deriving SSM from zero-freeness in the most general settings of 2-spin systems.
Our work features two key innovations. 
\begin{enumerate}
    \item {\bf New SSM results and Weitz's FPTASes beyond tree recurrences.} Our SSM results cover parts of the celebrated Lee-Yang zero-free region for the ferromagnetic Ising model, where \emph{no} tree-recurrence-based proof of SSM is currently known, or considered feasible. 
    The tree recurrence method typically relies on carefully designed potential functions, the construction and analysis of which can be highly challenging. 
For ferromagnetic 2-spin systems, it remains an open challenge whether such potential functions can be constructed.
We circumvent this difficulty by deriving SSM directly from zero-freeness. 
\item {\bf A novel and unified approach beyond cluster expansions.} The prior approach to deriving SSM from zero-freeness relies on  an analytic tool, namely cluster expansions, 
which are 
 model-specific and known only for a few restricted parameter settings such as the hard-core model near vertex activity $\lambda=1$. 
 A crucial limitation of this approach is that beyond these model-specific settings,  general cluster expansions are \emph{unknown} for 2-spin systems on general graphs.
We overcome this obstacle by introducing a purely combinatorial approach based on a novel {Christoffel–Darboux-type} identity that holds universally for 2-spin systems.
This provides a broadly applicable framework\footnote{In a recent follow-up work~\cite{shao2025zero}, this framework has been further extended to more general settings, including hypergraph independence polynomials, random cluster models, Potts models, and Holant problems, thereby yielding new SSM results in these settings.} for handling general 2-spin systems with arbitrary multivariate parameters and zero-free regions of arbitrary shape in a unified manner.
\end{enumerate}
\end{abstract}

\end{titlepage}
\newpage

\section{Introduction}
{Spin systems} arise from statistical physics to model interactions between
neighbors on graphs.
In this paper, we consider 2-spin systems.
Such a system is defined on a simple finite undirected graph $G=(V,E)$  and in this way the individual entities comprising the system correspond to the
vertices $V$ and their pairwise interactions correspond to the edges $E$.
The system is  associated with three parameters consisting of two edge activities $\beta$ and $\gamma$ that model the tendency of vertices to agree and disagree with their neighbors, and a vertex activity $\lambda$ that models a uniform external field which determines the propensity of a vertex to be the spin $+$.
A partial configuration of this system refers to a mapping $\sigma: \Lambda\to\{+,-\}$ for some $\Lambda\subseteq V$ which may be empty. It assigns one of the two spins $+$ and $-$ to each vertex in $\Lambda$.
When $\Lambda=V$, it is a configuration, and its weight denoted by $w(\sigma)$ is $\beta^{m_+(\sigma)}\gamma^{m_-(\sigma)}\lambda^{n_+(\sigma)}$, where $m_+(\sigma),m_-(\sigma)$ and $n_+(\sigma)$ denote respectively the number of $(+,+)$ edges, $(-,-)$ edges and vertices with the spin $+$. 
The \emph{partition function} of a 2-spin system is defined to be
\[Z_G(\beta,\gamma,\lambda):=\sum\limits_{\sigma:V\to\{+,-\}}w(\sigma).\] 
We also define the partition function   conditioning on a pre-described partial configuration $\sigma_\Lambda$ (i.e., each vertex in $\Lambda$, called a pinned vertex, is pinned to be the spin $+$ or $-$) denoted by $Z^{\sigma_\Lambda}_G(\beta,\gamma,\lambda)$ to be $\sum_{\sigma:V\to\{+,-\}\atop     \sigma|_{\Lambda}=\sigma_\Lambda}w(\sigma)$ where $\sigma|_{\Lambda}$ denotes the restriction of the configuration $\sigma$ on $\Lambda$.

Computing the partition function of the 2-spin system given an input graph $G$ is a very basic counting problem, and it is known to be \#P-hard for all complex-valued parameters $(\beta, \gamma, \lambda)$
except for a few very restricted settings such as $\beta\gamma=1$ or $\lambda=0$~\cite{bar82,Caiphard,clx-boolean-csp-complex}.
Many natural combinatorial problems can be formulated as computing the partition functions of 2-spin systems. 
For example, when $\beta = 0$ and $\gamma=1$,  $Z_G(0, 1, \lambda)$ is the independence polynomial of the graph $G$
(also known as the \emph{hard-core model} in statistical physics); it counts the number of independent 
sets of the graph $G$ when $\lambda =1$.
When $\beta=\gamma$, such a 2-spin system is the famous \emph{Ising model}.

In classical statistical mechanics, the parameters $(\beta, \gamma, \lambda)$ are usually non-negative real numbers and $(\beta, \gamma)\neq (0, 0)$.
Such 2-spin systems are divided into the \emph{ferromagnetic} case ($\beta\gamma>1$) and the \emph{antiferromagnetic} case ($\beta\gamma <1$).
For non-negative $(\beta, \gamma, \lambda)$ that are not all zero,
the partition function can be viewed as the normalizing factor of the Gibbs distribution, 
which is the distribution where a configuration $\sigma$ is drawn with probability 
${\rm Pr}_{G; \beta, \gamma, \lambda}(\sigma)=\frac{w(\sigma)}{Z_G(\beta, \gamma, \lambda)}$.
 However, it is meaningful to consider parameters with complex values. 
First, complex-valued parameters arise when we consider quantum computation. 
For instance, the partition functions of 2-spin systems with complex parameters are closely related to the output probability amplitudes of quantum circuits~\cite{de2011quantum,iblisdir2014low,mann2019approximation}. 
 Moreover, even in classical theory, the study of the location of \emph{complex} zeros of the partition function $Z_G(\beta, \gamma, \lambda)$ connects closely to the analyticity of the free energy $\log Z_G(\beta, \gamma, \lambda)$, which is a classical notion in statistical physics for defining and understanding  the phenomenon of \emph{phase transitions}.
 One of the first and also the best-known results regarding the zeros of  partition functions is the Lee-Yang theorem \cite{LeeYang1,LeeYang2} for the {Ising model}.
 This result was later extended to more general models \cite{asano1970lee,ruelle1971extension,simon1973varphi,newman1974zeros,lieb1981general}.
 Another standard notion for formalizing (the absence of) phase transitions in 2-spin systems is \emph{correlation decay}, which 
 means that 
 correlations between spins decay exponentially as a function of the distance between them. 
  More formally, given an infinite graph, such as an infinite regular tree, 
    one can characterize regions in the model’s parameter space where, under the associated Gibbs distribution, correlations between spins decay exponentially with distance. 

The two notions for (the absence of) phase transitions can also be exploited directly to devise fully polynomial-time deterministic approximation schemes (FPTASes) for computing the partition functions of 2-spin systems. 
The  method associated with {correlation decay}, or more precisely, \emph{strong spatial mixing} (SSM)\footnote{
There are weak spatial mixing and strong spatial mixing.
Weak spatial mixing just refers to the  correlation decay property; 
strong spatial mixing requires that correlations between spins decay exponentially with distance even when conditioning on a pre-described partial configuration.},
was originally developed by Weitz \cite{Weitz06},  and Bandyopadhyay and Gamarnik~\cite{BG-add-here}  for the hard-core model, and was extended to complex parameters~\cite{harvey2018computing}. 
It turns out to be a very powerful tool for antiferromagnetic 2-spin systems \cite{zhang2009approximating,li2012approximate,li2013correlation,sinclair2014approximation}.
In contrast, for ferromagnetic 2-spin systems, the correlation decay method has yielded only limited results~\cite{zhang2009approximating,guo2018uniqueness}.
In these cases, correlation decay is typically established via tree recurrence arguments, which rely on the uniqueness condition on regular trees (also known as the contraction condition~\cite{shao2019contraction}) and on the careful design of a suitable potential function.

The method turning complex zero-free regions of  partition functions into FPTASes was developed by Barvinok \cite{BarvinokBook} and extended by Patel and Regts \cite{PatelRegts17}. It is usually called the \emph{Taylor interpolation} method.
Motivated by this method, several  complex zero-free regions have been obtained for hard-core models \cite{PetersRegts19,Bencszero}, Ising models \cite{mann2019approximation,LSSFisherzeros,petersregts20,GGHP22,PRS23}, and general 2-spin systems \cite{zerofe,shao2019contraction}. 
Table~\ref{tab:my_label} summarizes the known zero-free regions for 2-spin systems.
For a subset $I\subseteq \mathbb{C}$, we use $\mathcal{N}(I)$ to denote a complex neighborhood of $I$. 
In the table, the fourth column (named ``Pinned'') indicates  whether the corresponding zero-free  regions remain valid for graphs with pinned vertices, while 
the fifth column (named ``Bounded'') indicates whether the result holds only for graphs with maximum degree bounded by some integer $d$.
The result in Row 8 corresponds to the celebrated Lee–Yang theorem~\cite{LeeYang2}. Furthermore, the regions in Rows 2, 4, and 6 are all contained within the region described in Row 7.

\begin{table}[!htpb]\small
\renewcommand\arraystretch{1.5}
    \centering
\begin{tabular}{|c|c|c|c|c|c|}
\hline
  & Model & Fixed parameters  & Pinned  & Bounded & Zero-free regions\\
\hline
 1& Hard-core & $\beta=0,\gamma=1$ & \bf Yes & Yes & $\{\lambda\in \mathbb{C}\mid |\lambda|<\frac{(d-1)^{d-1}}{{d}^{d}}\}$ \cite{PetersRegts19}\\
 2&  Hard-core  & $\beta=0,\gamma=1$ & \bf Yes &  Yes &  $\lambda\in \mathcal{N}(I)$, $I=[0, \frac{(d-1)^{d-1}}{{(d-2)}^{d}})$ \cite{PetersRegts19}\\
 3&  Hard-core & $\beta=0,\gamma=1$ & \bf Yes & Yes & $\lambda\in  \mathcal{N}(0)$ irregular shape \cite{Bencszero}\\
 4& Ising & $\lambda=1$ & \bf Yes & Yes &  $\beta \in  \mathcal{N}(I)$, $I=(\frac{d-2}{d},\frac{d}{d-2})$  \cite{LSSFisherzeros}\\
 5& Ising & $\beta\in(\frac{d-2}{d},1)$ & \bf Yes & Yes & $\{\lambda\in \mathbb{C}\mid |\arg(\lambda)|<\theta(\beta)\}$  \cite{petersregts20}\footnotemark\\
  6& Ising & $\beta\in(\frac{d-2}{d},1)$ & \bf Yes & Yes &  $\lambda \in  \mathcal{N}(I)$, 
  $I=(0, \frac{1}{\lambda^c_{\beta, d}})\cup ({\lambda^c_{\beta, d}}, \infty)$ 
   \cite{lsscorrelation}\\
  7 & 2-Spin & None & \bf Yes & Yes & Four regions \cite{shao2019contraction}\\
\hdashline
 8 & Ising & $\beta>1$ or $\beta<-1$ & No & No & $|\lambda|\neq 1$  \cite{LeeYang2}  \\
      9& Ising & $\lambda=1$ & No & Yes & $\{\beta\in \mathbb{C}\mid |\frac{\beta-1}{\beta+1}|<\delta_d\}$ \cite{BarvinokBook,mann2019approximation}\\
   10& Ising & $\lambda=1$ & No & Yes & $\beta\in \mathcal{N}(1)$, diamond shape \cite{BB21}\\
       11& Ising & $\lambda=1$ & No & Yes & $\{\beta\in \mathbb{C}\mid |\frac{\beta-1}{\beta+1}|<\tan\frac{\pi}{4(d-1)}\}$ \cite{GGHP22}\\
12& Ising & $\lambda=1$ & No & Yes & $\{\beta\in \mathbb{C}\mid |\frac{\beta-1}{\beta+1}|<\frac{1-o_{d}(1)}{d-1}\}$ \cite{PRS23}\\
 13 & 2-Spin & $\beta\gamma>1,\beta\ge\gamma$ & No & No & $\lambda\in \mathcal{N}(I)$, $I=[0,(\frac{\beta}{\gamma})^{l(\beta,\gamma)})$ \cite{zerofe} \\
 \hline
\end{tabular}

    \footnotemark[2]{\cite{petersregts20} also gives a zero-freeness result on a part of a circle, which is not a region.}
    
     \caption{Zero-free regions for 2-spin systems} \label{tab:my_label}
\end{table}

For each zero-free region in Table~\ref{tab:my_label}, an FPTAS exists via Barvinok’s algorithm.
In contrast, Weitz-style FPTASes are currently known only for the regions in Rows 1, 2, 4, 6, and 7, where SSM can be established via tree recurrence arguments using contraction properties.
For other regions, particularly those corresponding to the ferromagnetic regime ($\beta\gamma > 1$), it remains unclear whether SSM can be proved using such methods.
This raises a natural and intriguing question: \emph{For those zero-free regions of 2-spin systems where FPTASes exist via Barvinok’s algorithm, can Weitz’s algorithm also be applied? In particular, is it possible to establish SSM for these zero-free regions even when the standard tree recurrence method is not known to work?}

\subsection*{New SSM results and Weitz's FPTASes beyond tree recurrences}

In this paper, we answer this question in the affirmative.
We show that for any zero-free region of 2-spin systems on graphs with pinned vertices,  SSM holds throughout the same region.
This immediately implies new SSM results including the regions in Rows 3 and 5 of Table~\ref{tab:my_label}.   
We also obtain new SSM results for parts of the Lee-Yang zero-free region. 
Consequently, Weitz's FPTASes exist for these regions.
Notably, these SSM results are \emph{not} known  to be provable via the standard tree recurrence method. 

\begin{theorem}\label{thm-intro-new-SSM}
The  SSM property holds for every zero-free region in Table~\ref{tab:my_label}.
  In addition, for the ferromagnetic Ising model ($\beta=\gamma>1$), the  SSM property holds for the regions
   $\{\lambda\in \mathbb{C}\mid |\lambda|<1/\beta^d\}$ and $\{\lambda\in \mathbb{C}\mid |\lambda|>\beta^d\}$. 
As a consequence, Weitz's FPTASes exist for these regions.
\end{theorem}

\begin{remark}

The conditions $|\lambda|<1/\beta^d$ and  $|\lambda|>\beta^d$ correspond to subsets of the zero-free regions $|\lambda|\neq 1$ established by the Lee-Yang circle theorem.
 At first glance,  establishing SSM under these conditions might appear feasible via a standard tree recurrence argument.
However, to the best of our knowledge, \emph{no such tree recurrence proof is known, or even believed feasible}.
The tree recurrence method typically relies on carefully designed potential functions, the construction and analysis of which can be highly challenging  (e.g., see Section 4 of \cite{li2013correlation}). 
For instance, in antiferromagnetic 2-spin systems, such a function has been constructed only within the uniqueness region, exhibiting  a  complicated form expressed as an antiderivative of $\frac{1}{\sqrt{x(\beta x+1)(x+\gamma)}}$.
Given the difficulty in constructing potential functions, whether one can be constructed for ferromagnetic 2-spin systems remains an open question. We overcome this obstacle by presenting a  direct proof that establishes \emph{novel} SSM results without employing the tree recurrence method.

Furthermore, we introduce two variants of spatial mixing, namely \emph{plus spatial mixing} (PSM) and \emph{minus spatial mixing} (MSM) (see Definition~\ref{def:minus-mixing}) 
corresponding to Ising models with all-plus and all-minus boundary conditions, respectively.
We  establish them for the ferromagnetic Ising model in  broader parameter regimes  $\lambda>\beta$ and $\lambda<1/\beta$, respectively. 
\end{remark}

\subsection*{Related work}
Our work generalizes a recent line of work establishing  that zero-freeness implies SSM for the hard-core model and other graph homomorphism models~\cite{harrow2020classical,gamarnik2020correlation, Guus2021zerofreetossm}. 
This implication reflects a deeper connection between the two key paradigms for designing FPTASes for 2-spin systems: zero-freeness (from complex analysis) and correlation decay (from probabilistic arguments).
 In seminal works \cite{dobrushin1985completely, dobrushin1987completely}, Dobrushin and Shlosman proved that for lattice models, the correlation decay property is equivalent to the zero-freeness of the partition function. 
Recently,  further formal connections have been established beyond lattice models.
Such results have been obtained for the hard-core model~\cite{PetersRegts19}, the Ising model~\cite{LSSFisherzeros, petersregts20, lsscorrelation}, and  general 2-spin systems~\cite{shao2019contraction} on graphs with bounded degree. 
In these works, for real parameters where correlation decay can be established via the tree recurrence method using the contraction property, this property can often be analytically extended to complex neighborhoods of the parameters, thereby ensuring zero-freeness of the partition function.
However, it remains unclear whether correlation decay or SSM directly implies zero-freeness of the partition function.
On the other hand, \cite{harvey2018computing} proved correlation decay result for hard-core model based on multivariate zero-freeness condition on polydisc region, establishing Weitz's type FPTAS. 
Gamarnik \cite{gamarnik2020correlation} first showed that zero-freeness of the partition function
directly implies a weak form of SSM for the hard-core model and  other
graph homomorphism models. 
Later, Regts~\cite{Guus2021zerofreetossm} strengthened this result by proving that zero-freeness implies  SSM for the hard-core model near $\lambda = 0$, as well as for other graph homomorphism models on all graphs of bounded degree.
When restricted to  2-spin systems, the graph homomorphism model is
the 2-spin system without external field, i.e., $\lambda=1$, and 
the implication from zero-freeness to SSM holds for a complex neighborhood of  $(\beta, \gamma)=(1, 1)$. 

\section*{A novel and unified approach beyond cluster expansions}

In this paper, we significantly extend the known implication from zero-freeness to the most general settings of 2-spin systems, comprehensively covering all currently known zero-free regions (Rows 1 to 7 of  Table~\ref{tab:my_label}, to the best of our knowledge) of the partition functions of  2-spin systems with pinned vertices. 
The prior approach \cite{Guus2021zerofreetossm} relies on  an analytic tool, namely cluster expansions, which are model-specific and known only for a few cases such as the hard-core model (i.e., $\beta=0, \gamma=1$) near $\lambda=1$, or the 2-spin system without external field (i.e., $\lambda=1$) near $\beta=1$ and $\gamma=1$.
A crucial limitation of this approach is that
beyond these model-specific  settings, general cluster expansions are not known for  2-spin systems with arbitrary complex parameters $(\beta, \gamma, \lambda)$ on general graphs.\footnote{While there are cluster expansion results  for lattice structures~\cite{friedli2018statistical}, it is unknown whether these results can be extended to general graphs.} 

We circumvent this obstacle and establish SSM for general 2-spin systems via a purely combinatorial approach based on a \emph{Christoffel–Darboux-type identity} (Theorem~\ref{CDgeneralintro} below), which holds universally for  arbitrary complex $(\beta, \gamma, \lambda)$. This enables a unified analysis of various  zero-free regions  across all three parameters.
 Unlike the previous result~\cite{Guus2021zerofreetossm}, which treats the partition function as a univariate function restricted to strip-like zero-free regions, 
 we are able to handle the partition function $Z_G(\beta, \gamma, \lambda)$ with arbitrary \emph{multivariate} {complex} parameters $(\beta, \gamma, \lambda)$ and zero-free regions of arbitrary shape.
In summary, we present a unified and broadly applicable framework for deriving SSM from zero-freeness based on a combinatorial identity, bypassing the need for model-specific cluster expansions which may be inapplicable to general 2-spin systems. 
In Appendix~\ref{app:comparison}, we review the cluster expansion approach in more detail, discuss its limitations in extending to general 2-spin systems, and compare it with our approach.

Below, we formally describe the implication from zero-freeness to SSM for general 2-spin systems.
A region refers to a standard complex region, i.e., a simply connected open set in $\mathbb{C}$, and a neighborhood of some $z\in \mathbb{C}$ refers to  a region containing $z$. 
For $z\in \mathbb{C}$, we say $U\subseteq \mathbb{C}$ is a  quasi-neighborhood of $z$ if either $U$ is a  neighborhood of $z$ or $U=\{z\}$. 
For a region $U$ where $0\notin U$, we define $1/U$ to be the set $\{z\in \mathbb{C}\mid 1/z \in U\}$,
which is a region. 
A family $\mathcal{G}$ of graphs is closed under constructions of \emph{self-avoiding walk trees} (see Subsection \ref{sec:saw-tree})  if for any graph $G\in \mathcal{G}$ and any vertex $v$ of $G$, the SAW tree  of $G$ rooted at $v$ is in $\mathcal{G}$.

\begin{theorem}\label{thm:main}
    Let $\mathcal{G}$ be
a family of graphs closed under self-avoiding walk {\rm (SAW)} tree construction, and $U_1, U_2, U_3\subseteq \mathbb{C}$ be one of the following cases:
\begin{enumerate}\setlength{\itemsep}{-0.5ex}
    \item $U_1$ and $U_2$ are quasi-neighborhoods of some $\beta_0, \gamma_0{\geq 0}$ respectively where $\beta_0, \gamma_0$ are not both zero, and 
$U_3$ is a neighborhood of $0$;
 \item $U_2$ and $U_3$ are quasi-neighborhoods of some $\gamma_0, \lambda_0 {>0}$ respectively where $0\notin U_2$, and $U_1$ is a region containing $1/U_2$.

\item $U_1$ and $U_3$ are quasi-neighborhoods of some $\beta_0, \lambda_0 {>0}$ respectively where $0\notin U_1$, and $U_2$ is a region containing $1/U_1$. 

\end{enumerate}
Denote  $((U_1\times U_2) \backslash\{(0, 0)\})\times (U_3\backslash\{0\})$ by $\mathbf U$. 
Suppose for any graph $G\in \mathcal{G}$ and any feasible partial configuration $\sigma_\Lambda$ (Definition~\ref{def:feasible-configuration}), the partition function $Z^{\sigma_\Lambda}_{G}(\beta, \gamma, \lambda)$ is zero-free on $\mathbf U$.
Then, for any $(\beta, \gamma, \lambda) \in \mathbf U$, the 2-spin system defined on $\mathcal{G}$ with parameters $(\beta, \gamma, \lambda)$ exhibits SSM (Definition~\ref{def:ssm}). 

Moreover, when $\mathcal{G}$ is
 a family of graphs of bounded degree, the parameters $\beta_0, \gamma_0$ and $\lambda_0$ can be relaxed to any nonzero complex numbers satisfying $\beta_0\gamma_0=1$.
\end{theorem}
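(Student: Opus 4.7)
The strategy is to reduce SSM to a decay statement for root marginal ratios on trees, and then to extract the required exponential decay from zero-freeness via a Christoffel--Darboux type identity combined with complex-analytic estimates. I would first use the Weitz-style self-avoiding walk construction, which reduces the marginal at a vertex $v$ of an arbitrary graph $G\in\mathcal{G}$ with pinning $\sigma_\Lambda$ to the marginal at the root of an associated rooted tree $T$ with boundary pinnings. The target quantity is the ratio $R_v(\beta,\gamma,\lambda) = Z^{v=+}_T / Z^{v=-}_T$; SSM is equivalent to $R_v^\tau - R_v^{\tau'}$ being exponentially small in the distance from $v$ to the disagreement set of two boundary conditions $\tau,\tau'$. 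Because zero-freeness on subgraphs of $G$ (with pinnings) transfers through the SAW construction, the hypothesis that $Z^{\sigma_\Lambda}_G$ is zero-free on $\mathbf{U}$ gives nonvanishing of all the tree partition functions and marginal ratios we encounter.

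The key algebraic tool is the Christoffel--Darboux type identity for the 2-spin system on trees, which I expect to express the change in $\log R_v$ under a single boundary flip as a telescoping product of per-edge ``propagation factors'' along the path from the flipped leaf back to the root. Each factor is a rational function of $(\beta,\gamma,\lambda)$ and of local subtree ratios. To bound these factors in modulus, I would fix two of the three parameters at the real base point (one of $\beta_0, \gamma_0, \lambda_0$, or $1/\gamma_0$, $1/\beta_0$ as prescribed by the three cases) and view each factor as a holomorphic function of the remaining variable on the region $U_i\subseteq\mathbb{C}$, where zero-freeness supplies the required analyticity. The Riemann mapping theorem then conformally identifies $U_i$ with the unit disk, sending the base point to $0$, and a Schwarz--Pick- or Cauchy-type estimate produces a uniform modulus bound strictly less than $1$ per step. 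Chaining these bounds through the telescope yields the exponential decay rate, crucially \emph{without} any reliance on bounded degree.

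The three cases enter through the choice of base point at which the partition function trivializes: for case~1, $\lambda_0=0$ eliminates the spin-$+$ contribution and degenerates $Z$ to a product over edges with spin $-$; for cases~2 and~3, the condition $\beta\gamma=1$ at the base point similarly causes the partition function to factorize across edges, and is the right anchor for a holomorphic branch of $\log R_v$. Cases~2 and~3 are interchanged by the spin-flip symmetry $Z_G(\beta,\gamma,\lambda)=\lambda^{|V|}Z_G(\gamma,\beta,1/\lambda)$, so only one needs direct argument. Nonnegativity of the base point is used to pin down a canonical real-positive value of $R_v$ from which to continue analytically along $U_i$; when $\mathcal{G}$ is a family of bounded-degree graphs, a standard tree-recurrence contraction argument supplies this initial control without a positivity requirement, yielding the ``moreover'' clause. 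The principal obstacle I foresee is formulating the Christoffel--Darboux identity in a form clean enough that each propagation factor simultaneously (i) is controlled by zero-freeness of $Z$ on $\mathbf{U}$ and (ii) contracts uniformly under the Riemann-map estimate, with attention needed for the degenerate cases where some $U_i$ collapses to a singleton quasi-neighborhood.
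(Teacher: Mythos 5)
The overall scaffolding you describe is right—reduce to trees via the SAW construction, invoke a Christoffel--Darboux type identity, use Riemann mapping to pass from a general region to a disk—but the central mechanism is not the one the paper uses, and as stated it would not go through. You expect the CD identity to express the change in $\log R_v$ under a boundary flip as a \emph{telescoping product of per-edge propagation factors along the path}, and you then want to bound each factor below $1$ by a Schwarz--Pick-type estimate. That is not what the identity gives. The identity established in Theorem~\ref{CDgeneralintro} is a single closed formula for
$Z_{T,u,v}^{\sigma,+,+}Z_{T,u,v}^{\sigma,-,-}-Z_{T,u,v}^{\sigma,+,-}Z_{T,u,v}^{\sigma,-,+}$
equal to $(\beta\gamma-1)^{d(u,v)}\lambda^{d(u,v)+1}$ times a product over the subtrees hanging \emph{off} the path, not a product of per-edge factors \emph{along} the path. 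There is no per-edge holomorphic map that you could argue contracts the hyperbolic metric, and without strict containment of the image in the domain, Schwarz--Pick gives only non-expansion, not the strict decay needed for SSM. The extraction of the factor $\lambda^{d(u,v)+1}$ (respectively $(\beta\gamma-1)^{d(u,v)}$) happens once, globally.

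What the paper actually does with that factor is the step you are missing: it deduces that $P_{G,v}^{\sigma_\Lambda}-P_{G,v}^{\sigma_\Lambda,u^+}$ vanishes to order $d(u,v)+1$ at $\lambda=0$ (or order $d(u,v)$ at $\beta\gamma=1$), hence the first $d(u,v)$ Taylor coefficients of $P_{G,v}^{\sigma_{\Lambda_1}}$ and $P_{G,v}^{\tau_{\Lambda_2}}$ agree (the ``local dependence of coefficients'' property, proved by a short induction). This is then combined with a \emph{uniform bound on a circle} obtained from Montel's theorem applied to the family of functions $P^{\sigma_\Lambda}_{G,v}$ (which avoid $0,1$ on the zero-free region, hence form a normal family), and a Cauchy-integral estimate (Lemma~\ref{lem:bound strip}) converts matching Taylor coefficients plus a uniform circle-bound into exponential decay. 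The positivity of the base point $(\beta_0,\gamma_0,\lambda_0)$ is used precisely to rule out the family escaping to $\infty$ (the marginal is a probability in $(0,1)$ there); when instead $\mathcal{G}$ has bounded degree, boundedness at the base point $\beta_0\gamma_0=1$ comes from the explicit closed form $\frac{\lambda_0\beta_0^{d_v}}{\lambda_0\beta_0^{d_v}+1}$ with $d_v\le\Delta$. Your remark that bounded degree is not needed for the main statement is correct, but the reason is this Montel/probability argument, not the absence of per-step contraction. In short: you have the right ingredients in your hands, but the recipe—telescoping per-edge contraction—is not the one the CD identity supports, and you would need to replace it with the LDC$\,+\,$uniform-bound$\,+\,$Cauchy mechanism.
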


Roughly speaking, the first condition of Theorem~\ref{thm:main} requires that the partition function viewed as a function of $\lambda$ is zero-free in a complex neighborhood of $\lambda=0$, and the second and third conditions of Theorem~\ref{thm:main} together require that the partition function viewed as a function of $\beta$ or $\gamma$ is zero-free in a neighborhood of $\beta\gamma =1$.
We use Figure~\ref{fig:main} to illustrate Theorem~\ref{thm:main}. The first row and the second row are examples of conditions 1 and 2 respectively, where the underlying graphs are arbitrary and a non-negative point $(\beta_0, \gamma_0, \lambda_0)$ is required. The third row illustrates condition 3, where the underlying graphs are required to have bounded degree, but no non-negative point is required. 

\begin{figure}[!hbtp]\label{fig:main}
\centering
\hspace{-5ex}	\includegraphics[scale=0.4]{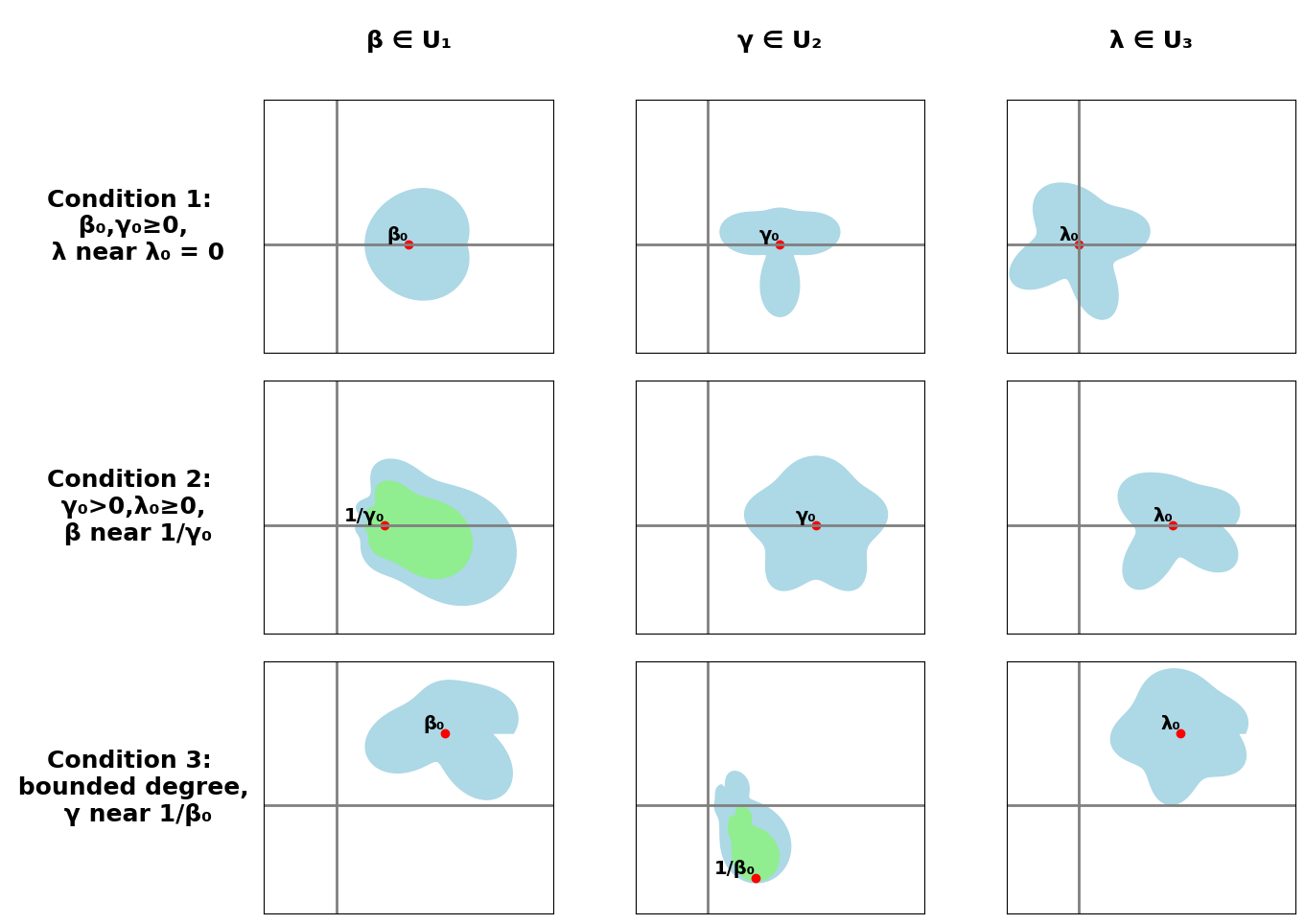}
	\caption{Examples of regions for the 3 conditions in Theorem \ref{thm:main}}
	\label{fig:condition-main}
	\end{figure}

In addition, for the Ising model, we have the following SSM result from zero-free regions of $\beta$. 
\begin{theorem}\label{mainIsing}
    Let $\mathcal{G}$ be
a family of graphs closed under {\rm SAW} tree constructions,  $\lambda_0$ be a nonzero complex number, and $U$ be a neighborhood of $1$ or $-1$. 
Suppose that for any graph $G\in \mathcal{G}$ and any feasible partial configuration $\sigma_\Lambda$, the partition function $Z^{\sigma_\Lambda}_{G,\lambda_0}(\beta)=Z^{\sigma_\Lambda}_{G}(\beta, \beta, \lambda_0)$ is zero-free on $\beta \in  U$.
Then, for any $\beta \in  U$, the corresponding Ising model specified by $\beta$ and $\lambda_0$ exhibits SSM. 
\end{theorem}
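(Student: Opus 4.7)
The plan is to specialize the framework used for Theorem~\ref{thm:main} to the one-parameter Ising setting $\beta=\gamma$, $\lambda=\lambda_0$. First I would reduce SSM on $G\in\mathcal{G}$ (with possibly pinned vertices) to SSM on the self-avoiding walk tree rooted at any vertex $v$ via Weitz's construction; because the SAW-tree partition function with appropriate pinnings equals the pinned graph partition function (up to a harmless factor), the hypothesis transfers to give that $Z_T^{\sigma_\Lambda}(\beta,\beta,\lambda_0)$ is zero-free on $U$ for every rooted tree $T$ arising from this construction. On such a $T$ with root $v$, I would study the ratio
\[
R_T(\beta) \;:=\; \frac{Z_T^{+_v}(\beta,\beta,\lambda_0)}{Z_T^{-_v}(\beta,\beta,\lambda_0)},
\]
which is holomorphic and non-vanishing on $U$ (zero-freeness applied with $v$ pinned to $+$ and to $-$).

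Second, I would invoke the Christoffel-Darboux type identity on trees that underlies Theorem~\ref{thm:main}. For two rooted trees $T,T'$ that share their depth-$d$ truncation but differ below, the identity should express $R_T(\beta)-R_{T'}(\beta)$ as a telescoping product along the root path consisting of ratios of the form $Z_{T_i}^{+}/Z_{T_i}^{-}$ and analogous correction factors, all of which are zero-free and well-defined on $U$ by the hypothesis and the SAW reduction.

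Third, I would bring in the complex-analytic machinery. By the Riemann mapping theorem, conformally identify $U$ with the unit disk and push forward the family $\{R_T\}_T$; by the Schwarz-Pick inequality together with Koebe-type distortion bounds, each factor in the telescoping product from the previous step contracts by some factor $\rho<1$ depending only on the hyperbolic geometry of $U$ relative to the base point $\beta_0\in\{1,-1\}$, uniformly over all trees in the family. Combining the telescoping identity with this uniform contraction rate yields $|R_T(\beta)-R_{T'}(\beta)|\leq C\rho^d$, which is the SSM estimate.

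The main obstacle I anticipate is that the Christoffel-Darboux identity and the accompanying contraction argument in Theorem~\ref{thm:main} are designed for a multivariate zero-free region around $\beta\gamma=1$, whereas here the zero-free set $U$ is only one-dimensional in $\beta$ under the Ising constraint $\beta=\gamma$; one must show that the $\beta$-direction alone supplies enough analytic room to run the contraction, using the coincidence $\beta\gamma=1$ at $\beta=\pm 1$ on the Ising diagonal. A secondary subtlety is the case $\beta_0=-1$, which sits outside the usual ferromagnetic/antiferromagnetic dichotomy; I expect to handle it by a spin-flip symmetry or the substitution $\beta\mapsto -\beta$ (compensated by relabeling of configurations), reducing it to the $\beta_0=1$ case.
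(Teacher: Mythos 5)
There is a genuine gap, and it lies in the final estimate. The paper's mechanism is: (i) use the Christoffel--Darboux identity to establish the \emph{local dependence of coefficients} (LDC) property, i.e.\ that after reducing to the SAW tree, the difference $P_{T,v}^{\sigma_\Lambda}(\beta)-P_{T,v}^{\sigma_\Lambda,u^{\pm}}(\beta)$ carries an explicit factor $(\beta\gamma-1)^{d(u,v)}$, which for Ising at $\beta\in\{1,-1\}$ produces a factor $(\beta\mp1)^{d(u,v)}$; (ii) apply Montel's theorem to the family $\{P^{\sigma_\Lambda}_{G,v}\}$ to obtain a uniform bound on a circle around $\pm1$, using the key observation that for all graphs and all pinnings $P^{\sigma_\Lambda}_{G,v}(\pm1,\pm1,\lambda_0)$ takes at most two fixed values $\frac{\pm\lambda_0}{\pm\lambda_0+1}$, so no bounded-degree assumption is needed; and (iii) feed LDC plus the uniform bound into the Cauchy-integral estimate of Lemma~\ref{lem:bound strip}, with the Riemann mapping theorem only used to conformally normalize a generically shaped $U$ to a disk. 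Your proposal instead tries to extract SSM from a ``telescoping product'' for $R_T-R_{T'}$ combined with a Schwarz--Pick/Koebe contraction argument. Two problems: first, the Christoffel--Darboux identity does not telescope---it gives a single factored expression $(\beta\gamma-1)^{d(u,v)}\lambda^{d(u,v)+1}\prod_i(\cdots)$ for the $2\times2$ determinant, and its sole role here is to exhibit the $(\beta\mp1)^{d}$ vanishing used in LDC. Second, Schwarz--Pick would require you to identify a fixed hyperbolic domain that the family of ratios $R_T$ maps into and a base point mapped to a compact subset, neither of which you establish; zero-freeness only gives you that the marginals avoid $\{0,1\}$, which is exactly the input Montel needs, not the input a contraction-in-the-hyperbolic-metric argument needs. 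You never recover a uniform bound over all graphs/pinnings, which is what makes the estimate work.

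A second, more localized error: the reduction of $\beta_0=-1$ to $\beta_0=1$ by ``spin-flip symmetry or the substitution $\beta\mapsto-\beta$'' does not work. The spin-reversal symmetry for the Ising model sends $(\beta,\beta,\lambda)\mapsto(\beta,\beta,\lambda^{-1})$ and leaves $\beta$ untouched, and $\beta\mapsto-\beta$ is not a symmetry of the partition function on general graphs. In the paper the two base points $\beta=1$ and $\beta=-1$ are handled identically: both satisfy $\beta\gamma=1$, so the Christoffel--Darboux factor is the same, and in both cases the marginal at the base point is a bounded constant (independent of the graph up to the parity of $\deg v$), which feeds directly into the Montel normal-family argument with no need for any reduction between the two.
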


Although we give three different conditions involving three parameters $\lambda, \beta$, and $\gamma$ for 2-spin systems (Theorem~\ref{thm:main}) and a condition for Ising models (Theorem~\ref{mainIsing}) under which zero-freeness implies SSM, they are all proved by a unifying approach in which a Christoffel-Darboux type identity plays a key role.
A Christoffel-Darboux type identity was originally established for the independence polynomial on trees~\cite{Gutman} and was later extended to general graphs~\cite{Bencs18}.
Our main technical contribution lies in
establishing a Christoffel-Darboux type identity for the 2-spin system\footnote{In fact, we establish an identity for the more general $q$-spin system in the appendix, see Theorem \ref{thm:CD-q-spin}.} on trees.
Such an identity provides a relation for the partition functions of 2-spin systems with different pinned vertices, and may be of independent interest. 
Given a graph $G$, a partial configuration $\sigma_{\Lambda}$, and  $u,v\in V(G),$ we define
\[Z^{\sigma_\Lambda,+}_{G,v}(\beta,\gamma,\lambda)=\sum\limits_{\sigma:V\to\{+,-\} \atop \sigma|_{\Lambda}=\sigma_\Lambda, \sigma(v)=+}w(\sigma), \text{ and } Z^{\sigma_\Lambda,+,+}_{G,v,u}(\beta,\gamma,\lambda)=\sum\limits_{\sigma:V\to\{+,-\} \atop  \sigma|_{\Lambda}=\sigma_\Lambda,  \sigma(v)=\sigma(u)=+}w(\sigma),\]
and similarly we can define $Z^{\sigma_\Lambda,-}_{G,v}(\beta,\gamma,\lambda),Z^{\sigma_\Lambda,+,-}_{G,v,u}(\beta,\gamma,\lambda),Z^{\sigma_\Lambda,-,+}_{G,v,u}(\beta,\gamma,\lambda)$ and $Z^{\sigma_\Lambda,-,-}_{G,v,u}(\beta,\gamma,\lambda).$

\begin{theorem}\label{CDgeneralintro}
    Suppose that $T$ is a tree, $\sigma_{\Lambda}$ is a partial configuration on some $\Lambda \subseteq V$ which may be empty, 
    and $u$ and $v$ are two distinct vertices in $V\backslash \Lambda$. 
    Let $d(u,v)$ denote the distance between $u$ and $v$, $p_{uv}$ denote  the unique path in $T$ connecting $u$ and $v$, $V(p_{uv})$ denote the set of vertices in $p_{uv}$, $N[p_{uv}]$ denote the set of neighbors of vertices in $p_{uv}$,  $N[p_{uv}]\setminus V(p_{uv})=\{v_1,\cdots,v_n\}$, and $T_i$ denote the component of $v_i$ in $T\setminus p_{uv}$. 
    Then for the partition function of the 2-spin system defined on $T$ conditioning on $\sigma_{\Lambda}$, we have {(we omit the argument $(\beta, \gamma, \lambda)$ in the second line for simplicity)}
\begin{equation}\label{equ-cd}
\begin{aligned}
& Z_{T,u,v}^{\sigma_{\Lambda},+,+}(\beta,\gamma,\lambda)Z_{T,u,v}^{\sigma_{\Lambda}, -,-}(\beta,\gamma,\lambda) 
-Z_{T,u,v}^{\sigma_{\Lambda}, +,-}(\beta,\gamma,\lambda)Z_{T,u,v}^{\sigma_{\Lambda}, -,+}(\beta,\gamma,\lambda) \\
=&\begin{cases}(\beta\gamma-1)^{d(u,v)}\lambda^{d(u,v)+1}\prod\limits_{i=1}^{n}(\beta Z_{T_i,v_i}^{\sigma_{\Lambda},+}+Z_{T_i,v_i}^{\sigma_{\Lambda},-})(Z_{T_i,v_i}^{\sigma_{\Lambda},+} +\gamma Z_{T_i,v_i}^{\sigma_{\Lambda},-}) & , \text{ if } V(p_{uv})\cap\Lambda=\emptyset \\
0 & , \text{ if } V(p_{uv})\cap \Lambda\neq\emptyset
\end{cases}.\end{aligned}
\end{equation}    
\end{theorem}

Although Theorem~\ref{CDgeneralintro} is established for trees only,  the implication from zero-freeness to SSM applies to 2-spin systems on all graphs by using SAW trees (Lemma~\ref{lem:tree-enough}).
Thus, for our purposes, it suffices to  prove the identity  for trees only.
It remains an interesting and challenging question whether this identity can be extended to 2-spin systems on general graphs.

 We briefly explain why Theorem~\ref{CDgeneralintro} provides a unifying approach to turn all existing zero-free regions of  2-spin systems to SSM. 
A key step in deriving SSM from zero-freeness is to show that the  coefficients of the first $k$ terms in the Taylor series of the rational function $P_{G,v}^{\sigma_\Lambda}(\beta,\gamma,\lambda)=\frac{Z^{\sigma_\Lambda,+}_{G,v}(\beta,\gamma,\lambda)}{Z^{\sigma_\Lambda}_{G}(\beta,\gamma,\lambda)}$ around a specific point (for example $\lambda=0$, or $\beta=1$ after a variable replacement) depend only on the $k$-neighborhood of $v$ in $G$, defined as LDC (Definition~\ref{def:LDC}).
By Theorem~\ref{CDgeneralintro}, one can factor out $\lambda^{d(u,v)}$ and $(\beta\gamma-1)^{d(u,v)}$
from $P_{G,v}^{\sigma_\Lambda}(\beta,\gamma,\lambda)-P_{G,v}^{\sigma_\Lambda, u^+}(\beta,\gamma,\lambda)$ where $P_{G,v}^{\sigma_\Lambda, u^+}(\beta,\gamma,\lambda)=\frac{Z^{\sigma_\Lambda,+,+}_{G,v,u}(\beta,\gamma,\lambda)}{Z^{\sigma_\Lambda,+}_{G,u}(\beta,\gamma,\lambda)}$.
Then, by pinning vertices pointwise, we can eventually prove LDC for 2-spin systems both near $\lambda=0$ and $\beta\gamma=1$ in a unified way, eliminating the need to establish cluster expansions for various special cases of 2-spin systems to prove LDC.
Thus, the use of Theorem~\ref{CDgeneralintro} is not only novel but also circumvents the need for cluster expansion, which may be inapplicable to general 2-spin systems. 

Furthermore, the application of Theorem~\ref{CDgeneralintro} reveals another potential advantage. It may answer a natural and intriguing question: \emph{if the partition function is zero-free on a complex neighborhood of some points other than $\lambda=0$ and $\beta\gamma=1$, can we still derive SSM from such zero-free regions?} The answer would be yes if one could factor out some $(\lambda-\lambda_0)^{d(u,v)}$ or $(\beta\gamma-z_0)^{d(u,v)}$ from $P_{G,v}^{\sigma_\Lambda}(\beta,\gamma,\lambda)-P_{G,v}^{\sigma_\Lambda, u^+}(\beta,\gamma,\lambda)$. However, such a factorization is ruled out by Theorem~\ref{CDgeneralintro}.
Otherwise, there would exist some $\lambda_0\neq 0$ or $\beta_0\gamma_0=z_0\neq 1$ such that $P_{G,v}^{\sigma_\Lambda}(\beta,\gamma,\lambda_0) \equiv P_{G,v}^{\sigma_\Lambda, u^+}(\beta,\gamma,\lambda_0)$ or  $P_{G,v}^{\sigma_\Lambda}(\beta_0,\gamma_0,\lambda) \equiv P_{G,v}^{\sigma_\Lambda, u^+}(\beta_0,\gamma_0,\lambda)$ for all graphs $G$, which is impossible.  
Thus, it is highly possible that by using Theorem~\ref{CDgeneralintro}, we have indeed characterized  all zero-free regions from which SSM can be derived (at least via the approach based on establishing LDC). 
In addition, we utilize the Riemann mapping theorem to handle zero-free regions of arbitrary shape, rather than restricting to strip-shaped neighborhoods of real intervals.\footnote{In order to obtain an FPTAS for non-disk zero-free regions via Barvinok's algorithm, such regions must be mapped to a disk by polynomials~\cite{BarvinokBook}. While by using Riemann
mapping theorem, one can establish SSM and obtain an FPTAS via Weitz's algorithm for  zero-free regions of arbitrary shapes.}

\subsection*{Organization}
The paper is organized as follows. In Section~\ref{sec:preliminaries}, we present some definitions and notations, and introduce certain results from complex analysis. 
In Section~\ref{sec:main proof}, we prove a Christoffel-Darboux type identity for 2-spin systems on trees, and use it to prove the key LDC property for $\beta, \gamma$ and $\lambda$ in a unified way.  
In Section~\ref{sec:together}, we combine all these ingredients together and show that zero-freeness implies SSM for regions of arbitrary shapes. 
In Section~\ref{sec:LY}, we extend our approach to deal with 2-spin systems with non-uniform external fields, and establish novel (strong/plus/minus) spatial mixing results for the non-uniform ferromagnetic Ising model from the Lee-Yang circle theorem. 
We conclude the paper and introduce some open questions in Section~\ref{sec:conclude}.
An extension of the Christoffel-Darboux type identity to \texorpdfstring{$q$}{q}-spin systems on trees will be given in Appendix~\ref{sec:CDqspin}.
In Appendix~\ref{app:comparison}, we compare our approach with the cluster expansion approach in more detail.

\section{Preliminaries}\label{sec:preliminaries}

\subsection{Notations and definitions}
Given a partial configuration $\sigma_\Lambda$,  for any $\Lambda'\subset\Lambda,$ we denote by $\sigma_{\Lambda'}$ the restriction of $\sigma_{\Lambda}$ on $\Lambda'$, called a  sub-partial configuration of $\sigma_{\Lambda}$. 
Consider the  rational function $P_{G,v}^{\sigma_\Lambda}(\beta,\gamma,\lambda)=\frac{Z^{\sigma_\Lambda,+}_{G,v}(\beta,\gamma,\lambda)}{Z^{\sigma_\Lambda}_{G}(\beta,\gamma,\lambda)}.$
Note that when $\beta, \gamma, \lambda{>0}$, $P_{G,v}^{\sigma_\Lambda}$ denotes the marginal probability of $v$ being assigned to the spin $+$ in the Gibbs distribution. 
    However, when $\beta, \gamma, \lambda\in \mathbb{C}$, $P_{G,v}^{\sigma_\Lambda}(\beta,\gamma,\lambda)$ has no probabilistic meaning and $P_{G,v}^{\sigma_\Lambda}(\beta,\gamma,\lambda)\in \mathbb{C}$  if $Z^{\sigma_\Lambda}_{G}(\beta,\gamma,\lambda)\neq 0$. 
For the setting $\beta =0$ or $\gamma =0$,
there are trivial partial configurations such that
 $Z^{\sigma_\Lambda}_{G, v}(\beta,\gamma,\lambda)=0$. 
We rule these cases out as they are \emph{infeasible}.

\begin{definition}[Feasible partial configuration]\label{def:feasible-configuration}
Given a  2-spin system defined on a graph $G$ with parameters $(\beta,\gamma,\lambda)$,
a partial configuration $\sigma_\Lambda$ is feasible if 
 $\sigma_\Lambda$ does not assign any two adjacent vertices in $G$ both to the spin $+$ when $\beta=0$, and  $\sigma_\Lambda$ does not assign any two adjacent vertices in $G$ both to the spin $-$ when $\gamma=0$.
 We say a vertex $v\in V$ is proper to a feasible partial configuration $\sigma_\Lambda$ if $v\notin \Lambda$ and the partial configurations $\sigma^+_{\Lambda\cup\{v\}}$  and 
 $\sigma^-_{\Lambda\cup\{v\}}$ 
 obtained from $\sigma_\Lambda$ by further pinning $v$ to $+$ and $-$ respectively are still feasible. 
\end{definition}
\begin{remark}
    Any sub-partial configuration of a feasible partial configuration is still feasible.
\end{remark}

\begin{definition}[Zero-freeness]
    Let $\mathbf{U}\subseteq \mathbb{C}^3$ and $\mathcal{G}$ be a family of graphs. The partition function of the 2-spin systems defined on $\mathcal{G}$ is said to be \emph{zero-free} on $\mathbf{U}$  if for any graph $G\in \mathcal{G}$, any feasible partial configuration $\sigma_\Lambda$, the partition function  $Z^{\sigma_\Lambda}_{G}(\beta,\gamma,\lambda)\neq 0$ for  all $(\beta, \gamma, \lambda)\in \mathbf{U}$. 
\end{definition}

When the  parameters  $\beta$ and $\gamma$ of a 2-spin system are fixed, we may write the partition function $Z^{\sigma_\Lambda}_{G}(\beta,\gamma,\lambda)$ as $Z^{\sigma_\Lambda}_{G,\beta,\gamma}(\lambda)$ and the rational function $P_{G,v}^{\sigma_\Lambda}(\beta,\gamma,\lambda)$ as $P_{G, \beta,\gamma,v}^{\sigma_\Lambda}(\lambda)$, which are both univariate functions on $\lambda$. 
 Sometimes we may omit the fixed parameters $\beta$ and $\gamma$ at the subscript for simplicity.
The definition of zero-freeness can be easily adapted to the partition function $Z^{\sigma_\Lambda}_{G,\beta,\gamma}(\lambda)$ with fixed $\beta$ and $\gamma$, and similarly to $Z^{\sigma_\Lambda}_{G,\gamma,\lambda}(\beta)$ with fixed $\gamma$ and $\lambda$, as well as $Z^{\sigma_\Lambda}_{G,\beta,\lambda}(\gamma)$ with fixed $\beta$ and $\lambda$. 
When the three parameters are all fixed, we may write the value $Z^{\sigma_\Lambda}_{G}(\beta,\gamma,\lambda)$ as $Z^{\sigma_\Lambda}_{G}$ and the value $P_{G,v}^{\sigma_\Lambda}(\beta,\gamma,\lambda)$ as $P_{G,v}^{\sigma_\Lambda}$ for simplicity. 

Suppose that the partition function $Z^{\sigma_\Lambda}_{G,\beta,\gamma}(\lambda)$ is zero-free on some $U\subseteq \mathbb{C}$.
Then, the rational function $P_{G, \beta,\gamma,v}^{\sigma_\Lambda}(\lambda)$ is analytic on $U$ for any graph $G$ and any feasible partial configuration $\sigma_\Lambda$.
Moreover, if the vertex $v$ is proper to $\sigma_\Lambda$, then $P_{G, \beta,\gamma,v}^{\sigma_\Lambda}(\lambda) \neq 0$ and $P_{G, \beta,\gamma,v}^{\sigma_\Lambda}(\lambda) \neq 1$ on $U$ since $Z_{G, \beta,\gamma,v}^{\sigma_\Lambda, +}(\lambda)\neq 0$ and  $Z_{G, \beta,\gamma,v}^{\sigma_\Lambda, -}(\lambda)\neq 0$. 
Notice that $\lim_{\lambda\rightarrow 0}P_{G, \beta,\gamma,v}^{\sigma_\Lambda}(\lambda)=0$.
We agree that $P_{G, \beta,\gamma,v}^{\sigma_\Lambda}(\lambda)=0$ when $\lambda=0$. 
If the zero-free region $U$ is of the form $\mathcal{N}(0)\backslash\{0\}$ for some complex neighborhood $\mathcal{N}(0)$ of $0$, then the rational function $P_{G, \beta,\gamma,v}^{\sigma_\Lambda}(\lambda)$ is analytic on the entire $\mathcal{N}(0)$. 

\begin{definition}[Strong spatial mixing, SSM]\label{def:ssm}
    Fix complex parameters $\beta, \gamma, \lambda$ where $(\beta, \gamma)\neq (0, 0)$ and $\lambda\neq 0$, and a family of graphs $\mathcal{G}$. 
The corresponding 2-spin system defined on  $\mathcal{G}$ with parameters $(\beta, \gamma, \lambda)$ is said to satisfy \emph{strong spatial mixing (SSM)} with exponential rate $r>1$ if there exists a constant $C$ such that for any $G=(V,E)\in\mathcal{G}$,
any feasible partial configurations $\sigma_{\Lambda_1}$ and $\tau_{\Lambda_2}$ where $\Lambda_1$ may differ from $\Lambda_2$, 
and any vertex $v$ proper to $\sigma_{\Lambda_1}$ and $\tau_{\Lambda_2}$, we have 
\[\left|P_{G,v}^{\sigma_{\Lambda_1}}-P_{G,v}^{\tau_{\Lambda_2}}\right|\leq Cr^{-d_G(v,\sigma_{\Lambda_1}\neq \tau_{\Lambda_2})}.\]
Here,  we denote by $\sigma_{\Lambda_1}\neq \tau_{\Lambda_2}$ the set $(\Lambda_1\setminus\Lambda_2)\cup (\Lambda_2\setminus\Lambda_1)\cup \{v\in\Lambda_1\cap\Lambda_2:\sigma_{\Lambda_1}(v)\neq\tau_{\Lambda_2}(v)\}$ (i.e., the set on which $\sigma_{\Lambda_1}$ and $\tau_{\Lambda_2}$ differ with each other), and $d_G(v,\sigma_{\Lambda_1}\neq \tau_{\Lambda_2})$ is the shortest path distance from $v$ to any vertex in $\sigma_{\Lambda_1}\neq \tau_{\Lambda_2}$.
    
\end{definition}

    Note that the SSM property guarantees that the partition function $Z_{G}^{\sigma_\Lambda}$ is zero-free on the fixed point $(\beta, \gamma, \lambda)$, as $|P_{G,v}^{\sigma_{\Lambda}}|<\infty$ for any graph $G$ and any feasible partial configuration $\sigma_{\Lambda}$. 
   The above SSM definition for complex parameters is stronger than the usual definition for real parameters (see Definition 5 of \cite{li2013correlation} for example). 
For real values, by monotonicity one  need only consider the case that $\Lambda_1=\Lambda_2$ (the two configurations are on the same set of vertices). 
However when defining SSM for complex parameters, it is necessary to consider the case that $\Lambda_1\neq \Lambda_2$ 
in order to ensure that the SSM-based Weitz's algorithm works for complex parameters (see Lemma 3.4 of \cite{shao2019contraction}).

\subsection{Weitz's algorithm and self-avoiding walk tree}\label{sec:saw-tree}
  Given a graph $G=(V,E)$ and a vertex $v\in V$,
 the SAW tree of $G$ at $v$, denoted by $T_{\text{SAW}}(G,v)$, is a tree with the root $v$ that
enumerates all self-avoiding paths beginning from $v$ in.  
Supplementary vertices closing cycles of $G$ are added as leaves of the tree. 
Each vertex in $V$ of $G$ is mapped to corresponding vertices of $T_{\text{SAW}}(G,v)$. 
A vertex set $\Lambda\subseteq V$ is mapped to  some $\Lambda_{\text{SAW}}\subseteq V_{\text{SAW}}$,
and a partial configuration $\sigma_\Lambda:\Lambda\to\{\pm\}$ is mapped to a partial configuration $\sigma'_{\Lambda_{\text{SAW}}}$ on $\Lambda_{\text{SAW}}$. 
Let $L_{\Lambda_{\text{SAW}}}$ denote the set of leaf vertices $u$ in $V_{\text{SAW}}$ such that $u$ closes a cycle and no vertices in the path between the root $v$ and the leaf $u$ are pinned by $\sigma'_{\Lambda_{\text{SAW}}}$. 
For every vertex in $L_{\Lambda_{\text{SAW}}}$, a configuration is imposed to it depending on the orientation of the cycle, where the order of indices of vertices is arbitrarily chosen (see Figure \ref{fig:saw-tree} for an example).
We denote by $\sigma_{\Lambda_{\text{SAW}}}$ the partial configuration $\sigma'_{\Lambda_{\text{SAW}}}$ on $\Lambda_{\text{SAW}}$ together with the imposed configuration on $L_{\Lambda_{\text{SAW}}}$.
One can easily check that if $\sigma_\Lambda$ is feasible in $G$ and $v$ is proper to $\sigma_\Lambda$, then $\sigma_{\Lambda_{\text{SAW}}}$ is  feasible in $T_{\text{SAW}}(G,v)$  and $v$ is proper to $\sigma_{\Lambda_{\text{SAW}}}$. 
Here is the key result for the SAW tree construction.

\begin{figure}[!hbtp]
\centering
	\includegraphics[scale=0.55]{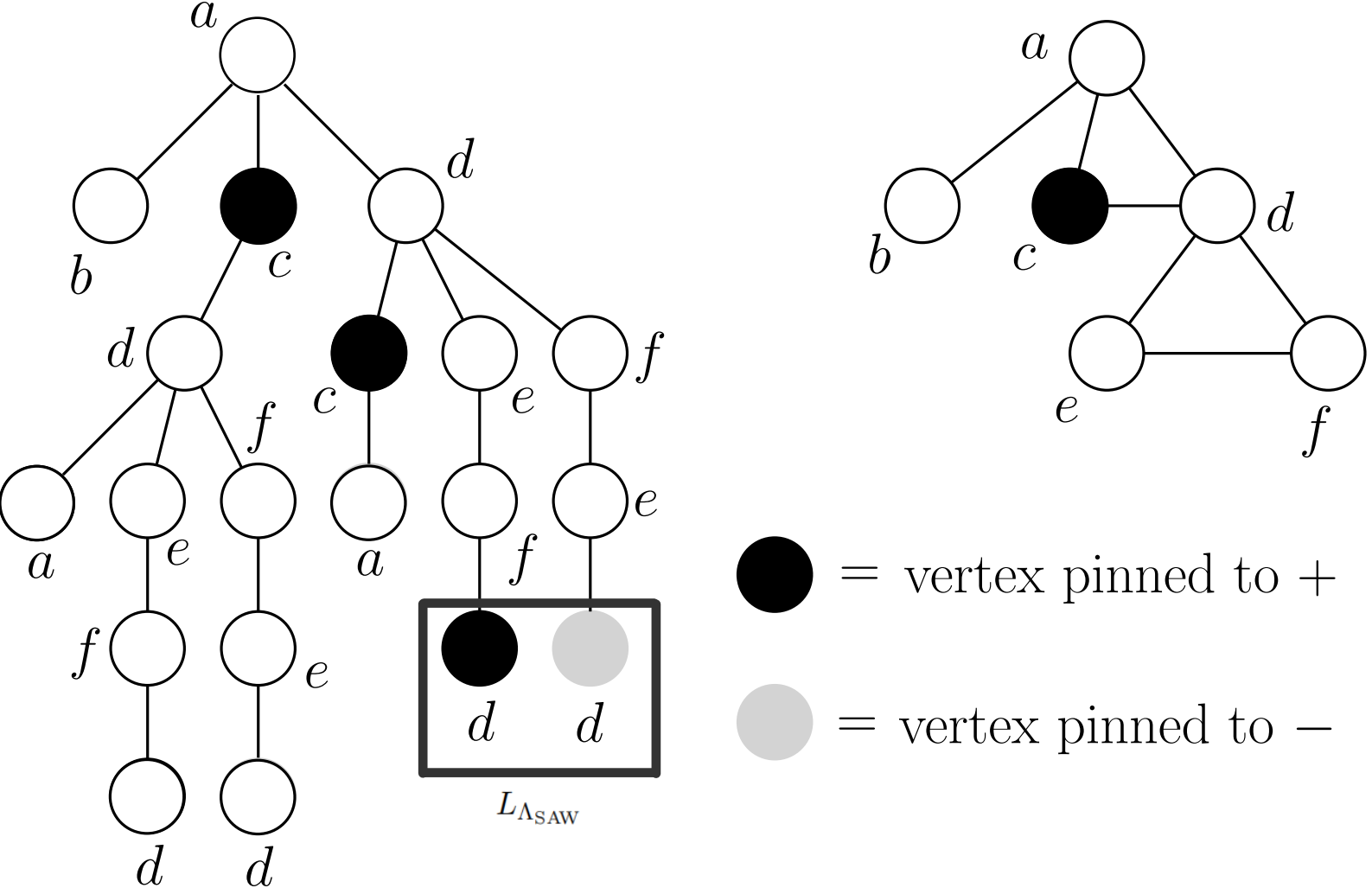}
	\caption{Weitz's construction of SAW tree}
	\label{fig:saw-tree}
	\end{figure}

\begin{theorem}\label{prop:SAW}
  Let $G=(V,E)$, $v\in V$, and $T=T_{\mathrm{SAW}}(G,v)$.
 Suppose that $\sigma_\Lambda$ is a feasible partial configuration on some $\Lambda$ where $v$ is proper to  $\sigma_\Lambda$, and $\sigma_{\Lambda_\mathrm{SAW}}$ is the corresponding feasible partial configuration on $\Lambda_\mathrm{SAW}\subseteq V_\mathrm{SAW}$. Then,
  \begin{enumerate}\setlength{\itemsep}{-0.5ex}
      \item $\mathrm{dist}_G(v,S)=\mathrm{dist}_T(v,S_{\rm{SAW}})$  for any $S\subseteq V$, the $k$-neighborhood $B_G(v,k)$ of $v$ in $G$ is mapped to $B_{T}(v,k)$ in $T$, and the maximum degree of $T$ is equal to that of $G$.
   \item If there exists a region $U\subset \mathbb{C}^3$ such that $Z^{\sigma_{\Lambda_\mathrm{SAW}}}_T(\beta, \gamma, \lambda)\neq 0$ on $U$ (i.e.,  $P_{T, v}^{\sigma_{\Lambda_{\rm{SAW}}}}(\beta, \gamma, \lambda)\neq \infty$), then $Z^{\sigma_\Lambda}_G(\beta, \gamma, \lambda)\neq 0$ and $P_{T, v}^{\sigma_{\Lambda_{\rm{SAW}}}}(\beta, \gamma,  \lambda)=P_{G, v}^{\sigma_\Lambda}(\beta, \gamma,  \lambda)$ on $U$.
  \end{enumerate}
\end{theorem}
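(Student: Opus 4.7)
My plan is to split the theorem into its combinatorial part (1) and its algebraic part (2), which are proved by quite different means.

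For part (1), everything follows directly from the definition of $T_{\mathrm{SAW}}(G,v)$. Each vertex of $T=T_{\mathrm{SAW}}(G,v)$ corresponds to a self-avoiding walk in $G$ starting at $v$, and its depth in $T$ equals the length of that walk. A shortest $v$-to-$S$ path in $G$ is automatically self-avoiding and lifts to a $T$-path of the same length ending at a copy of the target, while any $T$-path projects to a walk in $G$ of at least equal length, giving $\mathrm{dist}_T(v,S_{\mathrm{SAW}})=\mathrm{dist}_G(v,S)$ and hence the $k$-neighborhood correspondence. For the maximum degree, the children of any internal vertex of $T$ are in bijection with the still-unvisited neighbors of the corresponding vertex in $G$, and cycle-closing leaves contribute no extra branching, so $\Delta(T)\leq \Delta(G)$; equality at the root gives $\Delta(T)=\Delta(G)$.

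For part (2), I would proceed by induction on $|V\setminus \Lambda|$. The base case, where $v$ is the only unpinned vertex, reduces to a direct computation since both $G$ and $T_{\mathrm{SAW}}(G,v)$ collapse to $v$ adjacent to pinned leaves. For the inductive step, let $u_1,\ldots,u_d$ be the neighbors of $v$ in $G$. The classical telescoping identity of Weitz, suitably adapted to the 2-spin weights $(\beta,\gamma,\lambda)$, expresses the ratio $Z_{G,v}^{\sigma_\Lambda,-}/Z_{G,v}^{\sigma_\Lambda,+}$ as a product $\lambda^{-1}\prod_{i=1}^{d} F_i$, where each factor $F_i$ is a rational function of the marginal of $u_i$ in $G-v$ after additionally pinning $u_1,\ldots,u_{i-1}$ according to a sequence $\tau_i$ dictated by the chosen orientation of the cycles through $v$. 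Each of these auxiliary graphs has strictly fewer unpinned vertices, so the induction hypothesis identifies its marginal with the marginal at the root of its own SAW tree. Grafting these smaller SAW trees below a fresh root $v$ and examining the induced cycle-closing pinnings reproduces $T_{\mathrm{SAW}}(G,v)$ together with the leaf assignment $L_{\Lambda_{\mathrm{SAW}}}$ prescribed in the definition, which establishes the desired equality $P_{T,v}^{\sigma_{\Lambda_{\mathrm{SAW}}}}=P_{G,v}^{\sigma_\Lambda}$.

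The main obstacle is to execute this identity over complex parameters without any hidden division by zero. The hypothesis $Z_T^{\sigma_{\Lambda_{\mathrm{SAW}}}}\neq 0$ on $U$ enters as follows: every auxiliary tree appearing in the recursion embeds naturally into $T$ with a compatible pinning, so by restricting the pinning those smaller trees are also zero-free on $U$, and all factors of the telescoping product are well-defined analytic functions there. The identity itself is purely algebraic, holding as an equality of rational functions whenever denominators are nonzero, so it transfers from the classical positive real setting to the whole region $U$ by analytic continuation in $(\beta,\gamma,\lambda)$; this yields both $Z_G^{\sigma_\Lambda}\neq 0$ on $U$ and the equality of marginals. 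The most delicate step, in my view, is verifying that the pinning sequence $\tau_i$ produced by the telescoping recursion matches the cycle-orientation convention used to impose configurations on $L_{\Lambda_{\mathrm{SAW}}}$ in the definition.
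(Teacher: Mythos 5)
Part (1) is fine and matches Weitz's Theorem~3.1, which the paper simply cites; the only point worth tightening is the maximum-degree claim: ``equality at the root'' only settles the case where $v$ itself realizes the maximum degree of $G$, so to conclude $\Delta(T)=\Delta(G)$ in general you need to observe that a maximum-degree vertex of $G$ occurs as an internal (non-leaf, non-root) vertex of $T$, where it again has full degree.

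For part (2) the paper itself gives no proof but defers to Proposition~26 of Peters--Regts (and Weitz for the real case); your telescoping-plus-induction plan is indeed the strategy that reference uses, so the high-level approach matches. However, there is a real gap in the step ``every auxiliary tree appearing in the recursion embeds naturally into $T$ with a compatible pinning, so by restricting the pinning those smaller trees are also zero-free on $U$.'' The stated hypothesis gives only $Z^{\sigma_{\Lambda_\mathrm{SAW}}}_T\neq 0$ for the single pinning $\sigma_{\Lambda_\mathrm{SAW}}$, in which the root $v$ is \emph{not} pinned. When you remove $v$ to start the tree recursion you decompose $Z^{\sigma}_T=Z^{\sigma,+}_{T,v}+Z^{\sigma,-}_{T,v}$ as a \emph{sum}, each summand being a product of factors of the form $\beta Z^{\sigma,+}_{T_i,u_i}+Z^{\sigma,-}_{T_i,u_i}$ or $Z^{\sigma,+}_{T_i,u_i}+\gamma Z^{\sigma,-}_{T_i,u_i}$; nonvanishing of the sum does not give nonvanishing of either summand, let alone of these individual subtree factors, which is exactly what your telescoping factors $F_i$ require. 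What the Peters--Regts argument actually uses is zero-freeness of $Z_{T'}$ for every subtree $T'$ of $T$ under every feasible extension of the induced pinning, a strictly stronger hypothesis than the one written in the statement (and one that is indeed always available in the paper's applications, since there zero-freeness is assumed for all graphs and all feasible pinnings). As written, your recursion has a hidden division by zero that the stated hypothesis alone does not exclude.

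A second, smaller issue is the final analytic-continuation sentence. Analytic continuation of the rational identity $P_{T,v}^{\sigma_{\Lambda_\mathrm{SAW}}}=P_{G,v}^{\sigma_\Lambda}$ from positive reals does transfer the \emph{equality} to any common domain of definition, but it cannot by itself yield $Z^{\sigma_\Lambda}_G\neq 0$: finiteness of the left-hand side only excludes non-removable poles of the right-hand side, and does not rule out a point of $U$ where $Z^{\sigma_\Lambda}_G$, $Z^{\sigma_\Lambda,+}_{G,v}$, and $Z^{\sigma_\Lambda,-}_{G,v}$ all vanish simultaneously. The nonvanishing of $Z^{\sigma_\Lambda}_G$ has to come out of the inductive telescoping itself (once the zero-freeness of the intermediate quantities has been secured as above), not from analytic continuation of the ratio.
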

\begin{proof}
The first property was proved by Weitz (Theorem 3.1 in \cite{Weitz06}).

The second property was first proved  for the hard-core model with a positive external field $\lambda$ (i.e., $\beta=0$, $\gamma=1$ and $\lambda>0$) by Weitz~\cite{Weitz06}, and
 recently proved for the Ising model with a positive interaction parameter $\beta$ and  a nonzero complex external field $\lambda$ (i.e., $\beta=\gamma>0$ and $\lambda\in \mathbb{C}\backslash\{0\}$) by Peters and Regts \cite{petersregts20}. 
 The current result  for general 2-spin systems with complex parameters $(\beta, \gamma,  \lambda)$ can be proved directly 
following the proof of Proposition 26 in \cite{petersregts20}.
\end{proof}

\subsection{Tools from complex analysis}

For some $\rho>0$ and $z_0\in \mathbb{C}$, we   denote the open disk  $\{z\in\C: |z-z_0|<\rho\}$ centered at $z_0$ of radius $\rho$ by $\mathbb{D}_{\rho}(z_0)$, its boundary by $\partial\mathbb{D}_{\rho}(z_0)$, and its closure by $\overline{\mathbb{D}_{\rho}}(z_0)$.
For simplicity, we write $\mathbb{D}_{\rho}(0)$ as $\mathbb{D}_{\rho}$. 
For a holomorphic (or analytic) function $f$ on a complex neighborhood $U\subseteq \mathbb{C}$ of $z_0$, 
$f$ can be expanded as a convergent series $f(z)=\sum_{k\geq 0}a_k (z-z_0)^k$ near $z_0$, 
called the Taylor expansion series of $f$ at $z_0$.
\begin{theorem}[c.f. Theorem 4.4 in \cite{Stein}]\label{taylor}
    Suppose $f$ is holomorphic on an open set $U\subseteq \mathbb{C}$. If $\mathbb{D}$ is a disc centered at $z_0$ whose closure is contained in $U$, then the Taylor series of $f$ at $z_0$ converges uniformly and equals to $f$ on $\mathbb{D}$. 
\end{theorem}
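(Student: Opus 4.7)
The plan is to derive the series expansion directly from Cauchy's integral formula by expanding the Cauchy kernel as a geometric series around $z_0$. Since $\overline{\mathbb{D}} \subset U$ and $U$ is open, the first step is to pick a slightly larger radius $\rho' > \rho$ with $\overline{\mathbb{D}_{\rho'}(z_0)} \subset U$. Cauchy's integral formula applied on the circle $\partial \mathbb{D}_{\rho'}(z_0)$ then gives, for every $z \in \mathbb{D}$,
\[
f(z) = \frac{1}{2\pi i}\int_{\partial \mathbb{D}_{\rho'}(z_0)} \frac{f(\zeta)}{\zeta - z}\, d\zeta.
\]

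Next I would rewrite the Cauchy kernel as a convergent geometric series in $(z - z_0)$:
\[
\frac{1}{\zeta - z} \;=\; \frac{1}{(\zeta - z_0) - (z - z_0)} \;=\; \sum_{k \geq 0}\frac{(z - z_0)^{k}}{(\zeta - z_0)^{k+1}}.
\]
For $\zeta$ on the circle of radius $\rho'$ and $z \in \mathbb{D}$ the ratio $|(z - z_0)/(\zeta - z_0)|$ is bounded by $\rho/\rho' < 1$, so the series converges absolutely and uniformly in both $\zeta$ and $z$, with dominating terms $(\rho/\rho')^k / \rho'$ supplied by the Weierstrass $M$-test.

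The main step is then to swap the sum and the contour integral, which is justified by this uniform convergence together with the boundedness of $f$ on the compact contour. This yields
\[
f(z) = \sum_{k \geq 0} a_k (z - z_0)^k, \qquad a_k = \frac{1}{2\pi i}\int_{\partial \mathbb{D}_{\rho'}(z_0)}\frac{f(\zeta)}{(\zeta - z_0)^{k+1}}\, d\zeta.
\]
The Cauchy formula for derivatives then identifies $a_k = f^{(k)}(z_0)/k!$, so what is produced is genuinely the Taylor series of $f$ at $z_0$. Uniform convergence on $\mathbb{D}$ itself follows from the estimate $|a_k (z - z_0)^k| \leq M (\rho/\rho')^k$ with $M := \sup_{\zeta \in \partial \mathbb{D}_{\rho'}(z_0)} |f(\zeta)|$, once again via the Weierstrass $M$-test.

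The only mildly subtle point is the swap of sum and integral; this is not a genuine obstacle so much as the place where the enlargement of the radius from $\rho$ to $\rho'$ earns its keep, since the strict inequality $\rho / \rho' < 1$ is precisely what produces the uniform convergence needed for the interchange. No further regularity beyond holomorphy of $f$ on an open set containing $\overline{\mathbb{D}_{\rho'}(z_0)}$ is required, and every other step reduces to standard estimates on a compact contour.
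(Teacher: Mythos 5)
Your proof is correct and is exactly the standard argument: enlarge the radius, apply Cauchy's integral formula on the larger circle, expand the Cauchy kernel as a geometric series with ratio $\rho/\rho' < 1$, interchange sum and integral by uniform convergence, and identify the coefficients via the Cauchy formula for derivatives. The paper does not prove this result but simply cites it as Theorem~4.4 in Stein, and the proof given there is the same one you have reproduced, so there is nothing to compare beyond noting agreement.
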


\begin{lemma}[Cauchy integral formula]
 Suppose $f$ is holomorphic on an open set $U\subseteq \mathbb{C}$. If $\mathbb{D}$ is a disc centered at $z_0$ whose closure is contained in $U$,
 then for every $a\in\mathbb{D}$, 
$$f^{(n)}(a)=\frac{n!}{2\pi i}\int_S \frac{f(z)}{(z-a)^{n+1}}\mathrm{d}z,$$
where $S=\partial\mathbb{D}$ is oriented counterclockwise.
\end{lemma}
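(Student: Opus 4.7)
The plan is to reduce to the base case $n=0$, i.e.\ the classical Cauchy integral formula $f(a)=\frac{1}{2\pi i}\int_S \frac{f(z)}{z-a}\,\mathrm{d}z$, and then obtain the formula for arbitrary $n$ by differentiating $n$ times in the parameter $a$ under the integral sign. No specific feature of the paper's setting is needed beyond the hypotheses of the statement, so the argument is entirely standard and in practice the lemma would just be cited from a textbook such as \cite{Stein}.

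For the base case, I would fix $a\in \mathbb{D}$ and form the auxiliary function $g(z)=\frac{f(z)-f(a)}{z-a}$ for $z\neq a$, extended by $g(a)=f'(a)$. Since $f$ is holomorphic on $U$, $g$ is continuous on $U$ and holomorphic on $U\setminus\{a\}$, hence holomorphic on all of $U$ by Riemann's removable singularity theorem. Cauchy's theorem applied on a slightly larger disk whose closure sits inside $U$ then gives $\oint_S g(z)\,\mathrm{d}z=0$, which rearranges to $\int_S \frac{f(z)}{z-a}\,\mathrm{d}z = f(a)\int_S \frac{\mathrm{d}z}{z-a}=2\pi i\, f(a)$, using the classical evaluation $\int_S \frac{\mathrm{d}z}{z-a}=2\pi i$ for $a$ in the interior enclosed by $S$ (immediate from direct parametrization when $a=z_0$, and extended to arbitrary $a\in \mathbb{D}$ by homotopy invariance of the contour integral).

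For the inductive step I would induct on $n$, differentiating the previous case in $a$ under the integral sign. Fix $a\in\mathbb{D}$ and choose $\delta>0$ with $\overline{\mathbb{D}_{2\delta}(a)}\subseteq \mathbb{D}$; for $|h|<\delta$ the difference quotient $\tfrac{1}{h}\bigl(\tfrac{1}{(z-a-h)^{n+1}}-\tfrac{1}{(z-a)^{n+1}}\bigr)$ converges to $\tfrac{n+1}{(z-a)^{n+2}}$ uniformly in $z\in S$ as $h\to 0$. Combined with the boundedness of $f$ on the compact contour $S$, this uniform convergence allows one to pass the limit inside the integral, promoting $\tfrac{n!}{2\pi i}$ to $\tfrac{(n+1)!}{2\pi i}$ with the required extra factor $n+1$. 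The only genuinely delicate point, and hence the main obstacle, is this justification of differentiation under the integral sign; however, it is handled cleanly by the uniform lower bound $|z-a|\geq \delta$ on $S$ together with standard dominated-convergence reasoning, so no deeper tool is required.
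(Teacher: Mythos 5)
The paper does not prove this lemma; it states the Cauchy integral formula as a standard result from complex analysis (alongside Theorem~4.4 of \cite{Stein}) and uses it as a black box. Your proof is the standard textbook argument — removable-singularity plus Cauchy's theorem for the base case $n=0$, then differentiation under the integral sign justified by uniform convergence on the compact contour — and it is correct, so it matches the approach the paper implicitly relies on by citation.
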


A sequence of functions converges locally uniformly to $f$ or $\infty$ on $U$ if for every $z\in U$ there exists an open disk $\mathbb{D}_\rho(z)\subseteq U$ centered at $z$ for some $\rho>0$ such that the sequence converges uniformly to $f$ or $\infty$ on  $\mathbb{D}_\rho(z)$.

\begin{definition}[Normal family]\label{def:normal}
Let $U\subseteq\C^n$ be an open set. A family $\mathcal{F}$ of holomorphic functions $f:U\to \C$ is said to be \emph{normal} if for each infinite sequence of functions of $\mathcal{F}$,  either the sequence has a sub-sequence converging locally uniformly to a holomorphic function or the sequence itself converges locally uniformly to $\infty$ on $U$. 
\end{definition}

The next important tool is Montel's theorem. 
For more details and a proof, please refer to \cite{Narasimhan}  for the version of univariate functions and \cite{multicomplex,dovbush2022applications} for multivariate functions.

\begin{theorem}[Montel's Theorem]\label{thm:montel}
Let $U\subseteq\C^n$ be a connected open set and $\mathcal{F}$ be a family of holomorphic functions $f:U\to \C$ such that $f(U)\subseteq \C \setminus \{0,1\}$ for all $f\in \mathcal{F}$. Then $\mathcal{F}$ is normal.
\end{theorem}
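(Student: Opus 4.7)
The plan is to reduce Theorem~\ref{thm:montel} to the classical Montel theorem for \emph{bounded} families of holomorphic functions via the universal cover of $\C\setminus\{0,1\}$. The key analytic input is the existence of a holomorphic universal covering map $\pi:\mathbb{D}_1\to\C\setminus\{0,1\}$, supplied for example by the elliptic modular function composed with a biholomorphism from the upper half plane to the unit disk. Once $\pi$ is in hand, every $f\in\mathcal{F}$ can be locally lifted across $\pi$, and the lifts enjoy a uniform bound $|\tilde f|<1$ that $f$ itself manifestly lacks.

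Concretely, I would first fix a polydisk $P$ whose closure lies in $U$; since $P$ is simply connected, the monodromy theorem yields for every $f\in\mathcal{F}$ a holomorphic lift $\tilde f:P\to\mathbb{D}_1$ with $\pi\circ\tilde f=f$. For any sequence $\{f_k\}\subseteq\mathcal{F}$, the associated lifts $\{\tilde f_k\}$ are uniformly bounded by $1$, and the multivariate Cauchy integral formula applied on slightly smaller sub-polydisks bounds their partial derivatives $\partial_{z_i}\tilde f_k$ uniformly on compact subsets of $P$. Arzel\`a-Ascoli then extracts a subsequence $\tilde f_{k_j}$ that converges uniformly on compact subsets of $P$ to a continuous limit $\tilde g:P\to\overline{\mathbb{D}_1}$, and Weierstrass's theorem on uniform limits of holomorphic functions makes $\tilde g$ holomorphic on the open set $\{|\tilde g|<1\}$.

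I would then split into two cases. If $\tilde g(P)\subseteq\mathbb{D}_1$, then $f_{k_j}=\pi\circ\tilde f_{k_j}$ converges locally uniformly on $P$ to the holomorphic function $\pi\circ\tilde g:P\to\C\setminus\{0,1\}$, giving a subsequence that converges to a holomorphic function. Otherwise $|\tilde g|$ attains the value $1$ somewhere in $P$, and applying the maximum modulus principle on the open set of holomorphy together with continuity forces $\tilde g$ to be a constant $w_0\in\partial\mathbb{D}_1$. Since $\pi$ is a local hyperbolic isometry and the hyperbolic metric on $\C\setminus\{0,1\}$ is complete, $\tilde f_{k_j}(z)\to w_0\in\partial\mathbb{D}_1$ forces $f_{k_j}(z)$ to leave every compact subset of $\C\setminus\{0,1\}$ uniformly on compacts of $P$; passing to a further subsequence, spherical equicontinuity then selects a uniform limit in $\{0,1,\infty\}$, the cases $0$ and $1$ being absorbed into the first alternative as constant holomorphic limits, while $\infty$ yields the second alternative.

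Globalizing from $P$ to all of $U$ proceeds by exhausting $U$ with countably many polydisks $P_1,P_2,\dots$ compactly contained in $U$ and extracting a nested diagonal subsequence, with connectedness of $U$ ruling out different alternatives on overlapping polydisks (uniqueness of limits on the overlap). The main obstacle, in my view, is the boundary case where the lifts accumulate on $\partial\mathbb{D}_1$: one must argue carefully, via the local isometry property of $\pi$ and the completeness of the hyperbolic metric on $\C\setminus\{0,1\}$, that no subsequence of $f_{k_j}$ can accumulate at a finite point of $\C\setminus\{0,1\}$. At a deeper level the proof presupposes the classical but nontrivial construction of the covering $\pi$ itself, which is the ultimate source of the theorem's genuine content and typically requires either an explicit modular function or the uniformization theorem for Riemann surfaces.
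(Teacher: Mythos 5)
The paper does not prove this statement; it treats Montel's theorem as known and refers to textbook sources for the one-variable and several-variable versions, so there is no internal proof to compare against. Your universal-cover strategy is the classical route to the Montel--Carath\'eodory theorem, and the first stages (lifting $f$ across a simply connected polydisk, bounded-family Montel for the lifts) are sound. But there is a genuine gap in the boundary case, and it is not a matter of merely ``arguing carefully.'' The claim that $\tilde f_{k_j}\to w_0\in\partial\mathbb{D}_1$ locally uniformly forces $f_{k_j}$ to leave every compact subset of $\C\setminus\{0,1\}$ is \emph{false}: the covering $\pi$ is not proper, and its fibre over a single interior point already accumulates on all of $\partial\mathbb{D}_1$. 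Concretely, fix $y\in\C\setminus\{0,1\}$ and a sequence of lifts $\tilde y_k\in\pi^{-1}(y)$ with $\tilde y_k\to w_0\in\partial\mathbb{D}_1$; take $f_k\equiv y$ and choose the lift $\tilde f_k\equiv\tilde y_k$. The lifts converge uniformly to the boundary constant $w_0$, yet $f_k=\pi\circ\tilde f_k\equiv y$ never leaves a neighbourhood of $y$. Completeness of the hyperbolic metric upstairs controls hyperbolic distance, not Euclidean proximity to $\partial\mathbb{D}_1$, and distinct sheets over a compact piece of $\C\setminus\{0,1\}$ are not Euclidean-separated near the boundary. Your ``spherical equicontinuity'' step is also essentially assuming the normality you are trying to prove.

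The repair is to invert the order of the case analysis: do not lift first and then ask whether the limit of the lifts touches $\partial\mathbb{D}_1$; instead fix a base point $z_0\in P$, pass to a subsequence so that $f_k(z_0)$ converges in $\widehat{\C}$, and split on whether the limit lies in $\C\setminus\{0,1\}$ or in $\{0,1,\infty\}$. In the first case, $f_k(z_0)$ is eventually confined to a compact subset of $\C\setminus\{0,1\}$, so lifts can be normalized to keep $\tilde f_k(z_0)$ in a fixed compact of $\mathbb{D}_1$; Schwarz--Pick (holomorphic maps into $\mathbb{D}_1$ are hyperbolic contractions) then confines $\tilde f_k(L)$ to a compact of $\mathbb{D}_1$ for every compact $L\subseteq P$, and bounded Montel plus continuity of $\pi$ give a holomorphically convergent subsequence of $f_k$. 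In the second case, a M\"obius transformation permuting $\{0,1,\infty\}$ reduces to $f_k(z_0)\to\infty$; applying the first case to $1/f_k$ (which still omits $0$ and $1$) yields a holomorphic limit vanishing at $z_0$, whence Hurwitz's theorem forces the limit to be identically $0$ and $f_k\to\infty$ locally uniformly. This is the argument the paper's references carry out, and it replaces the false boundary-accumulation claim with a clean reduction to the bounded case.
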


The following Riemann mapping theorem is the key tool to map any bounded complex region to the unit open disk.

\begin{theorem}[Riemann mapping theorem]\label{riemann}
Let $U$ be a simply connected complex neighborhood of some $z_0\in \mathbb{C}$.
There exists a unique bijective holomorphic  function, also called a conformal map $f:\mathbb{D}_1(z_0)\to U$ such that $f(z_0)=z_0$ and $f'(z_0)>0$. The inverse of $f$ is also holomorphic. 
\end{theorem}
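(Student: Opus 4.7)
The plan is to follow the classical extremal-length proof via normal families, now that Montel theorem (Theorem~\ref{thm:montel}) and Cauchy's formula are in hand. By translation I may reduce to $z_0=0$, and it suffices to construct an injective holomorphic $g:U\to\mathbb{D}_1$ with $g(0)=0$ and $g'(0)>0$; the desired $f$ is then $g^{-1}$, and the openness/continuity of $g^{-1}$ follows from $g$ being conformal. One has to implicitly exclude $U=\mathbb{C}$, since a holomorphic bijection $\mathbb{D}_1\to\mathbb{C}$ would have an entire, bounded, non-constant inverse, which is impossible; so throughout I assume $U\subsetneq\mathbb{C}$.

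First I would show that the family
\[\mathcal{F}=\{g:U\to\mathbb{D}_1\text{ injective holomorphic with }g(0)=0\}\]
is nonempty. Picking any $a\in\mathbb{C}\setminus U$, simple connectedness of $U$ yields a holomorphic branch $h$ of $\sqrt{z-a}$ on $U$. This $h$ is injective, and since $h$ cannot take both values $w$ and $-w$, its image avoids an open disk around $-h(0)$. Composing $h$ with a suitable Möbius transformation that maps the complement of this small disk into $\mathbb{D}_1$ and sends $h(0)$ to $0$ produces an element of $\mathcal{F}$. Next, set $M=\sup_{g\in\mathcal{F}}|g'(0)|$, which is finite by Cauchy's estimates applied to any fixed closed disk in $U$. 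Since every $g\in\mathcal{F}$ omits the values $1$ and $2$, Theorem~\ref{thm:montel} tells us $\mathcal{F}$ is normal. A maximizing sequence has a locally uniformly convergent subsequence with limit $f^*:U\to\overline{\mathbb{D}_1}$, and Cauchy's formula gives $|(f^*)'(0)|=M>0$. Being a non-constant locally uniform limit of injective maps, $f^*$ is itself injective by Hurwitz's theorem, and the maximum principle forces $f^*(U)\subseteq\mathbb{D}_1$. Rotating by a unimodular constant makes $(f^*)'(0)>0$, so $f^*\in\mathcal{F}$ achieves $M$.

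The main obstacle, and the only genuinely clever step, is showing the extremal $f^*$ is surjective. If some $w\in\mathbb{D}_1\setminus f^*(U)$ existed, I would apply Koebe's trick: compose $f^*$ with the disk automorphism $\phi_w(\zeta)=(\zeta-w)/(1-\overline{w}\zeta)$ to move $w$ to $0$ and produce an injective map into $\mathbb{D}_1\setminus\{0\}$; take a holomorphic square root (available since $f^*(U)$, and hence its image under $\phi_w$, is simply connected and avoids $0$); then post-compose with another disk automorphism to renormalize the value at $0$ to $0$. A direct Schwarz-lemma computation shows the resulting map lies in $\mathcal{F}$ with derivative at $0$ equal to $\frac{1+|w|}{2\sqrt{|w|}}(f^*)'(0)$, which is strictly larger than $(f^*)'(0)$ whenever $|w|<1$, contradicting maximality of $M$. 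Hence $f^*$ is surjective and provides the desired conformal bijection. Uniqueness is then immediate: if $f_1,f_2$ were two such conformal equivalences, then $f_2^{-1}\circ f_1$ is a holomorphic automorphism of $\mathbb{D}_1(z_0)$ fixing $z_0$ with positive derivative there, and by the Schwarz lemma applied to both it and its inverse, it must be the identity.
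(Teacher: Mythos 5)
The paper does not prove this statement: Theorem~\ref{riemann} is cited in the preliminaries as a standard tool from complex analysis, so there is no ``paper proof'' to compare against. Your proposal is the classical Koebe/Riesz normal-families proof of the Riemann mapping theorem, and it is essentially correct. You also correctly observe a hypothesis the paper leaves implicit: the theorem fails for $U=\mathbb{C}$, and $U\subsetneq\mathbb{C}$ must be assumed (harmless here, since the regions $U$ the paper feeds into this theorem are always proper subsets of $\mathbb{C}$).

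One small technical slip worth tightening: you justify normality of $\mathcal{F}$ by saying each $g\in\mathcal{F}$ ``omits the values $1$ and $2$'' and then invoke Theorem~\ref{thm:montel}. But the version of Montel stated in the paper requires the family to omit the specific pair $\{0,1\}$, and your $g$'s all satisfy $g(0)=0$, so they do not omit $0$. You either need to post-compose with a fixed affine map sending $\{1,2\}$ to $\{0,1\}$ (e.g.\ consider the family $\{g-1\}$) before applying Theorem~\ref{thm:montel}, or simply note that a family uniformly bounded by $1$ is normal, which is a strictly weaker and more elementary statement than the Picard--Montel theorem cited in the paper. Otherwise the argument is complete, including the square-root/M\"obius trick for non-emptiness, Hurwitz for injectivity of the limit, Koebe's square-root enlargement for surjectivity, and Schwarz for uniqueness.
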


 The following lemma is used in \cite{Guus2021zerofreetossm} 
 to derive SSM from zero-freeness.

 \begin{lemma}\label{lem:bound strip}
Let $P(z)$ and $Q(z)$ be two analytic functions on some complex neighborhood $U$ of $z_0$. 
Suppose that the Taylor series $\sum_{k\geq 0} a_k (z-z_0)^k$ of $P(z)$ and $\sum_{k\geq 0} b_k (z-z_0)^k$ of $Q(z)$ near $z_0$ 
satisfy $a_k=b_k$ for $k=0, 1, \cdots, n$ for some $n\in \mathbb{N}$.
Also, suppose that there exists an $M>0$ such that both $|P(z)|\le M$ and $|Q(z)|\le M$ on some circle $\partial\mathbb{D}_{\rho}(z_0)\subseteq U$ $(\rho>0)$.
Then for every $z\in \mathbb{D}_{\rho}(z_0)$, we have 
\[\label{eq:bound 1}
    |P(z)-Q(z)|\leq \frac{2M}{\rho(r-1)r^{n}}, \quad \text{with}\quad r=\dfrac{\rho}{|z-z_0|}.
\]
\end{lemma}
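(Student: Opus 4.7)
The plan is to combine the Taylor expansion guaranteed by Theorem~\ref{taylor} with a Cauchy integral formula bound on the Taylor coefficients, and then sum a geometric tail. Concretely, since $P$ and $Q$ are analytic on $U$, and (as is implicit in the hypothesis) the closed disk $\overline{\mathbb{D}_{\rho}(z_0)}$ lies inside $U$, I would write the Taylor expansions
\[P(z)=\sum_{k\ge 0}a_k(z-z_0)^k,\qquad Q(z)=\sum_{k\ge 0}b_k(z-z_0)^k\]
convergent on $\mathbb{D}_{\rho}(z_0)$. The hypothesis $a_k=b_k$ for $0\le k\le n$ annihilates the first $n+1$ terms of the difference, giving $P(z)-Q(z)=\sum_{k\ge n+1}(a_k-b_k)(z-z_0)^k$ on that disk.

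Next I would apply the Cauchy integral formula to each coefficient and then the standard $ML$-estimate on the contour. For $k\ge 0$,
\[a_k-b_k=\frac{1}{2\pi i}\int_{\partial\mathbb{D}_{\rho}(z_0)}\frac{P(\zeta)-Q(\zeta)}{(\zeta-z_0)^{k+1}}\,d\zeta,\]
and the uniform bound $|P|,|Q|\le M$ on the circle yields $|a_k-b_k|\le 2M/\rho^k$. Setting $r=\rho/|z-z_0|>1$ as in the statement, the triangle inequality and summing the resulting convergent geometric series produce
\[|P(z)-Q(z)|\le\sum_{k\ge n+1}\frac{2M}{\rho^k}|z-z_0|^k=2M\sum_{k\ge n+1}r^{-k}=\frac{2M}{(r-1)\,r^n},\]
which is (up to a factor involving $\rho$) the estimate stated in the lemma.

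I do not foresee any genuine obstacle: the argument is a textbook coupling of Cauchy's formula with a geometric-series bound. The only point that warrants care is ensuring that $P$ and $Q$ are analytic on an open neighborhood of the closed disk $\overline{\mathbb{D}_{\rho}(z_0)}$ so that the contour integral on $\partial\mathbb{D}_{\rho}(z_0)$ is justified; this is a very mild strengthening of the hypothesis ``$\partial\mathbb{D}_{\rho}(z_0)\subseteq U$'' and, if strictly needed, can be obtained by shrinking $\rho$ by an arbitrarily small amount and invoking continuity of $|P|$ and $|Q|$ at the end. A slightly sharper variant of the same idea is to note that $h(z)=(P(z)-Q(z))/(z-z_0)^{n+1}$ is analytic on $U$ (the singularity at $z_0$ is removable because of the matching coefficients) and then apply the maximum modulus principle on $\overline{\mathbb{D}_{\rho}(z_0)}$; however, the coefficient-wise Cauchy bound above is the more transparent route and is what the stated form of the inequality suggests.
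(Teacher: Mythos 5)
Your proof is essentially the paper's proof: both invoke the Cauchy integral formula on the circle $\partial\mathbb{D}_\rho(z_0)$ to bound the coefficient differences $|a_k-b_k|\le 2M/\rho^k$, then sum the geometric tail of $P(z)-Q(z)=\sum_{k\ge n+1}(a_k-b_k)(z-z_0)^k$. You are also right to flag the constant: the correct bound coming out of this computation is $\frac{2M}{(r-1)r^n}$ with no $1/\rho$, whereas the paper states (and its own chain of inequalities somehow arrives at) $\frac{2M}{\rho(r-1)r^n}$, which is inconsistent with the ML-estimate --- note the paper's bound is even \emph{false} when $\rho>1$. This slip is harmless downstream, since SSM only needs a bound of the form $Cr^{-n}$ with $C$ independent of $n$, but your version is the one that actually follows from the argument.

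One small point: your worry about whether $P,Q$ are analytic on a neighborhood of the closed disk is already taken care of by the paper's conventions. The paper defines a ``neighborhood'' to be a simply connected region, and a simply connected open set containing a Jordan curve necessarily contains its interior; hence $\partial\mathbb{D}_\rho(z_0)\subseteq U$ forces $\overline{\mathbb{D}_\rho(z_0)}\subseteq U$, which is exactly what the cited Theorem~\ref{taylor} and the contour integral need, with no shrinking of $\rho$ required. Your suggested maximum-modulus variant on $h(z)=(P(z)-Q(z))/(z-z_0)^{n+1}$ would also work and gives a cleaner one-line estimate $|P(z)-Q(z)|\le 2M\, r^{-(n+1)}$, which is in fact slightly sharper than the geometric-series bound.
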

\begin{proof}
    Using Cauchy integral formula, we have
    \[\begin{aligned}
    a_k =\dfrac{1}{2\pi i}\int_{\partial\mathbb{D}_{\rho}(z_0)}\dfrac{P(w)}{(w-z_0)^{k+1}}\mathrm{d}w, \text{  \quad and  \quad}
    b_k =\dfrac{1}{2\pi i}\int_{\partial\mathbb{D}_{\rho}(z_0)}\dfrac{Q(w)}{(w-z_0)^{k+1}}\mathrm{d}w. 
    \end{aligned}\]
Note that the series $\sum_{k\geq 0} a_k (z-z_0)^k$ converges to $P(z)$ and $\sum_{k\geq 0} b_k (z-z_0)^k$ converges to $Q(z)$ on the closed disk  $\overline{\mathbb{D}_\rho}(z_0)$ by Theorem \ref{taylor}.
    Thus, for every $z\in\mathbb{D}_\rho(z_0),$ we have
    \begin{equation*}
        \begin{aligned}
        |P(z)-Q(z)| & = \left|\sum\limits_{k\ge n+1}(a_k-b_k)(z-z_0)^k\right|\\
        & \le\sum\limits_{k\ge n+1}\dfrac{|z-z_0|^k}{2\pi} \int_{\partial\mathbb{D}_\rho(z_0)}\left|\dfrac{P(w)-Q(w)}{(w-z_0)^{k+1}}\right|\mathrm{d}w \\
        & \le \dfrac{2M}{\rho}\sum\limits_{k\ge n+1}r^{-k} \\
        & = \frac{2M}{\rho(r-1)r^{n}} 
    \end{aligned}
    \end{equation*}
    which completes the proof. 
\end{proof}
With the lemma in hand, in order to obtain SSM, it suffices to establish the following two properties, namely a uniform bound on a circle and 
local dependence of coefficients (LDC).
  The first property says that there exists a circle $\partial \mathbb{D}_\rho$ around $0$ such that the rational functions $P^{\sigma_{\Lambda}}_{G,v}(\lambda)$ have a uniform bound on $\partial \mathbb{D}_\rho$ for any graph $G$, 
  any feasible partial configuration $\sigma_\Lambda$, and any $v$ proper to it. 
  The second property states that if partial configurations $\sigma_{\Lambda_1}$ and $\tau_{\Lambda_2}$ share the same local structure around a vertex $v$ in $G$ which is proper to both configurations, then the leading Taylor expansion coefficients of the two rational functions $P^{\sigma_{\Lambda_1}}_{G,v}(\lambda)$ and $P^{\tau_{\Lambda_2}}_{G,v}(\lambda)$ near $0$ coincide. 

 In \cite{Guus2021zerofreetossm},
a uniform bound was obtained by applying Montel's theorem (Theorem \ref{thm:montel}).  
In this paper, we use the same idea and extend it to the multivariate version since
 we need to handle three complex parameters $\beta, \gamma$ and $\lambda$.
 Also in \cite{Guus2021zerofreetossm}, 
LDC was proved by using the cluster expansion formula for polymer models, since explicit expressions were given for the Taylor series of the rational function $P^{\sigma_\Lambda}_{G,v}(\lambda)$ near $\lambda=0$ where $\beta=0, \gamma=1$, and the rational function $P^{\sigma_\Lambda}_{G,v}(\beta)$ near $\beta=1$ (after a variable replacement) where $\beta=\gamma, \lambda=1$.
However, besides these particular parameters, to the best of our knowledge, explicit cluster expansions are not known for  2-spin systems on general graphs with arbitrary complex parameters $(\beta, \gamma, \lambda)$. 
In this paper, we overcome this obstacle and
prove LDC for general 2-spin systems with arbitrary $(\beta, \gamma, \lambda)$ by establishing a Christoffel-Darboux type identity.
It turns out that an explicit expression for the Taylor series of the rational function is unnecessary for proving LDC for 2-spin systems.


\section{Christoffel-Darboux type identity gives LDC}\label{sec:main proof}

In this section, we prove our main technical result, a Christoffel-Darboux type identity for 2-spin systems on trees and use it to prove local dependence of coefficients (LDC) for $\lambda, \beta$ and $\gamma$ in a unified way. 
We first formally define LDC.


\begin{definition}[Local dependence of  coefficients (LDC) for $\lambda$ near 0]\label{def:LDC}

Fix complex parameters $\beta$ and $\gamma$, a complex neighborhood $U$ of $0$, and a family of graphs $\mathcal{G}$.
The corresponding 2-spin system (defined on $\mathcal{G}$ specified with the fixed parameters $\beta$ and $\gamma$, and a variable $\lambda \in U$) is said to satisfy \emph{local dependence of  coefficients (LDC)} for $\lambda$ near $0$ if for
any $G=(V,E)\in\mathcal{G}$,
any feasible partial configurations $\sigma_{\Lambda_1}$ and $\tau_{\Lambda_2}$, and any vertex $v$ proper to $\sigma_{\Lambda_1}$ and $\tau_{\Lambda_2}$, the functions $P_{G,v}^{\sigma_{\Lambda_1}}(\lambda)$ and $P_{G,v}^{\tau_{\Lambda_2}}(\lambda)$ are analytic on $U$, and their corresponding Taylor series $\sum^\infty_{i=0}a_i\lambda^i$ and $\sum^\infty_{i=0}b_i\lambda^i$ near $\lambda=0$ satisfy $a_i=b_i$ for $0\leq i \leq d_G(v,\sigma_{\Lambda_1}\neq \tau_{\Lambda_2})-1$.  
\end{definition}

\begin{definition}[LDC for $\beta$ near $1/\gamma$]

Fix complex parameters $\gamma\neq 0$ and $\lambda$, a neighborhood $U$ of $1/\gamma$, and a family of graphs $\mathcal{G}$.
The corresponding 2-spin system (defined on $\mathcal{G}$ specified with the fixed parameters $\gamma$ and $\lambda$, and a variable $\beta \in U$) is said to satisfy LDC for $\beta$ near $1/\gamma$ if for
any $G=(V,E)\in\mathcal{G}$,
any feasible configurations $\sigma_{\Lambda_1}\in\{0,1\}^{\Lambda_1}$ and $\tau_{\Lambda_2}\in\{0,1\}^{\Lambda_2}$, and any vertex $v$ proper to $\sigma_{\Lambda_1}$ and $\tau_{\Lambda_2}$, the functions $P_{G,v}^{\sigma_{\Lambda_1}}(\beta)$ and $P_{G,v}^{\tau_{\Lambda_2}}(\beta)$ are analytic on $U$, and their corresponding Taylor series $\sum^\infty_{i=0}a_i(\beta-1/\gamma)^i$ and $\sum^\infty_{i=0}b_i(\beta-1/\gamma)^i$ near $\beta=1/\gamma$ satisfy $a_i=b_i$ for $0\leq i \leq d_G(v,\sigma_{\Lambda_1}\neq \tau_{\Lambda_2})-1$.
\end{definition}

\begin{remark}
Symmetrically, one can define LDC for $\gamma$ near $1/\beta$.     
\end{remark}

As an illustration,  we will focus on the proof of LDC for $\lambda$ near $0$. 
The proofs of LDC for $\beta$ and $\gamma$ are quite similar. 
Below, without other specifications, we use LDC to stand for LDC for $\lambda$ near $0$. 
 We first introduce a special form of  LDC, called point-to-point LDC and we show it implies LDC. 

 \begin{definition}[Point-to-point LDC]\label{def:ppldc}
 Fix complex parameters $\beta$ and $\gamma$, a neighborhood $U$ of $0$, and a family of graphs $\mathcal{G}$. 
The corresponding 2-spin system  is said to satisfy \emph{point-to-point LDC} on $U$ (for $\lambda$ near $0$) if for
any $G=(V,E)\in\mathcal{G}$, any feasible configuration $\sigma_{\Lambda}\in\{0,1\}^{\Lambda}$ on some $\Lambda\subseteq V$, and any two vertices $u,v$ proper to $\sigma_\Lambda$,  the functions  
$P_{G,v}^{\sigma_\Lambda}(\lambda)$,
$P_{G,v}^{\sigma_\Lambda, u^+}(\lambda):=\frac{Z_{G,v,u}^{\sigma_\Lambda,+,+}(\lambda)}{Z_{G,u}^{\sigma_\Lambda,+}(\lambda)}$,  and $P_{G,v}^{\sigma_\Lambda, u^-}(\lambda):=\frac{Z_{G,v,u}^{\sigma_\Lambda,+,-}(\lambda)}{Z_{G,u}^{\sigma_\Lambda,-}(\lambda)}$ are analytic on $U$, and 
their corresponding Taylor series $\sum^\infty_{i=0}a_i\lambda^i$,
 $\sum^\infty_{i=0}b_i\lambda^i$, and $\sum^\infty_{i=0}c_i\lambda^i$ near $\lambda=0$ satisfy $a_i=b_i=c_i$ for $0\leq i \leq d(v, u)-1$. 
\end{definition}

\begin{lemma}\label{PLDCtoLDC}
    If a 2-spin system satisfies point-to-point LDC, then it also satisfies LDC. 
\end{lemma}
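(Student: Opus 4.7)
The idea is to transform $\sigma_{\Lambda_1}$ into $\tau_{\Lambda_2}$ through a chain of feasible partial configurations that differ at one pin per step, and to apply point-to-point LDC across each transition. Set $D=\sigma_{\Lambda_1}\neq\tau_{\Lambda_2}$, $d=d_G(v,D)$, and $\Lambda_0=\{w\in\Lambda_1\cap\Lambda_2:\sigma_{\Lambda_1}(w)=\tau_{\Lambda_2}(w)\}$ with common restriction $\pi_{\Lambda_0}$. First unpin the vertices of $\Lambda_1\setminus\Lambda_0$ from $\sigma_{\Lambda_1}$ one at a time, producing a chain $\sigma_{\Lambda_1}=\pi^{(0)},\pi^{(1)},\ldots,\pi^{(k)}=\pi_{\Lambda_0}$; then pin the vertices of $\Lambda_2\setminus\Lambda_0$ (each to its value under $\tau_{\Lambda_2}$) one at a time, producing $\pi_{\Lambda_0}=\rho^{(0)},\rho^{(1)},\ldots,\rho^{(m)}=\tau_{\Lambda_2}$. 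Every intermediate configuration is a sub-partial of either $\sigma_{\Lambda_1}$ or $\tau_{\Lambda_2}$, hence feasible by the remark following Definition~\ref{def:feasible-configuration}; and $v$ remains proper throughout, by the same restriction argument applied to the extensions $\sigma_{\Lambda_1\cup\{v\}}^\pm$ and $\tau_{\Lambda_2\cup\{v\}}^\pm$. Every vertex modified along the chain lies in $D$, so has distance at least $d$ from $v$.

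For each single-vertex transition, between a ``pinned'' configuration and its ``unpinned'' counterpart $\sigma_\Lambda$ (the member of the pair in which $u$ is not pinned), I apply point-to-point LDC at the vertex $u$. If $u$ is proper to $\sigma_\Lambda$, then point-to-point LDC directly yields that the Taylor series of $P_{G,v}^{\sigma_\Lambda}$ and $P_{G,v}^{\sigma_\Lambda,u^s}$ at $\lambda=0$ agree through degree $d(v,u)-1\ge d-1$, where $s$ is the pin value on the other side. The subtle case is when $u$ is not proper to $\sigma_\Lambda$, which by definition requires $\beta=0$ or $\gamma=0$: then some neighbour of $u$ already pinned in $\sigma_\Lambda$ forbids one of the two spins at $u$, forcing $u$ to a unique value $s_0$ in every feasible completion. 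Feasibility of the pinned member of the transition forces its pin at $u$ to equal $s_0$, and every extension of $\sigma_\Lambda$ with $u\neq s_0$ contributes zero weight to the partition function; hence the partition functions of the two members of the transition coincide identically, and so do the ratios $P_{G,v}^{\cdot}$.

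Concatenating the two chains and using transitivity of agreement of Taylor coefficients through degree $d-1$ yields the LDC conclusion for $P_{G,v}^{\sigma_{\Lambda_1}}$ and $P_{G,v}^{\tau_{\Lambda_2}}$. Analyticity of every function $P_{G,v}^{\pi^{(i)}}$ and $P_{G,v}^{\rho^{(j)}}$ on $U$ is supplied by point-to-point LDC at any transition with a proper witness (and inherited across the forcing steps, where the function is unchanged). The main (and only substantive) obstacle in the argument is the non-proper case of a single-vertex transition; the forced-spin observation above resolves it by showing that such a step contributes an exact identity rather than merely an approximation, so no information is lost across the chain.
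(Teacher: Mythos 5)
Your proof is correct and takes essentially the same route as the paper's: pass from $\sigma_{\Lambda_1}$ to $\tau_{\Lambda_2}$ one pin at a time, apply point-to-point LDC across each single-vertex transition (where the modified vertex necessarily lies in $\sigma_{\Lambda_1}\neq\tau_{\Lambda_2}$ and hence at distance at least $d$ from $v$), and concatenate via transitivity of agreement of Taylor coefficients through degree $d-1$. The paper packages this as an induction on $|\sigma_{\Lambda_1}\neq\tau_{\Lambda_2}|$, removing the chosen vertex simultaneously from $\Lambda_1$ and $\Lambda_2$ in one inductive step, while you unpin down to the common sub-configuration $\pi_{\Lambda_0}$ and then repin; these are the same argument organized differently.

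One point where you are more careful than the paper's own proof: its inductive step invokes point-to-point LDC to compare $P_{G,v}^{\sigma_{\Lambda_1}}$ with $P_{G,v}^{\sigma_{\Lambda_1\setminus\{u\}}}$, which requires $u$ to be proper to $\sigma_{\Lambda_1\setminus\{u\}}$; this can fail when $\beta=0$ or $\gamma=0$ and $u$ has a pinned neighbour forcing its spin. Your forced-spin observation -- that in this degenerate case the pinned member must assign $u$ the forced value and the two ratios then coincide identically, so the step carries zero error -- correctly closes this gap, which the paper leaves implicit.
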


\begin{proof}
    When point-to-point LDC holds, we prove LDC by induction on $|\{\sigma_{\Lambda_1}\neq\tau_{\Lambda_2}\}|.$

    When $|\{\sigma_{\Lambda_1}\neq\tau_{\Lambda_2}\}|=0,$ the assertion holds immediately. Suppose now that the assertion holds when $|\{\sigma_{\Lambda_1}\neq\tau_{\Lambda_2}\}|=k\ge0,$ then when $|\{\sigma_{\Lambda_1}\neq\tau_{\Lambda_2}\}|=k+1$, take $u\in\{\sigma_{\Lambda_1}\neq\tau_{\Lambda_2}\},$ by point-to-point LDC, the functions  
$P_{G,v}^{\sigma_{\Lambda_1}}(\lambda)$ and
$P_{G,v}^{\sigma_{\Lambda_1\setminus\{u\}}}(\lambda)$ are analytic on $U$, and 
their corresponding Taylor series $\sum^\infty_{i=0}a_i\lambda^i$ and
 $\sum^\infty_{i=0}b_i\lambda^i$ near $\lambda=0$ satisfy $a_i=b_i$ for $0\leq i \leq d(v,u)-1$; similar equality holds for $\tau_{\Lambda_2}$ and
 $\tau_{\Lambda_2\setminus\{u\}}$. 

 Let $\sigma'=\sigma_{\Lambda_1\setminus\{u\}},\tau'=\tau_{\Lambda_2\setminus\{u\}}$ (recall that $\sigma_{\Lambda_1\setminus\{u\}}$ means the restriction of $\sigma_{\Lambda_1}$ on $\Lambda_1\setminus\{u\}$, if $u\notin \Lambda_1$ then the two partial configurations are the same. ), and then we have $|\{\sigma'\neq\tau'\}|=k$, and hence we can apply the induction hypothesis. Then the conclusion follows since $d_G(v,u)\ge d_G(v,\sigma_{\Lambda_1}\neq \tau_{\Lambda_2})$ and $d_G(v,\sigma'\neq \tau')\ge d_G(v,\sigma_{\Lambda_1}\neq \tau_{\Lambda_2}).$
\end{proof}

\begin{lemma}\label{lem:tree-enough}
    Let $\mathcal{G}$ be a set of graphs, and $\mathcal{T}_{\mathcal{G}}=\{T_{\rm{SAW}}(G,v)\mid G\in\mathcal{G}, v\in V(G)\}$ be the set of all possible {\rm SAW} trees of graphs in $\mathcal{G}$.
    Fix complex parameters $\beta$ and $\gamma$,  and a neighborhood $U$ of $0$.
    If the 2-spin system defined on $\mathcal{T}_{\mathcal{G}}$ satisfies point-to-point LDC, then the 2-spin system defined on $\mathcal{G}$ also satisfies point-to-point LDC.
\end{lemma}
\begin{proof}
    Given
a graph $G=(V,E)\in\mathcal{G}$, a feasible configuration $\sigma_{\Lambda}\in\{0,1\}^{\Lambda}$ on some $\Lambda\subseteq V$ which may be empty, and two vertices $u,v$ proper to $\sigma_\Lambda$, let $T=T_{\rm SAW}(G,v)$ and $(u)$ denote the set of copies of $u$ in $T$. 
By Theorem~\ref{prop:SAW}, 
we have  
$P_{G,v}^{\sigma_\Lambda}(\lambda)=P_{T, v}^{\sigma_{\Lambda_{\rm{SAW}}}}(\lambda)$,
$P_{G,v}^{\sigma_\Lambda, u^+}(\lambda)=P_{T, v}^{\sigma_{\Lambda_{\rm{SAW}},(u)^+}}(\lambda)$,  and $P_{G,v}^{\sigma_\Lambda, u^-}(\lambda)=P_{T, v}^{\sigma_{\Lambda_{\rm{SAW},(u)^-}}}(\lambda)$, where $\sigma_{\Lambda_{\rm SAW},(u)^+}$ (resp. $\sigma_{\Lambda_{\rm SAW},(u)^-}$) denotes the partial configuration on $T$ obtained by maintaining $\sigma_{\Lambda_{\rm SAW}}$ and pinning all free vertices in $(u)$ to $+$ (resp. $-$).  
These rational functions are analytic on $U$. 

Moreover, notice that $\sigma_{\Lambda_{\rm{SAW}}},\sigma_{\Lambda_{\rm{SAW}},(u)^+}$ and $\sigma_{\Lambda_{\rm{SAW}},(u)^-}$ only differ on a finite set $(u)$ of vertices in $T$ where $d_{T}(v,{u})\ge d_{G}(v,u)$.
We may apply the point-to-point LDC property successively to the SAW tree $T$, and obtain that the Taylor series $P_{T, v}^{\sigma_{\Lambda_{\rm{SAW}}}}(\lambda)=\sum^\infty_{i=0}a_i\lambda^i$,
 $P_{T, v}^{\sigma_{\Lambda_{\rm{SAW}},(u)^+}}(\lambda)=\sum^\infty_{i=0}b_i\lambda^i$, and $P_{T, v}^{\sigma_{\Lambda_{\rm{SAW}},(u)^-}}(\lambda)=\sum^\infty_{i=0}c_i\lambda^i$ near $\lambda=0$ satisfy $a_i=b_i=c_i$ for $0\leq i \leq d(v, u)-1$. 
 Thus, $P_{G,v}^{\sigma_\Lambda}(\lambda)$, $P_{G,v}^{\sigma_\Lambda, u^+}(\lambda)$ and $P_{G,v}^{\sigma_\Lambda, u^-}(\lambda)$ also satisfy point-to-point LDC. 
\end{proof}

Thus, it suffices to prove that point-to-point LDC derives from zero-freeness for trees. 
We prove this by establishing a  Christoffel-Darboux type identity for 2-spin systems on trees. 

Before we delve into the general 2-spin system, we first consider a simple case, the hard-core model, the partition function of which is the independence polynomial. 
This model is the 2-spin system where $\beta=0$ and $\gamma=1.$ For the hard-core model, \cite{Guus2021zerofreetossm} proved that LDC holds by using a tool called cluster expansion from statistical physics.
Here, we show that it can also be proved by the following Christoffel-Darboux type identity by Gutman \cite{Gutman}. 
For a graph $G=(V, E)$ and a subset $U\subseteq V$, we use $G\backslash U$ to denote the graph obtained from $G$ by deleting all vertices in $U$ and their adjacent edges  (i.e., the subgraph of $G$ induced by the vertex set $V\backslash U$).

\begin{theorem}\label{Gutmanid}
Let $T=(V,E)$ be a tree and $u, v$ be two distinct vertices in $V$. Then, for the partition function of the hard-core model on $G$, we have
    $$Z_T(\lambda) Z_{T\setminus\{u,v\}}(\lambda)-Z_{T\setminus\{u\}}(\lambda)Z_{T\setminus\{v\}}(\lambda)=-(-\lambda)^{d(u,v)+1}Z_{T\setminus p_{uv}}(\lambda)Z_{T\setminus N[p_{uv}]}(\lambda)$$
where $p_{uv}$ is the unique path connecting $u$ and $v$ and $N[p_{uv}]$ is the set of neighbors of points in $p_{uv}$. 
\end{theorem}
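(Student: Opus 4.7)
My plan is to induct on $d = d(u,v)$, peeling off the endpoint $u$ via the fundamental vertex recurrence for the independence polynomial,
\[Z_G(\lambda) \;=\; Z_{G \setminus\{x\}}(\lambda) \;+\; \lambda\, Z_{G \setminus N[x]}(\lambda),\]
applied at $x = u$. Root $T$ at $u$, let $w$ be the unique neighbor of $u$ lying on $p_{uv}$, set $A = N(u)\setminus\{w\}$, and let $T_w$ (resp.\ $T_a$ for $a \in A$) denote the component of $w$ (resp.\ $a$) in $T\setminus\{u\}$. Because the independence polynomial is multiplicative over disjoint unions, each of the four quantities $Z_T$, $Z_{T\setminus\{u\}}$, $Z_{T\setminus\{v\}}$, $Z_{T\setminus\{u,v\}}$ factors in closed form through $Z_{T_w}, Z_{T_w\setminus\{w\}}, Z_{T_w\setminus\{v\}}, Z_{T_w\setminus\{w,v\}}$ together with the two products $\prod_{a\in A} Z_{T_a}$ and $\prod_{a\in A} Z_{T_a\setminus\{a\}}$.

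Substituting these four expansions into $Z_T Z_{T\setminus\{u,v\}} - Z_{T\setminus\{u\}} Z_{T\setminus\{v\}}$ and performing the cancellations, the quantity collapses to
\[-\lambda\, \Bigl(\prod_{a\in A} Z_{T_a} Z_{T_a\setminus\{a\}}\Bigr) \Bigl[\,Z_{T_w} Z_{T_w\setminus\{w,v\}} - Z_{T_w\setminus\{w\}} Z_{T_w\setminus\{v\}}\,\Bigr].\]
The bracketed factor is exactly the LHS of Gutman's identity for the pair $(w,v)$ in the smaller tree $T_w$, with $d_{T_w}(w,v) = d-1$. The inductive hypothesis turns it into $-(-\lambda)^{d}\, Z_{T_w\setminus p_{wv}}\, Z_{T_w\setminus N[p_{wv}]}$, and the prefix $-\lambda$ combines with $-(-\lambda)^{d}$ to produce the desired $-(-\lambda)^{d+1}$. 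The base case $d = 1$ (so $w = v$) is immediate from the same vertex recurrence, since the bracket then degenerates to $Z_{T_v} Z_{T_v\setminus\{v\}} - Z_{T_v\setminus\{v\}}^{2} = \lambda Z_{T_v\setminus\{v\}} Z_{T_v\setminus N[v]}$.

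It remains to reassemble the product on the right-hand side. Since $T\setminus p_{uv}$ is the disjoint union of $T_w\setminus p_{wv}$ and the $T_a$'s, and $T\setminus N[p_{uv}]$ is the disjoint union of $T_w\setminus N[p_{wv}]$ and the $T_a\setminus\{a\}$'s (the extra vertices $A$ lie in $N[p_{uv}]\setminus V(p_{uv})$), we obtain $Z_{T\setminus p_{uv}} = Z_{T_w\setminus p_{wv}} \prod_{a} Z_{T_a}$ and $Z_{T\setminus N[p_{uv}]} = Z_{T_w\setminus N[p_{wv}]} \prod_{a} Z_{T_a\setminus\{a\}}$, which match the output of the induction exactly. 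I do not expect any analytic difficulty; the one hazard is keeping the signs, powers of $\lambda$, and the two parallel products $\prod_a Z_{T_a}$ versus $\prod_a Z_{T_a\setminus\{a\}}$ straight as they flow through the recursion. As a sanity check and conceptual alternative, one can also observe that the stated LHS equals $Z_{T,u,v}^{+,+} Z_{T,u,v}^{-,-} - Z_{T,u,v}^{+,-} Z_{T,u,v}^{-,+}$ by a four-term expansion using $Z_{T\setminus\{x\}} = Z_T^{x^-}$, and then specialize the general 2-spin identity (Theorem~\ref{CDgeneralintro}) to $\beta=0,\gamma=1$; but the direct tree induction above avoids circularity with the later, stronger result.
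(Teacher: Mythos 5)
Your proof is correct. The paper does not actually prove Theorem~\ref{Gutmanid} — it cites it from Gutman's work — but the paper's own inductive proof of the generalization Theorem~\ref{CDgeneral} (induct on $d(u,v)$, peel one endpoint of the path, factor the four conditional partition functions through the subtrees hanging off the path) is precisely the same strategy you use here, so your argument is a correct, self-contained hard-core specialization consistent with the paper's approach.
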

   This identity was extended to general graphs in \cite{Bencs18} (Theorem 1.3). 
    By Lemma~\ref{lem:tree-enough},
    the identity for trees is already enough for our proof. 

\begin{lemma}
    For the hard-core model defined on a family $\mathcal{G}$ of trees closed under 
    taking subgraphs, if the partition function $Z_{G}(\lambda)$ is zero-free on a neighborhood $U$ of $0$ for any $G\in \mathcal{G}$, then it satisfies point-to-point LDC on $U$. 
\end{lemma}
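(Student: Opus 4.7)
The plan is to prove point-to-point LDC first on trees using Gutman's identity (Theorem~\ref{Gutmanid}), then transfer to graphs via Weitz's SAW-tree construction (Theorem~\ref{prop:SAW}), invoking Lemma~\ref{PLDCtoLDC} along the way to pass from point-to-point LDC on the tree to the full LDC on the tree that is needed to handle the multi-copy pinnings produced by the SAW reduction. The crucial observation for the hard-core model is that a feasible pinning $\sigma_\Lambda$ with pinned-to-$+$ set $\Lambda^+$ admits the reduction $Z_T^{\sigma_\Lambda}(\lambda) = \lambda^{|\Lambda^+|}\,Z_{T''}(\lambda)$ with $T'' := T\setminus(\Lambda\cup N(\Lambda^+))$, so the three rational functions in the definition become $P_{T,v}^{\sigma_\Lambda}=P_{T'',v}$, $P_{T,v}^{\sigma_\Lambda,u^-}=P_{T''\setminus\{u\},v}$, and (when $u,v$ are non-adjacent in $T''$) $P_{T,v}^{\sigma_\Lambda,u^+}=P_{T''\setminus N[u],v}$. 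The adjacent subcase is degenerate: $d(u,v)=1$, $P^{u^+}\equiv 0$, and $P(0)=0$, so the required agreement at the constant term holds automatically.

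For the $u^-$ comparison I will apply Gutman's identity to the component of $T''$ containing both $u$ and $v$ (if they lie in different components both differences are identically zero); dividing by $Z_{T''}Z_{T''\setminus\{u\}}$ and using $Z_{T''\setminus\{v\}}/Z_{T''} = 1 - P_{T'',v}$ yields
\[
P_{T'',v}(\lambda) - P_{T'',v}^{u^-}(\lambda) \;=\; -(-\lambda)^{d_{T''}(u,v)+1}\;\frac{Z_{T''\setminus p_{uv}}(\lambda)\,Z_{T''\setminus N[p_{uv}]}(\lambda)}{Z_{T''}(\lambda)\,Z_{T''\setminus\{u\}}(\lambda)}.
\]
Each of the four $Z$'s is a subforest partition function, hence belongs to $\mathcal{G}$ and is zero-free on $U$ by hypothesis. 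So the right side is analytic on $U$ with an explicit factor $\lambda^{d_{T''}(u,v)+1}$, and since $d_{T''}(u,v) \geq d_T(u,v)$, the left side vanishes at $\lambda=0$ through order $d_T(u,v)$, which is exactly the $u^-$ half of point-to-point LDC.

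For the $u^+$ comparison I will use the law of total probability: conditioning on the spin of $u$ gives the convex-combination identity
\[
P_{T,v}^{\sigma_\Lambda} \;=\; P_{T,u}^{\sigma_\Lambda}\,P_{T,v}^{\sigma_\Lambda,u^+} \;+\; (1-P_{T,u}^{\sigma_\Lambda})\,P_{T,v}^{\sigma_\Lambda,u^-},
\]
whence $P_{T,v}^{\sigma_\Lambda,u^+}-P_{T,v}^{\sigma_\Lambda,u^-} = (P_{T,v}^{\sigma_\Lambda}-P_{T,v}^{\sigma_\Lambda,u^-})/P_{T,u}^{\sigma_\Lambda}$ and $P_{T,v}^{\sigma_\Lambda}-P_{T,v}^{\sigma_\Lambda,u^+} = (1-P_{T,u}^{\sigma_\Lambda})(P_{T,v}^{\sigma_\Lambda,u^-}-P_{T,v}^{\sigma_\Lambda,u^+})$. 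A direct calculation using $Z(0)=1$ in the hard-core model shows $P_{T,u}^{\sigma_\Lambda}(\lambda) = \lambda + O(\lambda^2)$, so division by it reduces the order of vanishing by exactly one; combined with the previous step, the $u^+$ difference still vanishes through order $d_T(u,v)$, finishing the tree argument, and the graph case then follows by the SAW plus Lemma~\ref{PLDCtoLDC} chain. The main obstacle will be purely bookkeeping: keeping track of the SAW transfer (so that a single pinned vertex in $G$ becomes a multi-vertex pinning in $T_{\text{SAW}}(G,v)$, which is where Lemma~\ref{PLDCtoLDC} is essential), the subforest reduction $\sigma_\Lambda\mapsto T''$, and the verification that every $Z$ appearing in a denominator is nonvanishing on $U$; once these routine checks are settled, the entire proof collapses to a short algebraic manipulation of Gutman's identity.
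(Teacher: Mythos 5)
Your proposal is correct and tracks the paper's approach in its essential structure: absorb the hard-core pinning $\sigma_\Lambda$ into a subforest $T''$ (deleting $\Lambda\cup N(\Lambda^+)$), reduce to trees via SAW (with \text{Lemma~\ref{PLDCtoLDC}} handling the multi-copy pinnings the SAW construction introduces), and extract the vanishing order $\lambda^{d(u,v)}$ from Gutman's identity. Your $u^-$ computation is exactly the paper's, but your $u^+$ computation takes a genuinely different (if short) route. The paper substitutes $Z^+_{G,u}(\lambda)=\lambda Z_{G\backslash N[u]}(\lambda)$ directly into the Gutman quotient so that the explicit $\lambda$ in the denominator cancels one power of $(-\lambda)^{d(u,v)+1}$ in the numerator, yielding $P_{G,v}-P^{u^+}_{G,v}=\lambda^{d(u,v)}H_2(\lambda)$ in a single step, with no reference to the $u^-$ result. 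You instead invoke the conditional-expectation decomposition $P=P_u\,P^{u^+}+(1-P_u)P^{u^-}$, rearrange to $P^{u^+}-P^{u^-}=(P-P^{u^-})/P_u$, and use the hard-core-specific observation $P_{T'',u}(\lambda)=\lambda Z_{T''\setminus N[u]}(\lambda)/Z_{T''}(\lambda)=\lambda+O(\lambda^2)$ to drop exactly one order of vanishing from the $u^-$ difference. Both routes land on the same $\lambda^{d(u,v)}$ factor. The paper's is more self-contained (one Gutman application per case, no division by an $O(\lambda)$ quantity); yours makes the role of $P_u\sim\lambda$ explicit, at the price of chaining the $u^+$ estimate through the $u^-$ estimate and relying on the fact that the quotient remains analytic at $\lambda=0$, which holds but is an extra thing to verify. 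Note also that in the adjacent case you only explicitly check $P(0)=0$ and $P^{u^+}\equiv 0$; $P^{u^-}(0)=0$ is needed too (it holds by the same $\lambda$-factor argument as for $P$), so include it for completeness.
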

\begin{proof}
The case $d_G(u,v)=1$ is easy to check. Here, we assume that $d_G(u,v)\ge 2$. For the hard-core model, $Z^+_{G, v}(\lambda)=\lambda Z_{G\backslash N[v]}(\lambda)$ and $Z^-_{G, v}(\lambda)=Z_{G\backslash v}(\lambda)$. 
Then, $$P^{u^-}_{G, v}(\lambda)=\frac{Z^{-,+}_{G,u,v}(\lambda)}{Z^-_{G,u}(\lambda)}=\frac{Z^+_{G\backslash\{u\},v}}{Z_{G\backslash\{u\}}}=P_{G\backslash\{u\}, v}(\lambda)$$
and
$$P^{u^+}_{G, v}(\lambda)=\frac{Z^{+,+}_{G,u,v}(\lambda)}{Z^+_{G,u}(\lambda)}=\frac{\lambda Z^+_{G\backslash N[u],v}}{\lambda Z_{G\backslash N[u]}}=P_{G\backslash N[u], v}(\lambda).$$
 
In fact, it is easy to see that for any graph $G$ and any feasible partial configuration $\sigma_\Lambda$ on some $\Lambda \subseteq U$, there exists a subgraph $G'$ of $G$ such that $P^{\sigma_\Lambda}_{G,v}(\lambda)=P_{G', v}(\lambda)$. Since the family $\mathcal{G}$ is closed under taking subgraphs, $G' \in \mathcal{G}$. Thus, in order to establish point-to-point LDC, it suffices to prove it for a tree $G\in \mathcal{G}$ where no vertices are pinned. 
We observe that
\begin{equation*}
\begin{aligned}
     & Z_G Z_{G\setminus\{u,v\}}-Z_{G\setminus\{u\}}Z_{G\setminus\{v\}} \\ = &  (Z^{-,-}_{G,u,v}+Z^{+,+}_{G,u,v}+Z^{-,+}_{G,u,v}+Z^{+,-}_{G,u,v})Z^{-,-}_{G,u,v} 
    - (Z^{-,-}_{G,u,v}+Z^{-,+}_{G,u,v})(Z^{-,-}_{G,u,v}+Z^{+,-}_{G,u,v}) \\
  = &   Z_{G,u,v}^{+,+}Z_{G,u,v}^{-,-}-Z_{G,u,v}^{+,-}Z_{G,u,v}^{-,+}.
 \end{aligned}
\end{equation*}
Here we omit the variable $\lambda$ for simplicity. Then, we have
\begin{equation}
    \begin{aligned}
        P_{G,v}(\lambda)-P^{u^-}_{G, v}(\lambda)
        &=\dfrac{Z^+_{G,v}(\lambda)}{Z_{G}(\lambda)}-\dfrac{Z^{+,-}_{G,v,u}(\lambda)}{Z^-_{G,u}(\lambda)}\\
        &=\dfrac{Z_{G,u,v}^{+,+}(\lambda)Z_{G,u,v}^{-,-}(\lambda)-Z_{G,u,v}^{+,-}(\lambda)Z_{G,u,v}^{-,+}(\lambda)}{Z_G(\lambda) Z^-_{G,u}(\lambda)} \\
        &=\frac{-(-\lambda)^{d(u,v)+1}Z_{G\setminus p_{uv}}(\lambda)Z_{G\setminus N[p_{uv}]}(\lambda)}{Z_G(\lambda) Z_{G\setminus\{u\}}(\lambda)}\\
        & =\lambda^{d(u,v)}H_1(\lambda)
    \end{aligned}
\end{equation}
    where $H_1(\lambda)$ is a rational function on $U$ which is analytic near $0$. Thus, the coefficients of the term $\lambda^k$ for $0\leq k \leq d_G(u,v)-1$ in the Taylor series  of them are the same. Also, we have
\begin{equation}
    \begin{aligned}
        P_{G,v}(\lambda)-P^{u^+}_{G, v}(\lambda)
        &=\dfrac{Z^+_{G,v}(\lambda)}{Z_{G}(\lambda)}-\dfrac{Z^{+,+}_{G,v,u}(\lambda)}{Z^+_{G,u}(\lambda)}\\
        &=\dfrac{-(Z_{G,u,v}^{+,+}(\lambda)Z_{G,u,v}^{-,-}(\lambda)-Z_{G,u,v}^{+,-}(\lambda)Z_{G,u,v}^{-,+}(\lambda))}{Z_G(\lambda) Z^+_{G,u}(\lambda)} \\
        &=\frac{(-\lambda)^{d(u,v)+1}Z_{G\setminus p_{uv}}(\lambda)Z_{G\setminus N[p_{uv}]}(\lambda)}{\lambda Z_G(\lambda) Z_{G\setminus N[u]}(\lambda)}\\
        &=\lambda^{d(u,v)}H_2(\lambda)
    \end{aligned}
\end{equation}
    where $H_2(\lambda)$ is a rational function on $U$ which is analytic near $0$. Thus, the coefficients of the term $\lambda^k$ for $0\leq k \leq d_G(u,v)-1$ in the Taylor series of them are the same.
\end{proof}

\begin{remark}
    The above proof of LDC for the hard-core model using the  Christoffel-Darboux type identity may somewhat answer the following interesting question: \emph{if the partition function of the hard-core model is zero-free on a complex neighborhood of some point $\lambda_0\neq 0$, can we get SSM for this zero-free region?} 
    The answer would be yes if the LDC property could hold for rational functions $P_G^v(\lambda)$ expanded as Taylor series near $\lambda=\lambda_0$ instead of $\lambda=0$. 
    However, the Christoffel-Darboux type identity implies that such a LDC property does not hold. Otherwise, for any graph $G$ and any vertex $u$, we would have $P_{G,v}(\lambda)-P^{u^+}_{G, v}(\lambda)=(\lambda-\lambda_0)^{d(u,v)}H(\lambda)$ for some  function $H(\lambda)$ analytic near $\lambda_0$.
    One can easily check that this is false. 
   Thus,  it is highly possible that a zero-free region not containing $0$ cannot imply SSM (at least not by our approach).   
\end{remark}

Now, we extend the above Christoffel-Darboux type identity to  general 2-spin systems on trees. Actually, we can extend it to the more general $q$-spins system on trees (see Appendix~\ref{sec:CDqspin}). 

\begin{theorem}[Theorem~\ref{CDgeneralintro} restated]\label{CDgeneral}
    Suppose that $T$ is a tree, $\sigma_{\Lambda}$ is a partial configuration on some $\Lambda \subseteq V$ which may be empty,  and $u$ and $v$ are two distinct vertices in $V\backslash \Lambda$.
    Let $p_{uv}$ denote  the unique path in $T$ connecting $u$ and $v$, $V(p_{uv})$ denote the set of points in $p_{uv}$, $N[p_{uv}]$ denote the set of neighbors of points in $p_{uv}$,  $N[p_{uv}]\setminus V(p_{uv})=\{v_1,\cdots,v_n\}$, and $T_i$ denote the component of $v_i$ in $T\setminus p_{uv}$. 
    Then for the partition function of the 2-spin system defined on $T$ conditioning on $\sigma_{\Lambda}$, we have {(we omit the argument $(\beta, \gamma, \lambda)$ in the second line for simplicity)}
    \[\begin{aligned}
& Z_{T,u,v}^{\sigma_{\Lambda},+,+}(\beta,\gamma,\lambda)Z_{T,u,v}^{\sigma_{\Lambda}, -,-}(\beta,\gamma,\lambda) 
-Z_{T,u,v}^{\sigma_{\Lambda}, +,-}(\beta,\gamma,\lambda)Z_{T,u,v}^{\sigma_{\Lambda}, -,+}(\beta,\gamma,\lambda) \\
=&\begin{cases}(\beta\gamma-1)^{d(u,v)}\lambda^{d(u,v)+1}\prod\limits_{i=1}^{n}(\beta Z_{T_i,v_i}^{\sigma_{\Lambda},+}+Z_{T_i,v_i}^{\sigma_{\Lambda},-})(Z_{T_i,v_i}^{\sigma_{\Lambda},+} +\gamma Z_{T_i,v_i}^{\sigma_{\Lambda},-}) & , \text{ if } V(p_{uv})\cap\Lambda=\emptyset \\
0 & , \text{ if } V(p_{uv})\cap \Lambda\neq\emptyset
\end{cases}.\end{aligned}\]
\end{theorem}

\begin{proof}
In the following proof and Remark~\ref{rmk:cd}, we omit the argument $(\beta, \gamma, \lambda)$ of the partition functions for simplicity. 
We prove the equality by induction on $k=d(u,v)$ when $V(p_{uv})\cap\Lambda=\emptyset$. When $k=1,$ $u$ and $v$ are adjacent, Denote $u,v_1,\cdots,v_l$ neighbors of $v$ and $v,v_{l+1},\cdots,v_{n}$ neighbors of $u$, as illustrated in the following figure.

\begin{figure}[!hbtp]\label{fig:ldc}
\centering
	\includegraphics[scale=0.18]{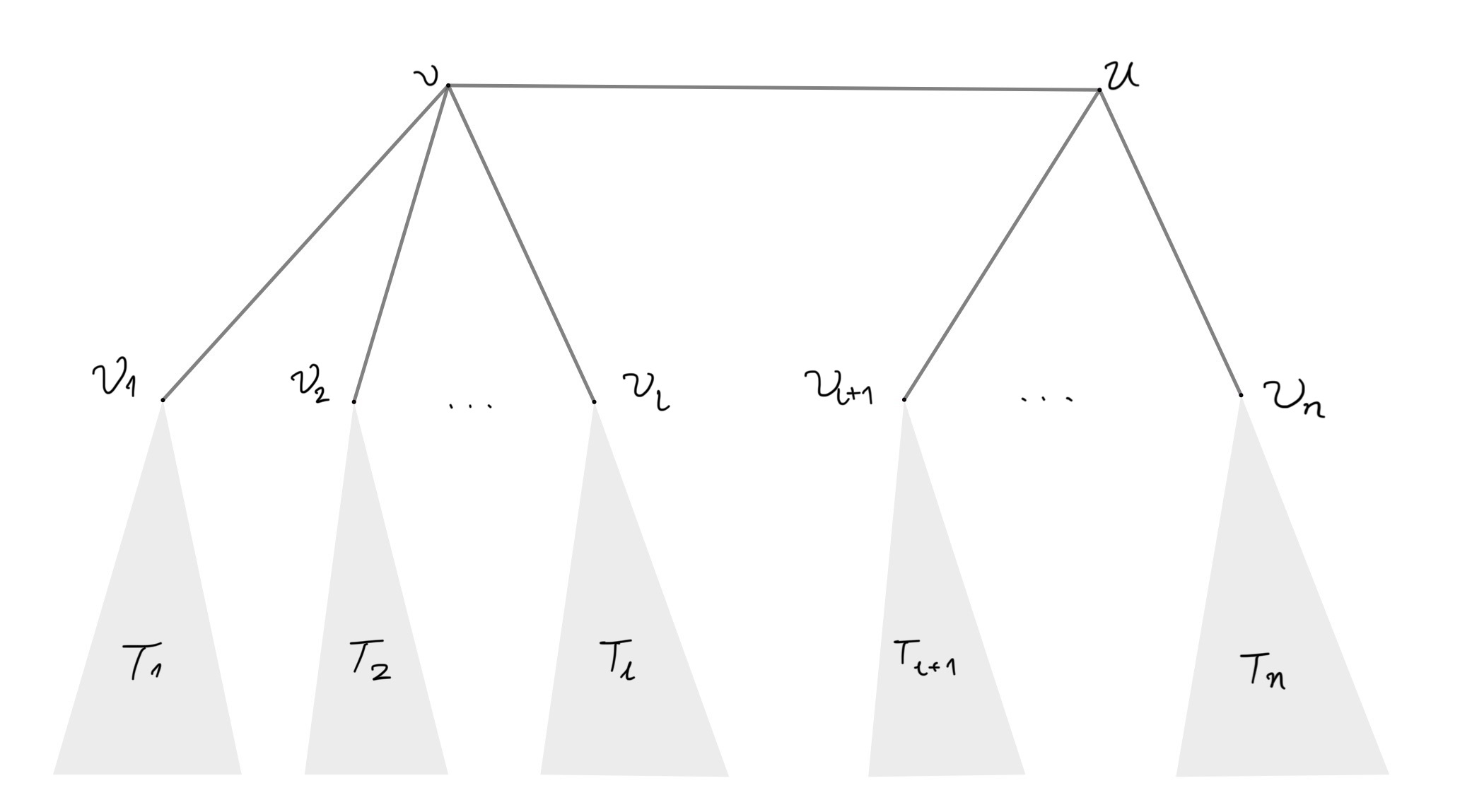}
	\label{fig:gutmangen}
\end{figure}
Then, we have
    \[Z_{T,u,v}^{\sigma_{\Lambda},+,+}=\lambda^2\beta\prod\limits_{i=1}^{n}(\beta Z_{T_i,v_i}^{\sigma_{\Lambda},+} +Z_{T_i,v_i}^{\sigma_{\Lambda},-});\]
    \[Z_{T,u,v}^{\sigma_{\Lambda},-,-}=\gamma\prod\limits_{i=1}^{n}(Z_{T_i,v_i}^{\sigma_{\Lambda},+} +\gamma Z_{T_i,v_i}^{\sigma_{\Lambda},-}); \]
    \[Z_{T,u,v}^{\sigma_{\Lambda},-,+}=\lambda\prod\limits_{i=1}^{l}(\beta Z_{T_i,v_i}^{\sigma_{\Lambda},+} +Z_{T_i,v_i}^{\sigma_{\Lambda},-})\prod\limits_{i=l+1}^{n}(Z_{T_i,v_i}^{\sigma_{\Lambda},+} +\gamma Z_{T_i,v_i}^{\sigma_{\Lambda},-});\]
    \[Z_{T,u,v}^{\sigma_{\Lambda},+,-}=\lambda\prod\limits_{i=1}^{l}(Z_{T_i,v_i}^{\sigma_{\Lambda},+} +\gamma Z_{T_i,v_i}^{\sigma_{\Lambda},-})\prod\limits_{i=l+1}^{n}(\beta Z_{T_i,v_i}^{\sigma_{\Lambda},+} +Z_{T_i,v_i}^{\sigma_{\Lambda},-}).\]
Thus, we deduce that
    \[
Z_{T,u,v}^{\sigma_{\Lambda},+,+}Z_{T,u,v}^{\sigma_{\Lambda},-,-}-Z_{T,u,v}^{\sigma_{\Lambda},+,-}Z_{T,u,v}^{\sigma_{\Lambda},-,+}
 =(\beta\gamma-1)\lambda^{2}\prod\limits_{i=1}^{n}(\beta Z_{T_i,v_i}^{\sigma_{\Lambda},+} +Z_{T_i,v_i}^{\sigma_{\Lambda},-})(Z_{T_i,v_i}^{\sigma_{\Lambda},+} +\gamma Z_{T_i,v_i}^{\sigma_{\Lambda},-}).
\]

Suppose now that $k\ge 2$ and the assertion is true for $k-1$. Suppose that $v_0,v_1,\cdots,v_l$ are neighbors of $v$ with $v_0$ on $p_{uv}.$ Let $T_0$ denote the component of $T\setminus\{v\}$ containing $v_0$ and $u$. Then 
\[Z_{T,u,v}^{\sigma_{\Lambda},+,+}=\lambda(\beta Z_{T_0,v_0,u}^{\sigma_{\Lambda},+,+}+Z_{T_0,v_0,u}^{\sigma_{\Lambda},-,+})\prod\limits_{i=1}^{l}(\beta Z_{T_i,v_i}^{\sigma_{\Lambda},+} +Z_{T_i,v_i}^{\sigma_{\Lambda},-});\]
\[ Z_{T,u,v}^{\sigma_{\Lambda},-,-}=(Z_{T_0,v_0,u}^{\sigma_{\Lambda},+,-}+\gamma Z_{T_0,v_0,u}^{\sigma_{\Lambda},-,-})\prod\limits_{i=1}^{l}(Z_{T_i,v_i}^{\sigma_{\Lambda},+} +\gamma Z_{T_i,v_i}^{\sigma_{\Lambda},-}); \]
\[Z_{T,u,v}^{\sigma_{\Lambda},-,+}=\lambda(\beta Z_{T_0,v_0,u}^{\sigma_{\Lambda},+,-}+Z_{T_0,v_0,u}^{\sigma_{\Lambda},-,-})\prod\limits_{i=1}^{l}(\beta Z_{T_i,v_i}^{\sigma_{\Lambda},+} +Z_{T_i,v_i}^{\sigma_{\Lambda},-});\]
\[Z_{T,u,v}^{\sigma_{\Lambda},+,-}=( Z_{T_0,v_0,u}^{\sigma_{\Lambda},+,+}+\gamma Z_{T_0,v_0,u}^{\sigma_{\Lambda},-,+})\prod\limits_{i=1}^{l}(Z_{T_i,v_i}^{\sigma_{\Lambda},+} +\gamma Z_{T_i,v_i}^{\sigma_{\Lambda},-}),\]
and hence,
 \[\begin{aligned}
& Z_{T,u,v}^{\sigma_{\Lambda},+,+}Z_{T,u,v}^{\sigma_{\Lambda},-,-} -Z_{T,u,v}^{\sigma_{\Lambda},+,-}Z_{T,u,v}^{\sigma_{\Lambda},-,+} \\ = &(\beta\gamma-1)\lambda (Z_{T_0,u,v_0}^{,\sigma_{\Lambda},+,+}Z_{T_0,u,v_0}^{\sigma_{\Lambda},-,-}  -Z_{T_0,u,v_0}^{\sigma_{\Lambda},+,-}Z_{T_0,u,v_0}^{\sigma_{\Lambda},-,+})\prod\limits_{i=1}^{l}(\beta Z_{T_i,v_i}^{\sigma_{\Lambda},+} +Z_{T_i,v_i}^{\sigma_{\Lambda},-})(Z_{T_i,v_i}^{\sigma_{\Lambda},+} +\gamma Z_{T_i,v_i}^{\sigma_{\Lambda},-}).
\end{aligned}\]
 Notice that $d(u,v_0)=k-1$.
 Then, we can apply the induction hypothesis, which will finish the induction proof. 

When $V(p_{uv})\cap\Lambda\neq\emptyset$, we argue by induction on $l=d(u,V(p_{uv})\cap\Lambda)+d(v,V(p_{uv})\cap\Lambda)\ge 2.$ The the basic case $l=2$ is easy to check and the proof of the inductive step is same as the above. 
\end{proof}

\begin{remark}\label{rmk:cd}
    Since $$Z^{\sigma_{\Lambda}}_{T}=Z^{\sigma_{\Lambda}, +, +}_{T, u, v}+Z^{\sigma_{\Lambda}, +, -}_{T, u, v}+Z^{\sigma_{\Lambda}, -, +}_{T, u, v}+Z^{\sigma_{\Lambda}, -, -}_{T, u, v},$$ $$Z^{\sigma_{\Lambda}, +}_{T, u}=Z^{\sigma_{\Lambda}, +, +}_{T, u, v}+Z^{\sigma_{\Lambda}, +, -}_{T, u, v} \text{  and  } Z^{\sigma_{\Lambda}, -}_{T, u}=Z^{\sigma_{\Lambda}, -, -}_{T, u, v}+Z^{\sigma_{\Lambda}, -, +}_{T, u, v},$$ 
we have 
    \[\begin{aligned}       Z^{\sigma_{\Lambda}}_{T}Z^{\sigma_{\Lambda}, +, +}_{T, u, v}-Z^{\sigma_{\Lambda}, +}_{T, u}Z^{\sigma_{\Lambda}, +}_{T,v} & =Z^{\sigma_{\Lambda}}_{T}Z^{\sigma_{\Lambda}, -, -}_{T, u, v}-Z^{\sigma_{\Lambda}, -}_{T, u}Z^{\sigma_{\Lambda}, -}_{T,v}\\ 
& =Z_{T,u,v}^{\sigma_{\Lambda},+,+}Z_{T,u,v}^{\sigma_{\Lambda}, -,-} 
-Z_{T,u,v}^{\sigma_{\Lambda}, +,-}Z_{T,u,v}^{\sigma_{\Lambda}, -,+}. 
    \end{aligned}\]
\end{remark}

\begin{lemma}\label{zftoLDC-tree}
Fix complex parameters $\beta$ and $\gamma$, a complex neighborhood $U$ of $0$, and a family of trees $\mathcal{G}$. If the partition function $Z^{\sigma_\Lambda}_{G, \beta, \gamma}(\lambda)$
 is zero-free on $U\backslash\{0\}$, then the corresponding 2-spin system satisfies point-to-point LDC on $U$. 
\end{lemma}
\begin{proof}
For $G=(V,E)\in\mathcal{G}$, feasible configuration $\sigma_{\Lambda}\in\{0,1\}^{\Lambda}$ on some $\Lambda\subseteq V$, and two vertices $u,v$ proper to $\sigma_\Lambda$, consider the rational functions $P_{G,v}^{\sigma_\Lambda}(\lambda)$, $P_{G,v}^{\sigma_\Lambda, u^+}(\lambda)$ and $P_{G,v}^{\sigma_\Lambda, u^-}(\lambda)$. 
One can easily check that they are analytic on $U$.

Since $G$ is a tree, 
by Theorem~\ref{CDgeneral} and Remark~\ref{rmk:cd}, 
\[\begin{aligned}
    P_{G,v}^{\sigma_\Lambda}(\lambda)-P_{G,v}^{\sigma_\Lambda, u^+}(\lambda) & =\frac{Z_{G,u}^{\sigma_\Lambda,+}(\lambda)Z_{G,v}^{\sigma_\Lambda,+}(\lambda)-Z_{G,v,u}^{\sigma_\Lambda,+,+}(\lambda)Z_{G}^{\sigma_\Lambda}(\lambda)}{Z_{G}^{\sigma_\Lambda}(\lambda)Z_{G,u}^{\sigma_\Lambda,+}(\lambda)} \\
    & =\lambda^{d_G(u,v)+1}H(\lambda)
\end{aligned}\]
for some $H(\lambda)$ analytic near $0$. Thus, the coefficients of the term $\lambda^k$ for $0\leq k \leq d_G(u,v)$ in the Taylor series  of $P_{G,v}^{\sigma_\Lambda}(\lambda)$ and $P_{G,v}^{\sigma_\Lambda, u^+}(\lambda)$ are equal.
Similar result holds for $P_{G,v}^{\sigma_\Lambda}(\lambda)$ and $P_{G,v}^{\sigma_\Lambda, u^-}(\lambda)$.
\end{proof}

Combining Lemmas~\ref{PLDCtoLDC}, \ref{lem:tree-enough} and~\ref{zftoLDC-tree}, we have the following result. 

\begin{lemma}\label{zftoLDC}
Fix complex parameters $\beta$ and $\gamma$, a complex neighborhood $U$ of $0$, and a family of graphs $\mathcal{G}$ closed under {\rm SAW} tree constructions. If the partition function $Z^{\sigma_\Lambda}_{G, \beta, \gamma}(\lambda)$
 is zero-free on $U\backslash\{0\}$, then the corresponding 2-spin system satisfies LDC on $U$. 
\end{lemma}

Notice that in the Christoffel-Darboux type identity for the 2-spin system, besides the ``factor'' $\lambda^{d(u,v)+1}$, one can also take a ``factor'' $(\beta\gamma-1)^{d(u,v)}$ out of $Z_{T,u,v}^{\sigma_{\Lambda},+,+}Z_{T,u,v}^{\sigma_{\Lambda}, -,-} 
-Z_{T,u,v}^{\sigma_{\Lambda}, +,-}Z_{T,u,v}^{\sigma_{\Lambda}, -,+}$.
Thus, we can prove LDC for $\beta$ near $1/\gamma$ similarly. 
The proof of LDC for $\gamma$ near $1/\beta$ is symmetric.


\begin{lemma}\label{lem:ldcbeta}
    Fix complex parameters $\gamma\neq 0$ and $\lambda$. Let $\mathcal{G}$ be a family of graphs closed under {\rm SAW} tree constructions and $U$ be a complex neighborhood of $1/\gamma$ such that the partition function $Z^{\sigma_{\Lambda}}_{G,\gamma, \lambda}(\beta)$ is zero-free on $U$.
    Then the corresponding 2-spin system satisfies LDC for $\beta$ near $1/\gamma$.
\end{lemma}
\begin{proof}
It suffices to prove that the system satisfies point-to-point LDC for $\beta$ near $1/\gamma$.
    By the SAW tree construction and  an analogue of Lemma~\ref{lem:tree-enough} for $\beta$ near $1/\gamma$, we only need to consider the case that $G$ is a tree.
    Then by Theorem~\ref{CDgeneral}, we have
\[\begin{aligned}
    P_{G,v}^{\sigma_\Lambda}(\beta)-P_{G,v}^{\sigma_\Lambda, u^+}(\beta)
    & =\frac{Z_{G,u}^{\sigma_\Lambda,+}(\beta)Z_{G,v}^{\sigma_\Lambda,+}(\beta)-Z_{G,v,u}^{\sigma_\Lambda,+,+}(\beta)Z_{G}^{\sigma_\Lambda}(\beta)}{Z_{G}^{\sigma_\Lambda}(\beta)Z_{G,u}^{\sigma_\Lambda,+}(\beta)}\\
    & =(\beta-1/\gamma)^{d_G(u,v)}\gamma^{d_G(u,v)}H(\beta)
\end{aligned}\]
for some function $H(\beta)$ analytic near $1/\gamma$.
The same form of equation holds for $P_{G,v}^{\sigma_\Lambda}(\beta)-P_{G,v}^{\sigma_\Lambda, u^-}(\beta)$.
We are done with the proof of LDC for $\beta$ near $1/\gamma$. 
\end{proof}

In addition, for the Ising model, $\beta=\gamma$, $\beta\gamma=1$ implies $\beta=\gamma=1$ or $-1$. 
Thus, similarly, we can define and prove LDC for $\beta$ near $1$ or $-1$ for the Ising model.  

\begin{definition}[LDC for $\beta$ near $1$ or $-1$]

Fix a complex parameter $\lambda$, a neighborhood $U$ of $1$ (or respectively $-1$), and a family of graphs $\mathcal{G}$.
The corresponding Ising model (defined on $\mathcal{G}$ specified with the fixed parameter $\lambda$, and a variable $\beta \in U$) is said to satisfy LDC for $\beta$ near $1$ (or respectively $-1$) if for
any $G=(V,E)\in\mathcal{G}$,
any feasible configurations $\sigma_{\Lambda_1}\in\{0,1\}^{\Lambda_1}$ and $\tau_{\Lambda_2}\in\{0,1\}^{\Lambda_2}$, and any vertex $v$ proper to $\sigma_{\Lambda_1}$ and $\tau_{\Lambda_2}$, the functions $P_{G,v}^{\sigma_{\Lambda_1}}(\beta)$ and $P_{G,v}^{\tau_{\Lambda_2}}(\beta)$ are analytic on $U$, and their corresponding Taylor series $\sum^\infty_{i=0}a_i(\beta-1)^i$ and $\sum^\infty_{i=0}b_i(\beta-1)^i$  near $\beta=1$ (or respectively $\sum^\infty_{i=0}a_i(\beta+1)^i$ and $\sum^\infty_{i=0}b_i(\beta+1)^i$ near  $\beta=-1$)  satisfy $a_i=b_i$ for $0\leq i \leq d_G(v,\sigma_{\Lambda_1}\neq \tau_{\Lambda_2})-1$.
\end{definition}

\begin{lemma}\label{lem:ldcising}
    Fix  $\lambda \in \mathbb{C}$. Let $\mathcal{G}$ be a family of graphs closed under {\rm SAW} tree constructions and $U$ be a complex neighborhood of $1$ (or respectively $-1$) such that $Z^{\sigma_{\Lambda}}_{G,\gamma, \lambda}(\beta)$ is zero-free on $U$.
    Then the corresponding Ising model satisfies LDC for $\beta$ near $1$ (or respectively $-1$).
\end{lemma}

\section{Zero-freeness implies SSM}\label{sec:together}

Now, we are ready to show that zero-freeness implies SSM. 
We first apply Montel's theorem (Theorem \ref{thm:montel}) to get a uniform bound for the multivariate function $P^{\sigma_\Lambda}_{G,v}(\beta, \gamma, \lambda)$, which extends Lemma 5 of \cite{Guus2021zerofreetossm}.

\begin{lemma}\label{bdcircle}
Let $\mathcal{G}$ be
a family of graphs,  and $U_1, U_2, U_3\subseteq \mathbb{C}$ be quasi-neighborhoods of some $\beta_0, \gamma_0, \lambda_0\in \mathbb{C}$ where $U_1\times U_2\times U_3\neq \{(\beta_0, \gamma_0, \lambda_0)\}$ (i.e., at least one of  $U_1, U_2$, and $U_3$ is a complex region). 
Denote the set $((U_1\times U_2) \backslash\{(0, 0)\})\times (U_3\backslash\{0\})$ by  $\mathbf U$. 
Suppose that the partition function of the 2-spin system defined on $\mathcal{G}$ is zero-free on $\mathbf{U}$, and one of the following conditions holds.
\begin{enumerate}
    \item $\beta_0, \gamma_0, \lambda_0$ are non-negative real numbers that are not all zero;
    \item $\mathcal{G}$ is
a family of graphs of bounded degree and $\beta_0\gamma_0=1$.
\end{enumerate}
Then for any compact set $S\subseteq  \mathbf U$, there exists an $M>0$ such that for every $G\in \mathcal{G}$, any feasible partial configuration $\sigma_\Lambda$, any vertex $v$ proper to $\sigma_\Lambda$,  we have $|P^{\sigma_\Lambda}_{G,v}(\beta, \gamma, \lambda)|\leq M$ for all $(\beta, \gamma, \lambda)\in S$.
\end{lemma}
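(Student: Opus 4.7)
The plan is to prove the uniform bound by contradiction using the multivariate Montel theorem, then to rule out the ``blow-up'' subsequence by producing a single witness point $z^\ast \in \mathbf U$ at which $|P^{\sigma_\Lambda}_{G,v}(z^\ast)|$ is already bounded uniformly in $(G,\sigma_\Lambda,v)$. The two hypotheses in the statement enter only through the choice of this witness point.

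First I would set up the normal family. Because $v$ is proper to $\sigma_\Lambda$, both extensions $\sigma_\Lambda\cup\{v\to +\}$ and $\sigma_\Lambda\cup\{v\to -\}$ are feasible, so zero-freeness applied to them gives that $Z^{\sigma_\Lambda}_G$, $Z^{\sigma_\Lambda,+}_{G,v}$ and $Z^{\sigma_\Lambda,-}_{G,v}$ are all nonzero on $\mathbf U$. Hence every $P^{\sigma_\Lambda}_{G,v}$ is holomorphic on $\mathbf U$ and takes values in $\mathbb C\setminus\{0,1\}$. After freezing the single-point coordinates at their prescribed values, $\mathbf U$ is a connected open subset of $\mathbb C^n$ for some $n\in\{1,2,3\}$, so Theorem~\ref{thm:montel} asserts that $\mathcal F=\{P^{\sigma_\Lambda}_{G,v}\}$ is normal on $\mathbf U$.

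Assume for contradiction that some compact $S\subseteq \mathbf U$ admits no uniform bound; then there exist $(G_k,\sigma_{\Lambda_k},v_k)$ and $z_k\in S$ with $|P_k(z_k)|\to\infty$, where $P_k:=P^{\sigma_{\Lambda_k}}_{G_k,v_k}$. By normality, after passing to a subsequence, $P_k$ converges locally uniformly on $\mathbf U$ either to a holomorphic $f:\mathbf U\to\mathbb C$ or to $\infty$. The first alternative is impossible: covering $S$ by finitely many closed polydiscs contained in $\mathbf U$ on which the convergence is uniform would force $\{|P_k|\}$ uniformly bounded on $S$, contradicting $|P_k(z_k)|\to\infty$. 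So only the case $P_k\to\infty$ remains, and it suffices to produce a witness $z^\ast\in\mathbf U$ and a constant $M^\ast<\infty$ with $|P^{\sigma_\Lambda}_{G,v}(z^\ast)|\le M^\ast$ uniformly in $(G,\sigma_\Lambda,v)$.

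In Case~1, I would pick $z^\ast=(\beta^\ast,\gamma^\ast,\lambda^\ast)\in\mathbf U$ with $\beta^\ast,\gamma^\ast\ge 0$, $(\beta^\ast,\gamma^\ast)\neq(0,0)$, and $\lambda^\ast>0$; such a choice exists because each $U_i$ is a quasi-neighborhood of a nonnegative real (replace $\lambda_0=0$ by a small positive $\lambda^\ast\in U_3$ if needed). At this real point $P^{\sigma_\Lambda}_{G,v}(z^\ast)\in[0,1]$ is an honest Gibbs marginal, and $M^\ast=1$ works. In Case~2 (bounded degree $\Delta$, $\beta_0\gamma_0=1$), I would take $z^\ast=(\beta_0,\gamma_0,\lambda^\ast)$ for any $\lambda^\ast\in U_3\setminus\{0\}$. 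Using the identity $m_+(\sigma)-m_-(\sigma)=\sum_{u\in V_+(\sigma)}\deg(u)-|E|$, the weight factorizes vertex-wise at $\beta\gamma=1$, giving
\[
Z^{\sigma_\Lambda}_G(\beta,1/\beta,\lambda)=\beta^{-|E|}\prod_{u\in\Lambda_+}\lambda\beta^{\deg(u)}\prod_{u\notin\Lambda}\bigl(1+\lambda\beta^{\deg(u)}\bigr),
\]
where $\Lambda_+=\{u\in\Lambda:\sigma_\Lambda(u)=+\}$, and hence the pinning-independent closed form
\[
P^{\sigma_\Lambda}_{G,v}(\beta,1/\beta,\lambda)=\frac{\lambda\beta^{\deg(v)}}{1+\lambda\beta^{\deg(v)}}.
\]
Zero-freeness at $z^\ast$ forces $1+\lambda^\ast\beta_0^{\,k}\neq 0$ for every integer $k$ that appears as a vertex degree in some $G\in\mathcal G$; since $k\le\Delta$, there are only finitely many such $k$, and the minimum of these finitely many nonzero quantities is a positive lower bound on the denominator, while the numerator is obviously bounded. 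This yields the required $M^\ast$. The main obstacle is precisely Case~2: one must place a witness on the curve $\beta\gamma=1$ inside $\mathbf U$ and then control $|P^{\sigma_\Lambda}_{G,v}|$ there, which requires both the closed-form factorization and the bounded-degree hypothesis to finitize the possible values of $\deg(v)$; without a degree bound the strategy breaks down and explains why Case~2 is restricted to this setting.
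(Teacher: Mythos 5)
Your proof is correct and follows essentially the same route as the paper: use the multivariate Montel theorem for normality, and rule out the divergent alternative by exhibiting a single point of $\mathbf U$ at which every $P^{\sigma_\Lambda}_{G,v}$ is uniformly bounded, with Case~1 giving a genuine Gibbs marginal and Case~2 giving the pinning-independent closed form $\frac{\lambda\beta^{\deg v}}{1+\lambda\beta^{\deg v}}$ on the curve $\beta\gamma=1$. The only difference is that you are slightly more explicit than the paper in perturbing a degenerate coordinate (e.g.\ $\lambda_0=0$) to a nearby point of $\mathbf U$ and in actually deriving the vertex-wise factorization that the paper merely asserts.
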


\begin{proof}
    We give the proof when $U_1,U_2$ are neighborhoods of some points.
    For the case that some of them are singletons, the proof is identical using Montel's theorem (Theorem \ref{thm:montel}) with fewer variables.
    
    Suppose to the contrary that there exists a sequence of tuples $(G_n\in\mathcal{G},v_n,\Lambda_n,\sigma^{(n)}_{\Lambda_n},(\beta_n,\gamma_n,\lambda_n)\in S)_{n\in\N}$ such that\[\left|P_{G_n,v_n}^{\sigma^{(n)}_{\Lambda_n}}(\beta_n,\gamma_n,\lambda_n)\right|\ge n.\]
    We denote by $\{f_n(\beta, \gamma, \lambda)\}_{n\in\N}$ the sequence of functions  $\{P_{G_n,v_n}^{\sigma^{(n)}_{\Lambda_n}}(\beta,\gamma,\lambda)\}_{n\in\N}$.
    
  Since the partition function of the 2-spin system defined on $\mathcal{G}$ is zero-free on $\mathbf{U}$, for any $G=(V,E)\in\mathcal{G}$, any feasible partial configuration $\sigma_{\Lambda}$ on some $\Lambda\subseteq V$, and any vertex $v$ proper to $\sigma_\Lambda$,
    the function $P^{\sigma_\Lambda}_{G,v}(\beta, \gamma, \lambda)$ is holomorphic on $\bf U$ (which is connected since $U_3\setminus\{0\}$ and $U_1\times U_2\setminus\{(0,0)\}$ are connected) and avoids the values $0$ and $1$. Thus all such functions form a normal family by Montel's theorem (Theorem \ref{thm:montel}). 

Now we consider the two conditions. If  $\beta_0,\gamma_0,\lambda_0\geq 0$ and they are not all zero, then for any rational function $P^{\sigma_\Lambda}_{G,v}$, its value given the input $(\beta_0,\gamma_0,\lambda_0)$  is a probability which lies in the interval $(0,1)$.
Otherwise, $\mathcal{G}$ is a family of graphs the degrees of which are bounded by some $\Delta$ and $\beta_0\gamma_0=1$.
Then, one can check that for any rational function $P^{\sigma_\Lambda}_{G,v}$, its value given the input $(\beta_0,\gamma_0,\lambda_0)$ is $\frac{\lambda_0\beta_0^{d_{v}}}{\lambda_0\beta_0^{d_{v}}+1}$ where $d_v$ is the degree of $v$ in $G$. 
Note that $P^{\sigma_\Lambda}_{G,v}(\beta_0,\gamma_0,\lambda_0)\neq \infty$ since the partition function is nonzero on $\mathbb{U}$. 
Since $d_v$ is bounded by the maximum degree $\Delta$, 
 for all rational functions $P^{\sigma_\Lambda}_{G,v}$, 
 we have $|P^{\sigma_\Lambda}_{G,v}(\beta_0, \gamma_0, \lambda_0)|\leq \max_{0\leq d_v\leq  \Delta}\left|\frac{\lambda_0\beta_0^{d_{v}}}{\lambda_0\beta_0^{d_{v}}+1}\right|<\infty$. 
    Thus, in both cases, any sequence in the family of  rational functions $P^{\sigma_\Lambda}_{G,v}$ can not converge locally uniformly to $\infty$. Then by Definition~\ref{def:normal}, every sequence has a sub-sequence converging locally uniformly to a holomorphic function. 

     Thus, the above sequence $\{f_n\}_{n\in \mathbb{N}}$ has a sub-sequence $\{f_{n_i}\}_{i\in \mathbb{N}}$ converging  locally uniformly to some holomorphic function $f$  on $\mathbf{U}$. 
     Since $S$ is a compact subset of $\mathbf{U}$, the convergence is uniform on $S$, and the function $f$ is bounded by some $C>0$ on $S$. 
     Then, there exists some $n_k\in \mathbb{N}$ such that for all $n_i\geq n_k$ , we have $|f_{n_i}(\beta, \gamma, \lambda)|<2C$ on $S$.  
     Clearly, this contradicts that for all $n\in \mathbb{N}$,  $|f_n(\beta_n, \gamma_n, \lambda_n)|\geq n$ where $(\beta_n, \gamma_n, \lambda_n)\in S$. 
\end{proof}

\begin{remark}\label{remark-bound-circle}
    In particular, when the set $U_3$ in Lemma~\ref{bdcircle} is a complex neighborhood of $0$, we may take the compact set $S$ to be $\{\beta_1\}\times\{\gamma_1\}\times \partial\mathbb{D}_\rho$ for some arbitrary $\beta_1\in U_1$, $\gamma_1\in U_2$, and $\partial\mathbb{D}_\rho\subseteq U_3$. Then, we get a uniform bound for $P^{\sigma_\Lambda}_{G,v, \beta_1, \gamma_1}(\lambda)=P^{\sigma_\Lambda}_{G,v}(\beta_1, \gamma_1, \lambda)$ on the circle $\partial\mathbb{D}_\rho$.
    Similarly, 
    by taking $S$ to be $\partial\mathbb{D}_{\rho}(1/\gamma_1)\times \{\gamma_1\} \times \{\lambda_1\}$,
    one can get a  uniform bound for $P^{\sigma_\Lambda}_{G,v,_{\gamma_1,\lambda_1}}(\beta)$ on the circle $\partial\mathbb{D}_{\rho}(1/\gamma_1)$, and by taking $S$ to be $\{\beta_1\} \times  \partial\mathbb{D}_{\rho}(1/\beta_1)\times \{\lambda_1\}$,
    one can get a  uniform bound for $P^{\sigma_\Lambda}_{G,v,_{\beta_1,\lambda_1}}(\gamma)$ on the circle $\partial\mathbb{D}_{\rho}(1/\beta_1)$.
\end{remark}

Then, combining LDC and the above uniform bound, we can prove SSM on a disk region.
Still, we give the proof of SSM for $\lambda$ as an illustration. 
The proof of SSM for $\beta$ or $\gamma$ is similar.


\begin{lemma}\label{zftossmdisk}
    Let $\mathcal{G}$ be
a family of graphs closed under {\rm SAW} tree constructions, $U_1$ and $U_2$ be quasi-neighborhoods of some $\beta_0, \gamma_0 \in \mathbb{C}$ respectively, and
$U_3$ be a neighborhood of $0$. 
Denote $((U_1\times U_2) \backslash\{(0, 0)\})\times (U_3\backslash\{0\})$ by  $\mathbf U$. 
Suppose that the partition function of the 2-spin systems defined on $\mathcal{G}$ is zero-free on $\mathbf{U}$, and one of the two conditions of Lemma~\ref{bdcircle} holds.
Then, for any $(\beta, \gamma, \lambda) \in \mathbf U$ satisfying $\lambda \in \mathbb{D}_\rho \subseteq U_3$ for some disk $\mathbb{D}_\rho$ centered at $0$, the corresponding 2-spin system exhibits SSM  with rate $r=\rho/|\lambda|$. 

    In particular,  when $U_3$ itself is a disk $\mathbb{D}_\rho$,  every $(\beta, \gamma, \lambda) \in \mathbf U$ satisfies $\lambda \in \mathbb{D}_\rho = U_3$.
    Thus, for any $(\beta, \gamma, \lambda) \in \mathbf U$, the  2-spin system exhibits SSM.
\end{lemma}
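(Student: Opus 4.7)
The plan is to combine the two pillars already established in Section~\ref{sec:main proof}---local dependence of coefficients (LDC) and a uniform bound on a circle---via Lemma~\ref{SSMfromLDCandbdcircle}, which pulls together precisely these two ingredients into SSM on an open disk centered at $0$.

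First I would fix any target point $(\beta, \gamma, \lambda) \in \mathbf{U}$ with $\lambda \in \mathbb{D}_\rho \subseteq U_3$, and regard the partition function with $\beta, \gamma$ held fixed as a univariate function $Z^{\sigma_\Lambda}_{G, \beta, \gamma}(\lambda)$ on $U_3$. The multivariate zero-freeness of $Z^{\sigma_\Lambda}_G$ on $\mathbf{U}$ immediately specializes to zero-freeness of $Z^{\sigma_\Lambda}_{G, \beta, \gamma}$ on $U_3 \setminus \{0\}$ for every $G \in \mathcal{G}$ and every feasible $\sigma_\Lambda$. Lemma~\ref{zftoLDC} then delivers point-to-point LDC for the corresponding univariate 2-spin system, and Lemma~\ref{PLDCtoLDC} upgrades it to LDC in the sense of Definition~\ref{def:LDC}.

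For the uniform bound, I would pick some $\rho'$ with $|\lambda| < \rho' < \rho$; since $\overline{\mathbb{D}_{\rho'}} \subset \mathbb{D}_\rho \subseteq U_3$, the circle $\partial \mathbb{D}_{\rho'}$ is a compact subset of $U_3 \setminus \{0\}$. Because $\beta_0, \gamma_0 \geq 0$ are not both zero, condition~1 of Lemma~\ref{bdcircle} applies, and, as formalized in Remark~\ref{remark-bound-circle}, taking the compact slice $\{\beta\} \times \{\gamma\} \times \partial \mathbb{D}_{\rho'} \subseteq \mathbf{U}$ yields a single constant $M > 0$ such that $|P^{\sigma_\Lambda}_{G, v, \beta, \gamma}(\lambda')| \leq M$ on $\partial \mathbb{D}_{\rho'}$, uniformly over all $G \in \mathcal{G}$, all feasible $\sigma_\Lambda$, and all $v$ proper to $\sigma_\Lambda$.

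With LDC and a uniform bound on $\partial \mathbb{D}_{\rho'}$ both in place, Lemma~\ref{SSMfromLDCandbdcircle} directly yields SSM on $\mathbb{D}_{\rho'}$, which contains the chosen $\lambda$, with exponential rate $\rho'/|\lambda| > 1$. The ``in particular'' assertion, where $U_3$ itself is a disk $\mathbb{D}_\rho$, is then immediate, since every admissible $\lambda \in U_3 \setminus \{0\}$ automatically lies in $\mathbb{D}_\rho$. I do not anticipate a genuine obstacle here---the real technical work is already encapsulated in Lemmas~\ref{zftoLDC}, \ref{bdcircle}, and \ref{SSMfromLDCandbdcircle}---the only mild care required is to shrink the radius slightly from $\rho$ to $\rho'$ so that $\partial \mathbb{D}_{\rho'}$ itself sits inside the zero-free region $U_3 \setminus \{0\}$, which is where the uniform bound from Lemma~\ref{bdcircle} is available.
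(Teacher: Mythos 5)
Your proof is correct and follows essentially the same route as the paper: specialize to a univariate function in $\lambda$, invoke Lemma~\ref{zftoLDC} (together with Lemma~\ref{PLDCtoLDC}) for LDC, use Remark~\ref{remark-bound-circle} (via condition~1 of Lemma~\ref{bdcircle}) for the uniform bound on a circle, and conclude with Lemma~\ref{SSMfromLDCandbdcircle}. Your explicit shrinking to $\rho'$ with $|\lambda| < \rho' < \rho$ so that $\partial\mathbb{D}_{\rho'}$ sits inside $U_3$ is a small cleanup of the paper's wording, which implicitly uses the same maneuver.
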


\begin{proof}
For any fixed $\beta\in U_1$ and $\gamma \in U_2$, the 2-spin system defined on $\mathcal{G}$ with fixed $\beta$, $\gamma$ and a variable $\lambda\in U_3$ is zero-free on $U_3\backslash\{0\}$.
Hence, it satisfies LDC by Lemma \ref{zftoLDC},  and  for any circle $\partial \mathbb{D}_\rho\subseteq U_3$ it has a uniform bound on $\partial \mathbb{D}_\rho$  by Remark \ref{remark-bound-circle}. 
 For some arbitrary $G=(V,E)\in\mathcal{G}$,
 {some} feasible partial configurations $\sigma_{\Lambda_1}$ and $\tau_{\Lambda_2}$, and some vertex $v$ proper to $\sigma_{\Lambda_1}$ and $\tau_{\Lambda_2}$, consider the rational functions $P(\lambda)=P_{G,v}^{\sigma_{\Lambda_1}}(\lambda)$ and $Q(\lambda)=P_{G,v}^{\tau_{\Lambda_2}}(\lambda)$. By LDC, these two functions are analytic and the coefficients of the first $d_G(v,\{\sigma_{\Lambda_1}\neq\tau_{\Lambda_2}\})$ terms in their Taylor series are equal. 
    Also, by a uniform bound on  the circle $\partial{\mathbb{D}_\rho}$, there exists $M>0$ such that 
 $|P(\lambda)|\le M$ and $|Q(\lambda)|\le M$ for all $\lambda\in \partial \mathbb{D}_{\rho}$, and then the result follows directly from Lemma~\ref{lem:bound strip}.
\end{proof}


By Lemma~\ref{zftossmdisk}, when the zero-free region of $\lambda$ is a disk, the 2-spin system exhibits SSM for any $\lambda$ in the disk. 
 Using Riemann mapping theorem,  we are able to map any region $U$ containing $0$ to the unit disk $\mathbb{D}_1$ with $0$ as a fixed point. 
 Finally, we can drop the restriction of $\lambda \in \mathbb{D}_\rho \subseteq U_3$ for some disk $\mathbb{D}_\rho$ as in the statement of  Lemma~\ref{zftossmdisk}, and get the following result: zero-freeness implies SSM for any complex neighborhood of $\lambda=0$.

 \begin{theorem}\label{mainany} 
Let $\mathcal{G}$ be
a family of graphs closed under {\rm SAW} tree constructions, 
$U_1$ and $U_2$ be quasi-neighborhoods of some $\beta_0, \gamma_0 \in \mathbb{C}$ respectively, and 
$U_3$ be a neighborhood of $0$.
Denote $((U_1\times U_2) \backslash\{(0, 0)\})\times (U_3\backslash\{0\})$ by $\mathbf U$. 
Suppose that the partition function of the 2-spin systems defined on $\mathcal{G}$ is zero-free on $\mathbf{U}$, and  one of the two conditions of Lemma~\ref{bdcircle} holds.
Then, for any $(\beta, \gamma, \lambda) \in \bf U$, the corresponding 2-spin system exhibits SSM. 
 \end{theorem}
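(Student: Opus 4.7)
The plan is to reduce Theorem~\ref{mainany} to the disk case already handled by Lemma~\ref{zftossmdisk}, using a conformal change of variables coming from the Riemann mapping theorem so as to transport both the LDC property (Lemma~\ref{zftoLDC}, Lemma~\ref{PLDCtoLDC}) and the uniform bound on a circle (Lemma~\ref{bdcircle}) from $U_3$ to the unit disk $\mathbb{D}_1$.

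Fix an arbitrary target point $(\beta,\gamma,\lambda^{*}) \in \mathbf{U}$. Since $U_3$ is a simply connected neighborhood of $0$, I would first pick a bounded simply connected open subset $V \subseteq U_3$ with $0, \lambda^{*} \in V$ (this is needed only to sidestep the degenerate case $U_3 = \mathbb{C}$). Theorem~\ref{riemann} then supplies a conformal bijection $\phi \colon \mathbb{D}_1 \to V$ with $\phi(0) = 0$. For each graph $G \in \mathcal{G}$, each feasible partial configuration $\sigma_\Lambda$, and each vertex $v$ proper to $\sigma_\Lambda$, I would define the pulled-back rational function
\[
\tilde P^{\sigma_\Lambda}_{G,v}(z) \;:=\; P^{\sigma_\Lambda}_{G,v}\bigl(\beta,\gamma,\phi(z)\bigr), \qquad z \in \mathbb{D}_1,
\]
which, thanks to zero-freeness of the partition function on $\mathbf{U}$ together with the convention $P^{\sigma_\Lambda}_{G,v}(\beta,\gamma,0) = 0$, is holomorphic on the whole of $\mathbb{D}_1$.

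The key observation is that composition with $\phi$ preserves LDC: because $\phi(0) = 0$, its Taylor series begins $\phi(z) = c_1 z + c_2 z^2 + \cdots$, so if the Taylor coefficients of $P^{\sigma_{\Lambda_1}}_{G,v}$ and $P^{\tau_{\Lambda_2}}_{G,v}$ at $\lambda = 0$ agree through order $d-1$ with $d = d_G(v,\sigma_{\Lambda_1} \neq \tau_{\Lambda_2})$ (which they do by Lemmas~\ref{zftoLDC} and~\ref{PLDCtoLDC}), then the Taylor coefficients of $\tilde P^{\sigma_{\Lambda_1}}_{G,v}$ and $\tilde P^{\tau_{\Lambda_2}}_{G,v}$ at $z = 0$ also agree through order $d-1$. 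For the circular bound, I would fix any $r \in (0,1)$ with $|\phi^{-1}(\lambda^{*})| < r$; the set $\{\beta\} \times \{\gamma\} \times \phi(\partial \mathbb{D}_r)$ is a compact subset of $\mathbf{U}$ (since $\phi$ is injective and $\phi(0)=0$, the image avoids $0$), so Lemma~\ref{bdcircle} furnishes a constant $M$ with $|P^{\sigma_\Lambda}_{G,v}(\beta,\gamma,\lambda)| \le M$ on $\phi(\partial \mathbb{D}_r)$ uniformly in the triple $(G, \sigma_\Lambda, v)$, which translates to $|\tilde P^{\sigma_\Lambda}_{G,v}(z)| \le M$ on $\partial \mathbb{D}_r$.

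With LDC and a circular bound in hand for the family $\{\tilde P^{\sigma_\Lambda}_{G,v}\}$ on $\mathbb{D}_1$, Lemma~\ref{lem:bound strip} applied at $z^{*} := \phi^{-1}(\lambda^{*}) \in \mathbb{D}_r$ yields
\[
\bigl|P^{\sigma_{\Lambda_1}}_{G,v}(\beta,\gamma,\lambda^{*}) - P^{\tau_{\Lambda_2}}_{G,v}(\beta,\gamma,\lambda^{*})\bigr| \;=\; \bigl|\tilde P^{\sigma_{\Lambda_1}}_{G,v}(z^{*}) - \tilde P^{\tau_{\Lambda_2}}_{G,v}(z^{*})\bigr| \;\le\; \frac{2M}{r(\kappa - 1)\,\kappa^{d-1}},
\]
with $\kappa := r/|z^{*}| > 1$, which is exactly SSM at $(\beta,\gamma,\lambda^{*})$ with exponential rate $\kappa$. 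The main obstacle is really just ensuring that the conformal pullback respects the algebraic structure on which the disk-case proof rests, and this is exactly what the normalization $\phi(0) = 0$ guarantees: LDC is anchored at $\lambda = 0$, and the Riemann mapping theorem (Theorem~\ref{riemann}) lets us prescribe precisely this fixed point while conformally straightening $U_3$ into $\mathbb{D}_1$. Everything else, including the restriction to a bounded subregion $V$, is mild bookkeeping.
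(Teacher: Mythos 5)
Your proof is correct and follows the paper's route: pull back via a Riemann map $\phi$ with $\phi(0)=0$, note that this normalization preserves the agreement of the first $d$ Taylor coefficients, obtain a uniform bound on a circle, and close with Lemma~\ref{lem:bound strip}. One noteworthy simplification relative to the paper: you invoke Lemma~\ref{bdcircle} directly on the compact set $\{\beta\}\times\{\gamma\}\times\phi(\partial\mathbb{D}_r)\subseteq\mathbf{U}$ to get the uniform bound, whereas the paper re-runs the Montel normal-family argument on the composed functions $g^{\sigma_\Lambda}_{G,v}=P^{\sigma_\Lambda}_{G,v}(\beta,\gamma,f(\cdot))$. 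Your observation that the conformal image of a circle remains a compact subset of $\mathbf{U}$ lets you reuse the existing uniform-bound lemma rather than re-proving normality, which is cleaner. (A small caveat, inherited from the statement of Lemma~\ref{bdcircle}: with $\lambda_0=0$ the anchor point $(\beta_0,\gamma_0,\lambda_0)$ lies outside $\mathbf{U}$ and gives $P=0$, so to keep the normal-family step sound one should take any $\lambda_0'>0$ in $U_3$ as the anchor; the paper's own proof of this theorem accomplishes this by choosing $\lambda_0$ with $f(\lambda_0)>0$.) Your additional step of shrinking $U_3$ to a bounded subregion $V$ containing $0$ and $\lambda^*$ so that the Riemann mapping theorem applies is correct bookkeeping that the paper leaves implicit.
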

\begin{proof}
    By Riemann mapping theorem (Theorem \ref{riemann}), there exists a conformal map $f:\mathbb{D}_1\to U_3$ with $f(0)=0$. 
    
    For a rational function $P^{\sigma_\Lambda}_{G,v}(\beta, \gamma, \lambda)$ where $v$ is proper to the feasible partial configuration $\sigma_\Lambda$,
    we consider the composition $g_{G,v}^{\sigma_{\Lambda}}(\beta, \gamma, \lambda)=P_{G,v}^{\sigma_\Lambda}(\beta, \gamma, f(\lambda))$.
    Clearly,  $g_{G,v}^{\sigma_{\Lambda}}$ is holomorphic  on ${\bf U'}=(U_1\times U_2 \backslash\{(0, 0)\})\times(\mathbb{D}_1\backslash\{0\})$. 
    
    Since the partition function of the 2-spin system is zero-free $\mathbf{U}$, 
    for any $G=(V,E)\in\mathcal{G}$, any feasible partial configuration $\sigma_{\Lambda}$, and any vertex $v$ proper to $\sigma_\Lambda$, the functions    $g_{G,v}^{\sigma_{\Lambda}}(\beta, \gamma, \lambda)$ avoid the values $0$ and $1$ on $\mathbf{U}'$. Still by  Montel's theorem (Theorem \ref{thm:montel}), these functions  form a normal family. 
   Then, 
   similar to the proof of Lemma~\ref{bdcircle}, 
   we deduce that for any $\{\beta_1\}\times \{\gamma_1\}\times \partial\mathbb{D}_{\rho}\subseteq {\bf U'}$, we can get a uniform bound for  $g^{\sigma_\Lambda}_{G,v, \beta_1, \gamma_1}(\lambda)=g^{\sigma_\Lambda}_{G,v}(\beta_1, \gamma_1, \lambda)$  on the circle $\partial\mathbb{D}_\rho$.

    Since $f(0)=0$, we have $f(\lambda)=\lambda h(\lambda)$ for some $h(\lambda)$ analytic near $0$.
    Also, for any fixed $\beta\in U_1$ and $\gamma\in U_2$, 
    same as the proof of Lemma~\ref{zftoLDC}, we have 
    \[\begin{aligned}
        g^{\sigma_{\Lambda_1}}_{G,\beta, \gamma,v}(\lambda)-g^{\tau_{\Lambda_2}}_{G, \beta, \gamma,v}(\lambda) 
        & =f(\lambda)^{d_{G}(v,\sigma_{\Lambda_1}\neq\tau_{\Lambda_2})+1} H(f(\lambda)) \\
        & =(\lambda h(\lambda))^{d_{G}(v,\sigma_{\Lambda_1}\neq\tau_{\Lambda_2})+1}\Tilde{H}(\lambda)
    \end{aligned}\]
    where $H(f(\lambda))$ and $\Tilde{H}(\lambda)$ are  analytic near $0$.
    Thus, the coefficients of the term $\lambda^k$ for $0\leq k \leq d_G(u,v)$ in the Taylor series  of $g^{\sigma_{\Lambda_1}}_{G,\beta, \gamma,v}(\lambda)$ and $g^{\tau_{\Lambda_2}}_{G, \beta, \gamma,v}(\lambda)$ near $0$ are the same.
    Then the result follows by Lemma \ref{lem:bound strip} with $P(z)=g^{\sigma_{\Lambda_1}}_{G,\beta, \gamma,v}(z)$ and $Q(z)=g^{\tau_{\Lambda_2}}_{G, \beta, \gamma,v}(z)$. 
\end{proof}

Similarly, we have the following SSM results for $\beta$ and $\gamma$. 

\begin{theorem}[SSM for $\beta$]\label{thm:beta}
   Let $\mathcal{G}$ be
a family of graphs closed under {\rm SAW} tree constructions, $U_2$ and $U_3$ be quasi-neighborhoods of some complex $\gamma_0, \lambda_0\neq 0$ respectively where $0\notin U_2$, and $U_1$ be a neighborhood of $1/U_2$. 
Denote  $(U_1\times U_2 \backslash\{(0, 0)\})\times (U_3\backslash\{0\})$ by $\mathbf U$.
Suppose that the partition function of the 2-spin systems defined on $\mathcal{G}$ is zero-free on $\mathbf{U}$, and one of the two conditions of Lemma~\ref{bdcircle} holds. 
Then, for any $(\beta, \gamma, \lambda) \in \mathbf U$, the corresponding 2-spin system exhibits SSM. 

Moreover, if $\mathcal{G}$ is a family of graphs with bounded degree, then $\gamma_0$ and $\lambda_0$ can be relaxed to be any non-zero complex numbers. 
\end{theorem}

\begin{theorem}[SSM for $\gamma$]\label{thm:gamma}
   Let $\mathcal{G}$ be
a family of graphs closed under {\rm SAW} tree constructions, $U_1$ and $U_3$ be quasi-neighborhoods of some complex $\beta_0, \lambda_0\neq 0$ respectively where $0\notin U_1$, and $U_2$ be a neighborhood of $1/U_1$. 
Denote  $(U_1\times U_2 \backslash\{(0, 0)\})\times (U_3\backslash\{0\})$ by $\mathbf U$. 
Suppose that the partition function of the 2-spin systems defined on $\mathcal{G}$ is zero-free on $\mathbf{U}$, and one of the two conditions of Lemma~\ref{bdcircle} holds.
Then, for any $(\beta, \gamma, \lambda) \in \mathbf U$, the corresponding 2-spin system exhibits SSM. 

Moreover, if $\mathcal{G}$ is a family of graphs with bounded degree, then $\beta_0$ and $\lambda_0$ can be relaxed to be any non-zero complex numbers. 
\end{theorem}

Combining Theorems~\ref{mainany}, \ref{thm:beta} and \ref{thm:gamma}, we get our main Theorem~\ref{thm:main}.

In addition, for the Ising model, we have the following SSM result from zero-free regions of $\beta$.

\begin{theorem}[Theorem~\ref{mainIsing} restated, SSM for $\beta$ in the Ising model]\label{thm:ssmising}
    Let $\mathcal{G}$ be
a family of graphs closed under {\rm SAW} tree constructions, $\lambda_0$ be a nonzero complex number, and $U$ be a neighborhood of $1$ (or $-1$).
Suppose the partition function 
$Z^{\sigma_\Lambda}_{G,\lambda_0}(\beta)=Z^{\sigma_\Lambda}_{G}(\beta, \beta, \lambda_0)$
defined on $\mathcal{G}$  with the external field $\lambda_0$ is zero-free on $\beta\in {U}$.  
Then, for any $\beta  \in  U$, the corresponding Ising model exhibits SSM. 
\end{theorem}
\begin{proof}
For any graph $G\in \mathcal{G}$, any feasible partial configuration $\sigma_\Lambda$ and any proper vertex $v$, 
   one can easily check that $$P^{\sigma_\Lambda}_{G,v}(1, 1, \lambda_0)=\frac{\lambda_0}{\lambda_0+1}, \text{ and } P^{\sigma_\Lambda}_{G,v}(-1, -1, \lambda_0)=\frac{\lambda_0}{\lambda_0+1} \text{ or } \frac{-\lambda_0}{-\lambda_0+1} .$$ 
   Note that the partition function $Z^{\sigma_\Lambda}_{G}(\beta, \beta, \lambda_0)$ is zero-free on $\beta\in U$ where $U$ is a neighborhood of $1$ (or $-1$).  
    Thus,
    for the family $\{P^{\sigma_\Lambda}_{G,v}\}$ of all rational functions, 
    we have $|P^{\sigma_\Lambda}_{G,v}(1, 1, \lambda_0)|<\infty$ if $1\in U$ and $|P^{\sigma_\Lambda}_{G,v}(-1, -1, \lambda_0)|<\infty$ if $-1\in U$. 
    Following the proof of Lemma~\ref{bdcircle}, we can  prove a uniform bound for $\{P^{\sigma_\Lambda}_{G,v}\}$ on any compact set in $U$ 
    without requiring $\mathcal{G}$ to be a family of graphs of bounded degree. 
    The other parts of the proof are  the same as the proofs of Lemma~\ref{zftossmdisk} and Theorem~\ref{mainany}. 
\end{proof}

\section{The implication for non-uniform external fields}\label{sec:LY}
In this section, 
we extend our result to handle
 2-spin systems with non-uniform external fields, and apply it to get new spatial mixing results for the non-uniform ferromagnetic Ising model from corollaries of  Lee-Yang circle theorem on graphs with pinned vertices.

\begin{definition}\label{def:non-uniform}
    A 2-spin system defined on a graph $G=(V, E)$ is said to has non-uniform external fields if vertices $v\in V$ may have different external fields $\lambda_v$. 
The partition function of such a system is defined to be
\[Z_G(\beta,\gamma,\boldsymbol{\lambda}):=\sum\limits_{\sigma:V\to\{+,-\}}w(\sigma)\]
with $w(\sigma)=\beta^{m_+(\sigma)}\gamma^{m_-(\sigma)}\prod\limits_{v\in V, \sigma(v)=+}\lambda_v$. 
\end{definition}
The notations and definitions for 2-spin systems with a uniform external field   can be extended to the setting with non-uniform external fields as above. In particular, the Christoffel-Darboux type identity (Theorem~\ref{CDgeneral})  is also valid for 2-spin systems with non-uniform external fields by replacing $\lambda^{d(u,v)+1}$ with $\prod_{w\in p_{uv}}\lambda_w$ where $p_{uv}$ is the unique path connecting $u$ and $v$ in a tree. 
The SAW tree construction also works where all vertices in $V_{\text{SAW}}$ mapped from a single vertex $v\in V$ have the same external field as the external field   $\lambda_v$ of the original  $v$, and 
Theorem~\ref{prop:SAW} still {holds} for the multivariate rational functions $P_{G, v}^{\sigma_\Lambda}(\beta, \gamma,  \pmb \lambda)$.
For non-uniform external fields $\pmb \lambda=(\lambda_v)_{v\in V}$, we say they have the same argument $\theta$ if $\arg(\lambda_v)=\theta$ for all $v\in V$. 

\begin{theorem}
\label{lem:mainmulti}
    Fix $\beta, \gamma, \lambda^\ast >0$ and a family of  graphs $\mathcal{G}$ closed under {\rm SAW} tree constructions. 
    Suppose that for any $G\in \mathcal{G}$ and any feasible partial configuration $\sigma_\Lambda$, the partition function $Z^{\sigma_\Lambda}_{G, \beta, \gamma}(\pmb \lambda)$ of the 2-spin system with non-uniform external fields $\pmb \lambda$ having the same argument $\theta$ and satisfying $0<|\lambda_v|<\lambda^*$ for all $v\in V(G)$ is always non-zero.
    Then, the corresponding 2-spin system 
    specified by $\beta, \gamma$ and $\pmb \lambda$ exhibits SSM. 
\end{theorem}
\begin{proof}
   Since the SAW tree construction still hold for  2-spin systems with non-uniform external fields, similar to Lemma~\ref{lem:tree-enough}, we only need to consider  2-spin systems on trees. 

   For a graph $G$, a feasible partial configuration $\sigma_\Lambda$ together with a vertex $v$ proper to it, and some fixed values $\pmb \lambda$ of external fields satisfying $\arg(\lambda_v)=\theta$ for some $\theta$ and $0<|\lambda_v|<\lambda^*$ for all $v\in V(G)$ and $\sup_{v\in V(G)}|\lambda_v|<\lambda^\ast$, we define the function $p_{G,v}^{\sigma_{\Lambda}}(z)=P_{G,v}^{\sigma_{\Lambda}}(z\boldsymbol{\lambda})$ for $z\in \mathbb{D}_{\rho}$ where $\rho=\frac{\lambda^*}{\sup|\lambda_v|}>1.$
   Note that $p_{G,v}^{\sigma_{\Lambda}}(z)$ is a univariate function on the complex variable $z$. 
Since the partition function
$Z^{\sigma_\Lambda}_{G, \beta, \gamma}(\pmb \lambda)$ is zero-free, the function $p_{G,v}^{\sigma_{\Lambda}}(z)$ avoids $0$ and $1$ on $\mathbb{D}_{\rho}$.

Fix the values of $\pmb \lambda$.
Still by Montel's theorem (Theorem \ref{thm:montel}), for any graph $G$, any  feasible partial configuration $\sigma_\Lambda$, and any vertex $v$ proper to it, the functions $p_{G,v}^{\sigma_{\Lambda}}(z)$ form a normal family. 
Also, all functions in the family take a value in the interval $(0,1)$ when $z=e^{-i\theta}$.
Still similar to the proof Lemma~\ref{bdcircle}, we can show that for any  $0<\rho_0<\rho$, there exists a uniform bound for the family of functions  $p_{G,v}^{\sigma_{\Lambda}}(z)$ on the circle $\partial\mathbb{D}_{\rho_0}$.

    Also, by the Christoffel-Darboux type identity (Theorem~\ref{CDgeneral})  for 2-spin systems with non-uniform external fields in which $\lambda^{d(u,v)+1}$ is replace by $\prod_{w\in p_{uv}}\lambda_w$ where $p_{uv}$ is the unique path connecting $u$ and $v$, we have

    \[\begin{aligned}
        p_{G,v}^{\sigma_\Lambda}(z)-p_{G,v}^{\sigma_\Lambda, u^+}(z) & =P_{G,v}^{\sigma_\Lambda}(z\boldsymbol{\lambda})-P_{G,v}^{\sigma_\Lambda, u^+}(z\boldsymbol{\lambda}) \\ & =\frac{Z_{G,u}^{\sigma_\Lambda,+}Z_{G,v}^{\sigma_\Lambda,+}-Z_{G,v,u}^{\sigma_\Lambda,+,+}Z_{G}^{\sigma_\Lambda}}{Z_{G}^{\sigma_\Lambda}Z_{G,u}^{\sigma_\Lambda,+}}(z\boldsymbol{\lambda})\\
        &=\prod_{w\in p_{uv}}(z\lambda_w)H(z)\\
 &=z^{d_G(u,v)+1}H'(z),
    \end{aligned}\]
for some functions $H(z)$ and $H'(z)$ analytic near $0$. 
The same form of equation holds for $P_{G,v}^{\sigma_\Lambda}(z)-P_{G,v}^{\sigma_\Lambda, u^-}(z)$.
Similarly, we can define point-to-point LDC and LDC for functions $p_{G,v}^{\sigma_\Lambda}(z)$ near $z=0$, and show that point-to-point LDC implies LDC. 
Thus, the functions $p_{G,v}^{\sigma_\Lambda}(z)$ satisfy LDC. 
Combine the uniform bound on a circle and LDC, and we get SSM by Lemma~\ref{lem:bound strip}.
\end{proof}

The following is the celebrated Lee-Yang circle theorem. 

\begin{theorem}[Lee-Yang circle theorem]\label{LY}
    Fix  $\beta>1.$ 
    For any graph $G=(V, E)$, the partition function $Z_G(\boldsymbol{\lambda})$ of the ferromagnetic Ising model specified by $\beta$ and non-uniform external fields $\pmb \lambda=(\lambda_v)_{v\in V}$ is nonzero if we have $|\lambda_v|>1$ for all $v\in V$, or $|\lambda_v|<1$ for all $v\in V$. 
\end{theorem}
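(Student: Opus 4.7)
The plan is to prove this multivariate Lee-Yang theorem via the classical Asano-Ruelle contraction technique. By the spin-flip involution $\sigma\mapsto-\sigma$ on Ising configurations, the partition function satisfies $Z_G(\boldsymbol{\lambda})=\bigl(\prod_{v\in V}\lambda_v\bigr)\,Z_G(1/\boldsymbol{\lambda})$, since flipping all spins preserves each edge factor $\beta^{[\sigma(u)=\sigma(v)]}$ and replaces each $\lambda_v^{[\sigma(v)=+]}$ by $\lambda_v\cdot\lambda_v^{-[\sigma(v)=+]}$. Hence it suffices to handle the case $|\lambda_v|<1$ for every $v\in V$; the case $|\lambda_v|>1$ for every $v$ then follows from this identity, since all $\lambda_v$ are nonzero in that regime and $Z_G(1/\boldsymbol{\lambda})\neq 0$ by the first case.

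For the base step, consider a single edge $e=\{u,v\}$. The edge polynomial is $Z_e(\lambda_u,\lambda_v)=\beta+\lambda_u+\lambda_v+\beta\lambda_u\lambda_v$, and solving $Z_e=0$ for $\lambda_u$ yields $\lambda_u=-(\beta+\lambda_v)/(1+\beta\lambda_v)$. A direct modulus computation using $\beta>1$ gives $|\lambda_u|>1$ whenever $|\lambda_v|<1$, so $Z_e\neq 0$ on $\{|\lambda_u|,|\lambda_v|<1\}$. For a general graph $G=(V,E)$, introduce for each incidence $(v,e)$ with $v\in e$ an independent auxiliary variable $\lambda_{v,e}$ and form the product polynomial $\widetilde{Z}(\{\lambda_{v,e}\}):=\prod_{e=\{u,v\}\in E}Z_e(\lambda_{u,e},\lambda_{v,e})$. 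Being a product of Lee-Yang polynomials in pairwise disjoint variable sets, $\widetilde{Z}$ is Lee-Yang on the full polydisc. I would then invoke Asano's contraction lemma: if a polynomial $P(z_1,z_2)=A+Bz_1+Cz_2+Dz_1z_2$ (at most degree one in each variable) has no zeros on $\{|z_1|,|z_2|<1\}$, then $A+Dz$ has no zeros on $\{|z|<1\}$. Since $\widetilde{Z}$ is multilinear in each $\lambda_{v,e}$ and this multilinearity is preserved by each contraction, at every vertex $v$ one iteratively contracts the $d_v$ incidence variables $\lambda_{v,e_1},\ldots,\lambda_{v,e_{d_v}}$ into a single variable $\lambda_v$, preserving Lee-Yangness at each step.

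The main step requiring care is verifying that the fully contracted polynomial actually equals $Z_G(\boldsymbol{\lambda})$. This is a coefficient-matching calculation: the Asano operation $A+Bz_1+Cz_2+Dz_1z_2\mapsto A+Dz$ retains precisely the monomials in which the two contracted incidences at $v$ carry the same spin (the both-$-$ constant term $A$ and the both-$+$ term $Dz$), which is exactly the consistency needed to collapse $\widetilde{Z}$ (with independent spins at each incidence) into $Z_G$ (with a single spin per vertex); explicit check on, e.g., a path of length two makes this identification transparent. Once the identification is carried out, preservation of Lee-Yangness through the iterated contractions yields $Z_G(\boldsymbol{\lambda})\neq 0$ on $\{|\lambda_v|<1\text{ for all }v\}$, and hence, by the opening reduction, also on $\{|\lambda_v|>1\text{ for all }v\}$. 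The remaining delicate input is Asano's lemma itself, whose standard proof applies Hurwitz's theorem to the one-variable sections $P(z_1,\cdot)$ as $z_1$ traverses the unit circle together with a normal-families argument; this is classical and I would invoke it as a black box rather than reprove it.
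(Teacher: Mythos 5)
The paper does not actually prove Theorem~\ref{LY}: it is stated as the classical Lee--Yang circle theorem and cited to the original Lee--Yang papers (\cite{LeeYang2}) and the subsequent Asano/Ruelle/Newman generalizations, without reproof. So there is no ``paper proof'' to compare against. What you have written is the standard Asano--Ruelle contraction proof of the (multivariate) Lee--Yang theorem for the ferromagnetic Ising model, and it is essentially correct. The spin-flip identity $Z_G(\boldsymbol\lambda)=\bigl(\prod_v\lambda_v\bigr)Z_G(1/\boldsymbol\lambda)$ is right, since $m_+(\sigma)+m_-(\sigma)$ is invariant under $\sigma\mapsto\bar\sigma$ and $\beta=\gamma$ for Ising. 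Your base-edge computation is right: from $Z_e=\beta+\lambda_u+\lambda_v+\beta\lambda_u\lambda_v=0$ one gets $\lvert\beta+\lambda_v\rvert^2-\lvert1+\beta\lambda_v\rvert^2=(\beta^2-1)(1-\lvert\lambda_v\rvert^2)>0$ for $\beta>1$, $\lvert\lambda_v\rvert<1$, so $\lvert\lambda_u\rvert>1$, and the edge polynomial is Lee--Yang on the open bidisc. The identification of the fully contracted $\widetilde Z$ with $Z_G$ is the crux: at each vertex the Asano operation $A+Bz_1+Cz_2+Dz_1z_2\mapsto A+Dz$ retains exactly the configurations in which all incidences at $v$ agree, and iterating over the $d_v$ incidence variables leaves one factor $\lambda_v$ when $\sigma(v)=+$ and a constant when $\sigma(v)=-$, as you say; this matches $Z_G$ term by term. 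The one place to be a little careful is the boundary/closure convention in Asano's lemma (the usual statement is for zeros avoiding closed exteriors, and one converts to the open-polydisc form by the inversion $z\mapsto1/z$, with mild bookkeeping at $0$ and $\infty$), but since you invoke the lemma as a black box, that is fine. In short: your proof is a valid, self-contained route to a result the paper merely cites, and it is the standard one.

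Two minor points of wording worth tightening. First, make explicit that the theorem's hypothesis is read as ``($\lvert\lambda_v\rvert>1$ for every $v$) or ($\lvert\lambda_v\rvert<1$ for every $v$)''; the other parsing (``each $\lambda_v$ is off the unit circle'') is false already on a single edge, as your own edge computation shows (there are zeros with $\lvert\lambda_u\rvert>1$, $\lvert\lambda_v\rvert<1$). You implicitly adopt the correct reading but it is worth saying so. Second, when you apply the spin-flip reduction to pass from the $\lvert\lambda_v\rvert<1$ case to the $\lvert\lambda_v\rvert>1$ case, you correctly note that $\prod_v\lambda_v\neq0$, which is needed for the identity to transfer nonvanishing; it would be slightly cleaner to state this at the same place you state the identity.
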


Note that Lee-Yang circle theorem gives zero-freeness  for the partition function of the Ising model regarding the external fields $\pmb\lambda$. 
 Thus, we need to apply Theorem~\ref{lem:mainmulti}, instead of Theorem~\ref{thm:ssmising} which requires zero-freeness regarding the edge activity $\beta$\footnote{Zeros of the partition function of the Ising model regarding $\beta$ are  known as Fisher zeros \cite{fisher1965nature}.}. 
 However, Theorem~\ref{lem:mainmulti} requires the zero-freeness condition holds for 2-spin systems on graphs even conditioning on some partial configurations.
Thus, Lee-Yang circle theorem cannot be directly adapted to our setting since it holds only for graphs with no pinned vertices. 
In order to obtain SSM from zero-freeness, 
we first give the following corollary
of Lee-Yang circle theorem.

\begin{corollary}\label{cor:LY}
    Fix  $\beta>1$ and integer $d\geq 2$. 
    For any graph $G=(V, E)$ of degree at most $d$ and any feasible partial configuration $\sigma_\Lambda$, the partition function $Z^{\sigma_\Lambda}_G(\boldsymbol{\lambda})$ 
    is nonzero if we have $|\lambda_v|>\beta^d$ for all $v\in V$, or $0<|\lambda_v|<1/\beta^d$ for all $v\in V$. 
\end{corollary}
\begin{proof}
Consider the partition function $Z^+_{G, v}(\beta, \pmb \lambda)$ of the Ising model with non-uniform external fields on a graph $G=(V, E)$ with a vertex $v$ pinned to $+$. Note that
\begin{equation*}
\begin{aligned}
Z^+_{G, v}(\beta, \pmb \lambda)=\sum_{\sigma:V\backslash\{v\}\rightarrow \{+, -\}}\beta^{m(\sigma)}\lambda_v\prod\limits_{w\in V\backslash\{v\} \atop \sigma(w)=+}\lambda_w,
\end{aligned}
\end{equation*}
where $m(\sigma)$ is the number of $(+,+)$ and $(-,-)$ edges in $G$ under the configuration $\sigma$ on $V\backslash \{v\}$ together with $v$ pinned to $+$. Let $m'(\sigma)$ be the number of $(+,+)$ and $(-,-)$ edges in $G\backslash\{v\}$ under the configuration $\sigma$ on $V\backslash \{v\}$.
Then, $m(\sigma)=m'(\sigma)+|\{w\in N(v)\mid \sigma(w)=+\}|$ where $N(v)$ is the set of neighbors of $v$. Thus, we have 
\begin{equation*}
\begin{aligned}
Z^+_{G, v}(\beta, \pmb \lambda)&=\lambda_v\left(\sum_{\sigma:V\backslash\{v\}\rightarrow \{+, -\}}\beta^{m'(\sigma)+|\{w\in N(v)\mid \sigma(w)=+\}|}\prod\limits_{w\in V\backslash\{v\} \atop \sigma(w)=+}\lambda_w\right)\\
&=\lambda_v\left(\sum_{\sigma:V\backslash\{v\}\rightarrow \{+, -\}}\beta^{m'(\sigma)}\prod\limits_{w\in V\backslash (N(v)\cup\{v\}) \atop \sigma(w)=+}\lambda_w\prod\limits_{w\in N(v) \atop \sigma(w)=+}\beta\lambda_w\right)\\
&=\lambda_v  Z_{G\backslash\{v\}}(\beta, \pmb \lambda^{v^+}) 
\end{aligned}
\end{equation*}
where $Z_{G\backslash\{v\}}(\beta, \pmb \lambda^{v^+})$ is the partition function of the Ising model with non-uniform external fields $\pmb \lambda^{v^+}$ on the graph $G\backslash\{v\}$ obtained from $G$ by deleting $v$, and $\lambda^{v^+}_w=\lambda_w$ for $w\in V\backslash (N(v)\cup\{v\})$ and $\lambda^{v^+}_w=\beta\lambda_w$ for $w\in N(v)$. 
Similarly, we have $$Z^-_{G, v}(\beta, \pmb \lambda)= \beta^{|N(v)|}Z_{G\backslash\{v\}}(\beta, \pmb \lambda^{v^-})$$ where $\lambda^{v^-}_w=\lambda_w$ for $w\in V\backslash (N(v)\cup\{v\})$ and $\lambda^{v^-}_w=\lambda_w/\beta$ for $w\in N(v)$.

Then, for the partition function $Z_G^{\sigma_\Lambda}(\beta, \pmb \lambda)$ of the Ising model on a graph $G=(V, E)$ conditioning on a partial configuration $\sigma_\Lambda$, we can delete all vertices in $\Lambda$ from $G$, and  for every $w\in V\backslash \Lambda$ multiply the external field $\lambda_w$ by $\beta^{n_+(w)-n_-(w)}$  where $n_+(w)$ is the number of neighbors of $w\in \Lambda$  that are pinned to $+$ under $\sigma_\Lambda$ and $n_-(w)$ is the number of neighbors of $w\in \Lambda$  that are pinned to $-$ under $\sigma_\Lambda$.
We denote $\beta^{n_+(w)-n_-(w)}\lambda_w$ by $\lambda_w^{\sigma_\Lambda}$ and $(\lambda_w^{\sigma_\Lambda})_{w\in V\backslash \Lambda}$ by $\pmb \lambda^{\sigma_\Lambda}$.
Then, we have 
$$Z_G^{\sigma_\Lambda}(\beta, \pmb \lambda)=\beta^{m_1+m_2}Z_{G\backslash\Lambda}(\beta, \pmb \lambda^{\sigma_\Lambda})\prod_{w\in \Lambda \atop \sigma_\Lambda(w)=+}\lambda_w$$
where $m_1$ is the number of $(+, +)$ and $(-, -)$ edges in the subgraph of $G$ induced by $\Lambda$ conditioning on $\sigma_\Lambda$ and $m_2=\sum_{w\in \Lambda \atop \sigma_{\Lambda}(w)=-}|N(w)\cap (V\backslash \Lambda)|$.
Note that $\beta>1$ and $\lambda_v\neq 0$ for all $v\in V$. 
Also since the degree of vertices in $G$ is at most $d$, for every $w\in V\backslash\Lambda$, we have $$|\lambda_w^{\sigma_\Lambda}|=\beta^{n_+(w)-n_-(w)}|\lambda_w|\in [\beta^{-d}|\lambda_w|, \beta^d|\lambda_w|].$$
Clearly, $|\lambda_w^{\sigma_\Lambda}|>1$ if $|\lambda_w|> \beta^d$ and $|\lambda_w^{\sigma_\Lambda}|<1$ if $|\lambda_w|< \beta^{-d}$.
If for all $v\in V$,  $|\lambda_v|>\beta^d$ or $0<|\lambda_v|<1/\beta^d$, then $Z_{G\backslash\Lambda}(\beta, \pmb \lambda^{\sigma_\Lambda})\neq 0$ by Theorem~\ref{LY}. 
Thus, $Z_G^{\sigma_\Lambda}(\beta, \pmb \lambda)\neq 0$. 
\end{proof}

Combining Theorem~\ref{lem:mainmulti} and Corollary~\ref{cor:LY}, we get a new SSM result for the ferromagnetic Ising model with non-uniform external fields. 

\begin{corollary}
    Fix $\beta>1$ and integer $d\ge 2$. 
    The ferromagnetic Ising model on graphs $G$ of  degree at most $d$ specified by $\beta$ and non-uniform positive external fields $\pmb \lambda$ with the same argument $\theta$  exhibits SSM if  $\inf_{v\in V(G)} |\lambda_v|>\beta^d$ or $\sup_{v\in V(G)}|\lambda_v|<1/\beta^d$.
\end{corollary}
\begin{proof}
    We first consider the case $\sup_{v\in V(G)}\lambda_v<1/\beta^d$. 
    In this case, zero-freeness is guaranteed by Corollary \ref{cor:LY}, and then SSM follows by Theorem \ref{lem:mainmulti}. 

     For the case that $\inf_{v\in V(G)}\lambda_v>\beta^d$, notice that by switching the meanings of $+$ and $-$ spins,
     the \emph{spin reversal symmetry} holds:
     For any graph $G=(V, E)$ and 
any partial configuration $\sigma_\Lambda$,
$$Z_G^{\sigma_\Lambda}(\beta, \gamma,\pmb \lambda)=Z_G^{\overline{\sigma}_\Lambda}(\gamma, \beta, \pmb \lambda^{-1})\prod_{v\in V}\lambda_v $$
where $\pmb \lambda^{-1}=(\lambda_v^{-1})_{v\in V}$ and $\overline{\sigma}_\Lambda$ is the partial configuration on $\Lambda$ such that $\overline{\sigma}_\Lambda(w)=\overline{\sigma_\Lambda(w)}$ (i.e., the spin $\sigma_\Lambda(w)$ imposed to $w$ is flipped). 
   Thus, the Ising model with non-uniform external fields $\pmb\lambda$ can be transferred to the Ising model with $\pmb\lambda^{-1}$. 
   Since $\inf_{v\in V(G)}\lambda_v>\beta^d$, we have $0<\sup_{v\in V(G)}\lambda^{-1}_v<1/\beta^d$. 
    Thus, SSM also holds for the case that $\inf_{v\in V(G)}\lambda_v>\beta^d$.
\end{proof}

If we restrict the partial configuration $\sigma_\Lambda$ to include at most one vertex with the $+$ spin (or at most one vertex with the $-$ spin), then similar to Corollary~\ref{cor:LY}, we have the following zero-free results as a corollary of Lee-Yang Theorem.

\begin{corollary}\label{cor:LY2}
    Fix $\beta>1$. 
    For any graph $G=(V, E)$ and any feasible partial configuration $\sigma_\Lambda$ with at most one vertex $u\in \Lambda$ having $\sigma(u)=+$, the partition function $Z^{\sigma_\Lambda}_G(\boldsymbol{\lambda})$ 
    is nonzero if  for all $v\in V$,  $0<|\lambda_v|<1/\beta$.
    
   Symmetrically, for any graph $G$ and any feasible partial configuration $\sigma_\Lambda$ with at most one vertex $u\in \Lambda$ having $\sigma(u)=-$, the partition function $Z^{\sigma_\Lambda}_G(\boldsymbol{\lambda})$ 
    is nonzero  if  for all $v\in V$,  $|\lambda_v|>\beta$. 
\end{corollary}

We say a partial configuration $\sigma_\Lambda$ is all-plus 
 if $\sigma(v)=+$ for every $v\in \Lambda$, and all-minus if $\sigma(v)=-$ for every $v\in \Lambda$.
We define the following forms of spatial mixing.

\begin{definition}[Plus or minus spatial mixing (PSM/MSM)]\label{def:minus-mixing}
    Fix parameters $\beta, \gamma$, non-uniform $\pmb \lambda$, and 
    a family of graphs $\mathcal{G}$. 
The  corresponding 
2-spin system 
 defined on  $\mathcal{G}$ specified by $\beta, \gamma$ and non-uniform $\pmb{\lambda}$ is said to satisfy \emph{plus spatial mixing (PSM) (or respectively minus spatial mixing (MSM))} with exponential rate $r>1$ if there exists a constant $C$ such that for any $G=(V,E)\in\mathcal{G}$,
any feasible all-plus (or respectively all-minus) partial configurations $\sigma^{}_{\Lambda_1}$ and $\tau^{}_{\Lambda_2}$ where $\Lambda_1$ may be different with $\Lambda_2$, and any vertex $v$ proper to $\sigma^{}_{\Lambda_1}$ and $\tau^{}_{\Lambda_2}$, we have 
\[\left|P_{G,v}^{\sigma^{}_{\Lambda_1}}-P_{G,v}^{\tau^{}_{\Lambda_2}}\right|\leq Cr^{-d_G(v,\sigma^{}_{\Lambda_1}\neq \tau^{}_{\Lambda_2})}.\]
\end{definition}

\begin{theorem}\label{thm:msm}
    Fix $\beta>1$ and integer $d\ge 2$. 
    The ferromagnetic Ising model on all graphs $G$  specified by $\beta$ and non-uniform positive external fields $\pmb \lambda$ with the same argument $\theta$ exhibits PSM if $\inf_{v\in V(G)} |\lambda_v|>\beta$, and it exhibits MSM if $\sup_{v\in V(G)}|\lambda_v|<1/\beta$. 
       
\end{theorem}
\begin{proof}
    By symmetry, it suffices to prove MSM when $\sup_{v\in V(G)}|\lambda_v|<1/\beta$. 
    Consider the family of functions $\{P_{G,v}^{\tau_\Lambda}(z\pmb{\lambda})\}$ defined on $U=\{z\in \mathbb{C}\mid |z|<\frac{1}{\beta \sup_{v\in V(G)}|\lambda_v|}\}$, where $G$ runs over all graphs, $\tau_\Lambda$ runs over all all-minus partial configurations, and $v$ runs over all proper vertices. 

    Note that for any $z\in U$ and any $\lambda_v$, we have $|z\lambda_v|<1$. 
   Then, by Corollary \ref{cor:LY2}, 
   the function $P_{G,v}^{\tau_\Lambda}(z\pmb{\lambda})=\frac{Z_{G,v}^{\tau_\Lambda, +}(z\pmb{\lambda})}{Z_{G}^{\tau_\Lambda}(z\pmb{\lambda})}$ is nonzero since the configuration $\tau_\Lambda$ together with $v$ pinned to $+$ has exactly one vertex $v$ with the $+$ spin. 
Also, $P_{G,v}^{\tau_\Lambda}(z\pmb{\lambda})=1-\frac{Z_{G,v}^{\tau_\Lambda, -}(z\pmb{\lambda})}{Z_{G}^{\tau_\Lambda}(z\pmb{\lambda})}\neq 1$ since $Z_{G,v}^{\tau_\Lambda, -}(z\pmb{\lambda})\neq 0$.

Thus, all the functions $\{P_{G,v}^{\tau_\Lambda}(z\pmb{\lambda})\}$ avoid $0$ and $1$. They form a normal family by Montel's Theorem (Theorem \ref{thm:montel}). 
    Hence, there exists a uniform bound $M>0$ such that 
    $$|P_{G,v}^{\tau_\Lambda}(z\pmb{\lambda})|\le M, \quad \forall |z|\le \rho:=\frac{1}{2}\left(1+\frac{1}{\beta \sup_{v\in V(G)}|\lambda_v|}\right).$$
    Combining with the LDC property proved above, we can prove MSM for $P_{G,v}^{\tau_\Lambda}(z\pmb{\lambda})$ with $|z|\leq \rho$. 
    Since $\sup_{v\in V(G)}|\lambda_v|<1/\beta$, one can check that $\rho > 1$. 
    Thus,  we can take $z=1$. We are done with the proof. 
\end{proof}

\section{Concluding remarks and questions}\label{sec:conclude}

We prove that if the partition function of a 2-spin system is zero-free in a complex neighborhood of $\lambda=0$ or $\beta\gamma=1$ containing a positive point, then the corresponding 2-spin system exhibits SSM.
By a Christoffel-Darboux type identity, it seems that the implication from zero-free regions to SSM 
only works for  complex neighborhoods of $\lambda=0$ or $\beta\gamma=1$. 
A natural question is: \emph{what is the magic of these two special settings of parameters $\lambda=0$ and $\beta\gamma=1$ for which the exact computation of the partition function is polynomial-time solvable?}
In fact, so far all currently known zero-free regions (and even all SSM results and all FPTAS results) for  2-spin systems lie in some complex neighborhoods of $\lambda=0$ (up to a switch of the spins $+$ and $-$) or $\beta\gamma =1$. 
We are curious whether this is always true and we make the following conjectures. 

\begin{conjecture}
   Fix $\beta$ and $\gamma$ that are not both zero.
   Suppose that the partition function $Z^{\sigma_\Lambda}_{G}(\lambda)$ of the 2-spin system is zero-free on a complex region $U$. 
   Then, there exists a complex neighborhood $U'$ of $0$ such that $U\subseteq U'$ and $Z^{\sigma_\Lambda}_{G}(\lambda)$ is zero-free on $U'$. 
\end{conjecture}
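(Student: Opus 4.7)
The plan is to attempt a contradiction argument based on compactness and the tree recurrence, but the critical difficulty will be producing a uniform bound in the absence of an anchor point. Specifically, suppose the conjecture fails: $Z^{\sigma_\Lambda}_G(\lambda)$ is zero-free on some region $U$ (uniformly over all graphs $G$ and feasible $\sigma_\Lambda$), yet no complex neighborhood $U'$ of $0$ with $U\subseteq U'$ remains zero-free on $U'\setminus\{0\}$. Then by shrinking such neighborhoods I can extract a sequence $(G_n,\sigma_{\Lambda_n},\lambda_n)$ with $\lambda_n\to 0$ and $Z^{\sigma_{\Lambda_n}}_{G_n}(\lambda_n)=0$. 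By Theorem~\ref{prop:SAW}, each $G_n$ can be replaced by its SAW tree, so the obstruction already lives on trees.

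The intended next step is to combine two tools. First, the tree recurrence $R_v(\lambda)=\lambda\prod_i(\beta R_{v_i}(\lambda)+1)/(R_{v_i}(\lambda)+\gamma)$ for the ratio $R_v=Z^+_v/Z^-_v$, together with the Christoffel--Darboux identity of Theorem~\ref{CDgeneral}, to relate the hypothetical zeros near $\lambda=0$ with the zero-free behavior imposed on $U$. Second, a normal-family argument mirroring the one used in Lemma~\ref{bdcircle} and Theorem~\ref{mainany}, aiming to force local-uniform convergence of a subfamily of the rational functions $P^{\sigma_{\Lambda_n}}_{G_n,v}$ to a limit function on $U$ and then to apply Hurwitz's theorem to transport a zero from near $0$ into $U$ itself, contradicting zero-freeness there.

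The main obstacle, and the reason this is stated as a conjecture rather than a theorem, is precisely that without an anchor point inside $U$ at which the rational functions $P^{\sigma_\Lambda}_{G,v}$ are a priori uniformly bounded, Montel's theorem supplies no useful compactness: the family may converge locally uniformly to $\infty$, and the whole normal-family step collapses. The situation is strictly worse than in Lemma~\ref{bdcircle}, where a positive real point in the parameter range furnished a probabilistic bound in $(0,1)$. Here $U$ is completely arbitrary, and I see no way to rule out the possibility that along a subsequence $|P^{\sigma_{\Lambda_n}}_{G_n,v}|\to\infty$ throughout $U$.

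The hard part will therefore be establishing such a uniform bound directly from zero-freeness on $U$, with no interior anchor point to exploit. I would expect this to require a genuinely new analytic or combinatorial ingredient---perhaps a structural theorem linking the location of zeros of the multivariate partition function to the special slice $\lambda=0$, or a strengthened Christoffel--Darboux type identity that factors differences around an arbitrary base point rather than only around $\lambda=0$ and $\beta\gamma=1$---beyond the toolkit developed in this paper. Absent such an ingredient, I do not expect the contradiction argument above to close, which is consistent with the authors' choice to leave this as an open question.
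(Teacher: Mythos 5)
This statement is a \emph{Conjecture}: the paper does not prove it and explicitly leaves it open in Section~\ref{sec:conclude}. Your honest conclusion that you cannot close the argument is therefore the correct outcome, and your diagnosis of the central obstruction — that Montel's theorem gives no compactness without an anchor point inside $U$ at which the $P^{\sigma_\Lambda}_{G,v}$ are a priori bounded, whereas Lemma~\ref{bdcircle} and Theorem~\ref{mainany} both rely on such a point being available — is exactly the reason the paper's machinery does not extend to arbitrary $U$. Your observation that the Christoffel--Darboux identity factors only around $\lambda=0$ and $\beta\gamma=1$, and that a proof would likely require a factorization around an arbitrary base point or a new structural input, is also aligned with the paper's own Remark following the hard-core LDC lemma.

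One technical point in your contradiction set-up is not justified as written: from ``no region $U'\supseteq U$ containing $0$ is zero-free'' you cannot extract a sequence of zeros $\lambda_n\to 0$. Since $U'$ must contain all of $U$, the zeros witnessing the failure can accumulate anywhere along a path from $U$ to $0$, or on $\partial U$ itself, rather than at $0$. There is no canonical shrinking family of such $U'$, and the obstructing zeros have no reason to concentrate near the origin. So even if you had an anchor point, a Hurwitz argument targeting $\lambda=0$ would not be forced; you would at best transport a zero to some unspecified point of $\overline{U'}\setminus U$. A second, smaller caveat: transferring $Z^{\sigma_{\Lambda_n}}_{G_n}(\lambda_n)=0$ to the SAW tree via Theorem~\ref{prop:SAW} requires care, because that theorem's telescoping product already presupposes nonvanishing of the intermediate tree partition functions; the contrapositive gives you a zero of \emph{some} SAW-tree partition function, but not immediately at a chosen root. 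Neither issue rescues the argument, but they should be noted if you try to develop this further.
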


\begin{conjecture}
   Fix $\gamma\neq 0$ and $\lambda\neq 0$. 
   Suppose that the partition function $Z^{\sigma_\Lambda}_{G}(\beta)$ of the 2-spin system is zero-free on a complex region $U$. 
   Then, there exists a complex neighborhood $U'$ of $1/\gamma$ such that $U\subseteq U'$ and $Z^{\sigma_\Lambda}_{G}(\beta)$ is zero-free on $U'$. 
\end{conjecture}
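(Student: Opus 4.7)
The plan is to establish the conjecture in three phases, anchored on the special algebraic structure at $\beta = 1/\gamma$ and on the Christoffel-Darboux identity.

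First, I would examine the partition function at $\beta = 1/\gamma$. A direct computation using the identity $m_-(\sigma) - m_+(\sigma) = |E| - \sum_v s_v d_v$ (where $s_v = [\sigma(v) = +]$ and $d_v$ is the degree of $v$) yields the factorization
\[
Z_G(1/\gamma, \gamma, \lambda) = \gamma^{|E|} \prod_{v \in V}(\gamma^{-d_v}\lambda + 1),
\]
and an analogous factorization holds for pinned partition functions after absorbing the contributions of pinned neighbors into modified external fields on the remaining vertices. This gives an explicit criterion for zero-freeness at $\beta = 1/\gamma$ in terms of $\lambda$, $\gamma$, and the effective degree sequences appearing in the family.

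Second, I would show that zero-freeness on a nonempty region $U$ (uniformly across the family and all feasible pinnings) already forces zero-freeness at $\beta = 1/\gamma$. The key observation is that the Christoffel-Darboux identity (Theorem~\ref{CDgeneral}) produces $(\beta\gamma - 1)^{d(u,v)}$ as an explicit factor on the right-hand side. By pinning vertices pointwise and letting $u, v$ range over pairs at varying distance — together with the SAW tree reduction of Theorem~\ref{prop:SAW} — a hypothetical zero at $\beta = 1/\gamma$ for some $(G, \sigma_\Lambda)$ should propagate through subtree recursions to force accumulation of partition-function zeros on a set intersecting $U$, contradicting the zero-freeness assumption.

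Third, once zero-freeness at $1/\gamma$ is secured, I would use a normal-family argument analogous to Lemma~\ref{bdcircle} together with Hurwitz's theorem to construct a simply connected zero-free region containing both $1/\gamma$ and $U$. Concretely, along any continuous path from $1/\gamma$ to an arbitrary point of $U$, one obtains uniform local bounds on the ratios $P^{\sigma_\Lambda}_{G,v}(\beta)$ (a normal family by Montel's theorem, since they omit $0$ and $1$), which rules out any accumulation of zeros on a thickening of the path; this thickening, made simply connected by the Riemann mapping theorem, supplies the desired $U'$.

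The main obstacle is step two: amplifying a single hypothetical zero of $Z^{\sigma_\Lambda}_G$ at $\beta = 1/\gamma$ into a genuine contradiction with zero-freeness on a distant region $U$. The vertex-product factorization at $1/\gamma$ allows zeros only at finitely many explicit values of $\lambda$ relative to the degree sequence, and without additional structural hypotheses (closure of $\mathcal{G}$ under subgraphs, bounded degree, or a sufficiently rich family of pinnings), it is unclear that such zeros necessarily propagate into $U$ itself. A secondary difficulty in step three is topological: even with pointwise zero-freeness established at $1/\gamma$ and on $U$, the simply connected extension $U'$ requires controlling monodromy so that no zeros enclose the interpolating path. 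Overcoming both difficulties may require combining the Christoffel-Darboux recursion with finer analytic properties of partition functions viewed as holomorphic families indexed by $\mathcal{G}$.
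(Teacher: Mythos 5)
This statement is left \emph{open} in the paper: it appears as Conjecture~2 in the concluding section, with no proof or proof sketch offered. So there is no ``paper's own proof'' to compare against, and your task here is genuinely harder than it may look --- you are being asked to settle an open problem, not reconstruct a proof.

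Your Phase~1 is correct and a good observation. At $\beta\gamma = 1$ the edge interaction matrix $\left(\begin{smallmatrix}\beta & 1\\ 1 & \gamma\end{smallmatrix}\right)$ has rank one, so it factors as $f(s_u)f(s_v)$ with $f(+)=\gamma^{-1/2}$, $f(-)=\gamma^{1/2}$, and the partition function decouples over vertices as $Z_G(1/\gamma,\gamma,\lambda)=\gamma^{|E|}\prod_{v}(\gamma^{-d_v}\lambda+1)$; the pinned version follows by absorbing pinned neighbors into per-vertex field multipliers, exactly as in the proof of Corollary~\ref{cor:LY}. This immediately shows that $Z^{\sigma_\Lambda}_G(1/\gamma)=0$ if and only if the fixed $\lambda$ lands on one of the special values $-\gamma^{d}$ for some effective degree $d$ appearing across $\mathcal{G}$ and its feasible pinnings.

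Phase~2 is where the argument breaks, and you are right to flag it as the main obstacle, but I think the gap is more fundamental than you suggest. The Christoffel--Darboux identity (Theorem~\ref{CDgeneral}) controls the quantity $Z^{++}Z^{--}-Z^{+-}Z^{-+}$, i.e.\ how strongly the marginal ratios $P^{\sigma_\Lambda}_{G,v}$ and $P^{\sigma_\Lambda,u^{\pm}}_{G,v}$ agree to high order in $(\beta-1/\gamma)$. It says nothing about where $Z^{\sigma_\Lambda}_G(\beta)$ itself vanishes; in particular, it provides no mechanism by which a zero at $\beta=1/\gamma$ would ``propagate'' into the region $U$. Given Phase~1, what Phase~2 would actually have to show is the following sharp statement: if $\lambda=-\gamma^{d}$ for some degree $d$ realized in $\mathcal{G}$, then no complex region $U$ can be zero-free for the whole family. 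That is a global statement about the zero locus of a family of polynomials, and the CD identity offers no leverage toward it. Without it, the conjecture could fail already at the point $\beta=1/\gamma$, leaving no candidate $U'$ at all.

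Phase~3 has a separate, genuine obstruction, which again you correctly identify but do not resolve. Pointwise zero-freeness at $1/\gamma$ and on $U$ does not yield a simply connected zero-free set containing both: the union of zero sets of $\{Z^{\sigma_\Lambda}_G(\beta)\}$ across an infinite family can be dense on curves that separate $1/\gamma$ from $U$. Montel's theorem applied to the ratios $P^{\sigma_\Lambda}_{G,v}$ gives local boundedness along subsequences, and Hurwitz-type arguments control limits of non-vanishing functions, but neither rules out an accumulation of zeros of the $Z^{\sigma_\Lambda}_G$ themselves forming a ``wall'' between $1/\gamma$ and $U$. The Riemann mapping theorem only helps once you already have a simply connected zero-free open set; it does not manufacture one.

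In short: Phase~1 is a correct and useful reduction, but Phases~2 and~3 are at the level of hopes rather than proofs, and the specific tools you invoke (the CD identity for Phase~2, Montel/Hurwitz/Riemann for Phase~3) do not bridge the gaps for the reasons above. The statement remains open, as the authors themselves note.
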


\begin{conjecture}\label{conjecture3}
   Fix $\lambda\neq 0$. 
   Suppose that the partition function $Z^{\sigma_\Lambda}_{G}(\beta)$ of the Ising model is zero-free on a complex region $U$. 
   Then, there exists a complex neighborhood $U'$ of $1$ (or $-1$) such that $U\subseteq U'$ and $Z^{\sigma_\Lambda}_{G}(\beta)$ is zero-free on $U'$. 
\end{conjecture}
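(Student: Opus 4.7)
My plan is to attack Conjecture~\ref{conjecture3} by analysing the global structure in $\mathbb{C}$ of the universal Fisher zero set
\[
\mathcal{Z} \;=\; \bigcup_{G,\ \sigma_\Lambda} \{\beta \in \mathbb{C} : Z^{\sigma_\Lambda}_G(\beta) = 0\},
\]
where the union ranges over all graphs $G \in \mathcal{G}$ and all feasible partial configurations $\sigma_\Lambda$, and by showing that every connected component of $\mathbb{C}\setminus\mathcal{Z}$ with nonempty interior contains $1$ or $-1$. If this holds, then any hypothetical zero-free region $U$ sits inside a component of $\mathbb{C}\setminus\mathcal{Z}$ that meets $\{\pm 1\}$; one then takes $U'$ to be a simply connected open neighbourhood of $\pm 1$ inside that component which also contains $U$, for instance by fattening a simple arc from a point of $U$ to $\pm 1$ within the component.

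First I would use the SAW tree construction and Theorem~\ref{prop:SAW} to reduce the problem (at least for producing a lower bound on $\mathcal{Z}$) to trees with pinned leaves; on a tree the partition function is computed by a M\"obius-type recurrence $R_v(\beta) = \lambda\prod_{u\in\mathrm{ch}(v)} (\beta R_u + 1)/(R_u + \beta)$, and the factor $(\beta^2-1)^{d(u,v)}$ appearing in the specialization $\beta=\gamma$ of the Christoffel-Darboux identity (Theorem~\ref{CDgeneral}) makes transparent why the points $\pm 1$ are structurally distinguished. As a sanity check I would verify the conjecture by hand on small families such as paths, stars, and shallow regular trees, where the zero loci are computable and the expected picture—two principal zero-free components around $\pm 1$, possibly with some smaller parasitic components—can be confirmed.

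The crucial step is then to show that, as the tree grows and the pinning pattern varies, the zeros sweep out a dense subset of $\mathbb{C}\setminus(\mathcal{N}_{+}\cup\mathcal{N}_{-})$ for every pair of neighbourhoods $\mathcal{N}_{\pm}$ of $\pm 1$. The natural tool is complex dynamics: the Fisher zeros on large regular trees should accumulate on the Julia set of the tree-recurrence rational map in $\beta$, and by varying the degree and the boundary pinning one should be able to sweep out a dense union of Julia sets; the techniques in the bounded-degree Fisher-zero literature \cite{LSSFisherzeros, BB21, GGHP22, PRS23} would be the starting point. Complementarily, I would run a normal-families argument in the spirit of Lemma~\ref{bdcircle}: assuming for contradiction a zero-free region $U$ disjoint from every neighbourhood of $\pm 1$, the marginals $P^{\sigma_\Lambda}_{G,v}(\beta)$ form a normal family on $U$ avoiding $0$ and $1$, and one would try to argue that any subsequential limit is forced by the Christoffel-Darboux factorization to develop singularities within $U$.

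The main obstacle is exactly this density/exhaustion statement: for each fixed degree $d$ the accumulation locus of Fisher zeros is a thin curve that may miss substantial regions of $\mathbb{C}$, and unifying all degrees and all pinning patterns to rule out exotic small zero-free ``islands'' far from $\pm 1$ appears to require a genuinely new complex-dynamical input, which is presumably the reason the authors leave the statement as a conjecture. A concrete line of attack—to construct, for each $\beta_0 \in \mathbb{C}\setminus\{\pm 1\}$, an explicit pinned tree whose partition function vanishes arbitrarily close to $\beta_0$—would yield the conjecture, but producing such a construction uniformly in $\beta_0$, and in a way that survives the passage from trees to $\mathcal{G}$, is where I expect the real difficulty to lie.
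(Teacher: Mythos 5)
This statement is labelled as a \emph{conjecture} in the paper; the authors do not supply a proof, and indeed in Section~\ref{sec:conclude} they explicitly leave it open (remarking only that zero-free neighborhoods of $\beta=\pm1$ are known to exist, via \cite{LSSFisherzeros} and the real-contraction criterion of \cite{shao2019contraction}, so the \emph{target} sets $U'$ are nonempty). There is therefore no paper proof to compare your attempt against.

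Your proposal is likewise not a proof: you are upfront that the core step --- showing that for every $\beta_0\in\mathbb{C}\setminus\{\pm1\}$ and every $\eps>0$ there is a pinned graph whose Fisher zero set meets $\mathbb{D}_\eps(\beta_0)$, uniformly enough to rule out ``island'' zero-free components away from $\pm1$ --- is exactly what is missing, and nothing in the paper supplies it. A few additional cautions on the plan itself. First, the reduction to trees via Theorem~\ref{prop:SAW} only transfers zero-freeness from SAW trees \emph{to} graphs, not the other way; the zero set over all pinned graphs in $\mathcal{G}$ could a priori be strictly larger than the zero set over the corresponding trees, so bounding $\mathcal{Z}$ from below by tree zeros is fine but does not immediately characterize the components of $\mathbb{C}\setminus\mathcal{Z}$. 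Second, the Julia-set heuristic concerns the \emph{accumulation} locus of zeros for fixed degree as depth $\to\infty$; for finite pinned trees the zeros can lie well off this curve, and it is not clear that varying degree and pinning densely fills the complement of two neighborhoods of $\pm1$. Third, the Christoffel--Darboux factor $(\beta^2-1)^{d(u,v)}$ explains why $\pm1$ are special for the LDC/SSM machinery, but by itself it says nothing about where zeros \emph{cannot} be; the normal-families argument you sketch needs an actual contradiction mechanism, and Montel's theorem alone gives boundedness, not a singularity forcing a zero inside $U$. So the proposal is a reasonable research direction, but it does not establish the conjecture, and the paper does not claim to either.
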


For the Ising model, a complex zero-free neighborhood of $\beta=1$ is known \cite{LSSFisherzeros}. 
Also, one can check that the point $\beta=-1$ (with a fixed $\lambda$) satisfies the real contraction property introduced in \cite{shao2019contraction} which implies that there exists a complex neighborhood of $-1$ such that the partition function of the Ising model is zero-free on it.
Thus, for Conjecture \ref{conjecture3}, at least there exists a complex neighborhood of $\beta=1$ and a complex neighborhood of $\beta=-1$ respectively such that the partition function of the Ising model is zero-free on them.

\appendix
\section{Christoffel-Darboux identity for \texorpdfstring{$q$}{q}-spin systems on trees}\label{sec:CDqspin}

In this section, we extend Christoffel-Darboux type identity to $q$-spin systems.

Such a system is defined on a simple undirected graph $G=(V,E)$  and in this way the individual entities comprising the system correspond to the
vertices $V$ and their pairwise interactions correspond to the edges $E$.

The system is associated with some parameters consisting of a $q\times q$ symmetric matrix of edge activities $A=(a_{i,j})_{1\le i,j\le q}\in\mathcal{M}_q(\mathbb{C})$ that models the tendency of vertices to agree and disagree with their neighbors, and a uniform vertex activity vector $\boldsymbol{\lambda}=(\lambda_1,\cdots,\lambda_q)\in\mathbb{C}^q$ that models an external field which determines the propensities of a vertex. 

A partial configuration of this system refers to a mapping $\sigma: \Lambda\to[q]=\{1,2,\cdots,q\}$ for some $\Lambda\subseteq V$ which may be empty. It assigns one of the $q$ spins to each vertex in $\Lambda$. 
When $\Lambda=V$, it is called a configuration and its weight is 
\[\omega(\sigma)=\prod_{v\in V}\lambda_{\sigma(v)}\prod_{(u,v)\in E}a_{\sigma(u),\sigma(v)}.\] 
The \emph{partition function} of a 2-spin system is defined to be
\[Z_G(A,\boldsymbol{\lambda}):=\sum\limits_{\sigma:V\to[q]}w(\sigma).\] 
We also define the partition function   conditioning on a pre-described partial configuration $\sigma_\Lambda$ by \[Z^{\sigma_\Lambda}_G(A,\boldsymbol{\lambda})\sum_{\tau:V\to[q],\tau|_{\Lambda}=\sigma_\Lambda}w(\sigma)\] 
where $\tau|_{\Lambda}$ denotes the restriction of the configuration $\tau$ on $\Lambda$. For $u,v\in V(G),$ we define
\[Z^{\sigma_\Lambda,i}_{G,v}(A,\boldsymbol{\lambda})=\sum\limits_{\sigma|_{\Lambda}=\sigma_\Lambda,\sigma(v)=i}w(\sigma) \text{ and } Z^{\sigma_\Lambda,i,j}_{G,v,u}(A,\boldsymbol{\lambda})=\sum\limits_{\sigma|_{\Lambda}=\sigma_\Lambda,\sigma(v)=i,\sigma(u)=j}w(\sigma).\]
If $v\in\Lambda$ is already pinned to some $j\in[q]$, then automatically $Z^{\sigma_\Lambda,i}_{G,v}=0$ for all $i\neq j$ and $Z^{\sigma_\Lambda,j}_{G,v}=Z^{\sigma_\Lambda}_{G}.$ 
Given a partial configuration $\sigma_\Lambda$,  for any $\Lambda'\subset\Lambda,$ we denote $\sigma_{\Lambda'}$ the restriction of $\sigma_{\Lambda}$ on $\Lambda'$, called a  sub-partial configuration of $\sigma_{\Lambda}$. In the rest of this paper, we omit the argument $(A,\lambda)$ of the partition functions for simplicity. 

Now we generalize Theorem \ref{CDgeneralintro} to $q$-spin systems as the following: 
\begin{theorem}\label{thm:mainmul}\label{thm:CD-q-spin}
Suppose that $T$ is a tree, $\sigma_{\Lambda}$ is a partial configuration on some $\Lambda \subseteq V$ and $u\neq v$ are two vertices in $V\backslash \Lambda$.
    Let $P$ denote  the unique path in $T$ connecting $u$ and $v$, $N[P]$ denote the set of neighbors of points in $P$,  $N[P]\setminus P=\{v_1,\cdots,v_n\}$, and $T_s$ denote the component of $v_s$ in $T\setminus P$. 
    Then for the partition function of the $q$-spin system defined on $T$ conditioning on $\sigma_{\Lambda}$, we have 
\[\det(Z_{T,u,v}^{\sigma_{\Lambda},i,j})_{1\le i,j\le q}=\begin{cases}(\det A)^{d(u,v)}\prod\limits_{i=1}^{q}\lambda_s^{d(u,v)+1}\prod\limits_{s=1}^{n}\prod\limits_{t=1}^{q}\left(\sum\limits_{k=1}^{q}a_{t,k}Z_{T_s,v_s}^{\sigma_{\Lambda},k}\right) & , \text{if } P\cap\Lambda=\varnothing\\
0 & , \text{if } P\cap \Lambda\neq\varnothing
\end{cases}.\]
\end{theorem}

\begin{proof}
    We prove the equality by induction on $D=d(u,v)$ when $P\cap\Lambda=\varnothing$. When $k=1,$ $u$ and $v$ are adjacent, Denote $u,v_1,\cdots,v_l$ neighbors of $v$ and $v,v_{l+1},\cdots,v_{n}$ neighbors of $u$, as illustrated in the following figure, then
    \[Z_{T,u,v}^{\sigma_{\Lambda},i,j}=\lambda_i\lambda_j a_{i,j}\prod\limits_{s=1}^{l}\left(\sum\limits_{k=1}^{q}a_{k,j}Z_{T_s,v_s}^{\sigma_{\Lambda},k}\right)\prod\limits_{s=l+1}^{n}\left(\sum\limits_{k=1}^{q}a_{i,k}Z_{T_s,v_s}^{\sigma_{\Lambda},k}\right),\]
thus we deduce that
\[\begin{aligned}
\det(Z_{T,u,v}^{\sigma_{\Lambda},i,j})_{1\le i,j\le q}
& = \sum\limits_{\tau\in\mathcal{S}_q}\text{sgn}(\tau)\prod\limits_{t=1}^{q}Z_{T,u,v}^{\sigma_{\Lambda},t,\tau(t)} \\
& = \sum\limits_{\tau\in\mathcal{S}_q}\text{sgn}(\tau)\prod\limits_{t=1}^{q}\lambda_t\lambda_{\tau(t)} a_{t,\tau(t)}\prod\limits_{s=1}^{l}\left(\sum\limits_{k=1}^{q}a_{k,\tau(t)}Z_{T_s,v_s}^{\sigma_{\Lambda},k}\right)\prod\limits_{s=l+1}^{n}\left(\sum\limits_{k=1}^{q}a_{t,k}Z_{T_s,v_s}^{\sigma_{\Lambda},k}\right) \\
& = \sum\limits_{\tau\in\mathcal{S}_q}\text{sgn}(\tau)\prod\limits_{t=1}^{q}a_{t,\tau(t)}\prod\limits_{t=1}^{q}\lambda_t^2\prod\limits_{s=1}^{n}\prod\limits_{t=1}^{q}\left(\sum\limits_{k=1}^{q}a_{k,t}Z_{T_s,v_s}^{\sigma_{\Lambda},k}\right)\\
& = \det A \prod\limits_{t=1}^{q}\lambda_t^2\prod\limits_{s=1}^{n}\prod\limits_{t=1}^{q}\left(\sum\limits_{k=1}^{q}a_{k,t}Z_{T_s,v_s}^{\sigma_{\Lambda},k}\right),
\end{aligned}\]
where $\mathcal{S}_q$ denotes the symmetry group of $[q]$ (i.e., the group of all bijections from $[q]$ to itself) and the third equality uses the fact that $A$ is symmetric. 
\begin{figure}[!hbtp]
\centering
	\includegraphics[scale=0.18]{newphoto.jpg}
\end{figure}

Suppose now that $D\ge 2$ and the assertion is true for $D-1$, suppose that $v_0,v_1,\cdots,v_l$ are neighbors of $v$ with $v_0$ on $P.$ Let $T_0$ denote the component of $T\setminus\{v\}$ containing $v_0$ and $u$, then 
\[Z_{T,u,v}^{\sigma_{\Lambda},i,j}=\lambda_j\left(\sum\limits_{k=1}^{q}a_{k,j}Z_{T_0,v_0,u}^{\sigma_{\Lambda},k,i}\right)\prod\limits_{s=1}^{l}\left(\sum\limits_{k=1}^{q}a_{k,j}Z_{T_s,v_s,u}^{\sigma_{\Lambda},k,i}\right),\]
hence
\[\begin{aligned}
\det(Z_{T,u,v}^{\sigma_{\Lambda},i,j})_{1\le i,j\le q}
& = \sum\limits_{\tau\in\mathcal{S}_q}\text{sgn}(\tau)\prod\limits_{t=1}^{q}\lambda_{\tau(t)}\left(\sum\limits_{k=1}^{q}a_{k,\tau(t)}Z_{T_0,v_0,u}^{\sigma_{\Lambda},k,t}\right)\prod\limits_{s=1}^{l}\left(\sum\limits_{k=1}^{q}a_{k,\tau(t)}Z_{T_s,v_s}^{\sigma_{\Lambda},k}\right) \\
& = \prod\limits_{t=1}^{q}\lambda_{t}\prod\limits_{s=1}^{l}\prod\limits_{t=1}^{q}\left(\sum\limits_{k=1}^{q}a_{k,t}Z_{T_s,v_s}^{\sigma_{\Lambda},k}\right)\sum\limits_{\tau\in\mathcal{S}_q}\text{sgn}(\tau)\prod\limits_{t=1}^{q}\left(\sum\limits_{k=1}^{q}a_{k,\tau(t)}Z_{T_0,v_0,u}^{\sigma_{\Lambda},k,t}\right),
\end{aligned}\]
so we are going to study the term
\[\sum\limits_{\tau\in\mathcal{S}_q}\text{sgn}(\tau)\prod\limits_{t=1}^{q}\left(\sum\limits_{k=1}^{q}a_{k,\tau(t)}Z_{T_0,v_0,u}^{\sigma_{\Lambda},k,t}\right),\]
it can be viewed as the determinant of the product of matrices $A$ and $(Z_{T_0,v_0,u}^{\sigma_{\Lambda},k,t})_{1\le k,t\le q}$, thus it equals to the product of their determinants. Now we have
\[\det(Z_{T,u,v}^{\sigma_{\Lambda},i,j})_{1\le i,j\le q}=\prod\limits_{t=1}^{q}\lambda_{t}\prod\limits_{s=1}^{l}\prod\limits_{t=1}^{q}\left(\sum\limits_{k=1}^{q}a_{k,t}Z_{T_s,v_s}^{\sigma_{\Lambda},k}\right)\det A\det (Z_{T_0,v_0,u}^{\sigma_{\Lambda},k,t})_{1\le k,t\le q},\]
notice that $d(u,v_0)=D-1$, we can apply the hypothesis of induction. 

When $P\cap\Lambda\neq\varnothing$, we argue by induction on $l=d_P(u,P\cap\Lambda)+d_P(v,P\cap\Lambda)\ge 2.$ The basic case $l=2$ is simple and the process of induction is exactly as above. 
\end{proof}

\section{Comparison with the cluster expansion approach}
\label{app:comparison}

In \cite{Guus2021zerofreetossm}, 
the \emph{local dependence of coefficients} (LDC) property is proved using  \emph{cluster expansions} for the hard-core model. 
We first review the cluster expansion formulation.
Then, we discuss its limitations in extending to general 2-spin systems, and explain why the Christoffel-Darboux type identity introduced in the present paper can circumvent the obstacle.

\subsection{Review of the cluster expansion approach}

For the hard-core model on a graph \(G=(V,E)\) with fugacity \(\lambda\), the ratio 
\[
P_{G,v}(\lambda)=\frac{\lambda Z_{G\setminus N[v]}(\lambda)}{Z_G(\lambda)}
\]
(which for positive \(\lambda\) equals the marginal probability that vertex \(v\) is occupied) admits a series expansion around \(\lambda=0\) derived from the cluster expansion of the multivariate independence polynomial. 
Concretely, viewing \(Z_G(\lambda)\) as the specialization of the multivariate polynomial 
\(Z_G(\mathbf{w})=\sum_{I\subseteq V\atop I\text{ indep.}}\prod_{u\in I} w_u\) at \(w_u\equiv \lambda\), one obtains near \(\lambda=0\)
\begin{equation}\label{eq:cluster-expansion}
P_{G,v}(\lambda)=\sum_{k\geq 1}\frac{1}{k!}\sum_{v_1,\dots,v_k\in V}
\phi\bigl(G(v_1,\dots,v_k)\bigr)\,m_v(v_1,\dots,v_k)\,\lambda^k,
\end{equation}
where \(G(v_1,\dots,v_k)\) is the cluster induced by the (not necessarily distinct) vertices \(v_1,\dots,v_k\), 
\(\phi(H)\) is the Ursell function of a graph \(H\), and 
\(m_v(v_1,\dots,v_k)\) counts the occurrences of \(v\) among \(v_1,\dots,v_k\).
Because \(\phi(H)=0\) unless \(H\) is connected, the coefficient of \(\lambda^k\) depends only on the \((k-1)\)-neighborhood of \(v\) in \(G\). 
This immediately yields the LDC property for the hard-core model.

For graph homomorphism measures, a similar cluster expansion is obtained by realizing the partition function as the independence polynomial of an auxiliary graph \(\Gamma\) whose vertices correspond to connected subgraphs of the original graph. 

Let \(G=(V,E)\) be a graph and let \(A\) be a symmetric \(q\times q\) matrix. 
We consider the partition function with an external field \(\xi\in\mathbb{C}^{V\times[q]}\):
\[
Z_G(A,\xi)=\sum_{\phi:V\to[q]}\prod_{v\in V}\xi_{v,\phi(v)}\prod_{uv\in E}A_{\phi(u),\phi(v)}.
\]
Fix a boundary condition \(\sigma:\Lambda\to[q]\) on some \(\Lambda\subseteq V\setminus\{v\}\). 
The goal is to expand the ratio
\[
P_{G,v,i;A}^{\sigma}(z)=\frac{Z_{G,v,i}^{\sigma_{v,i}}(J+z(A-J))}{Z_G^{\sigma}(J+z(A-J))}
\]
around \(z=0\), where \(J\) is the all-ones matrix.

\paragraph{Auxiliary graph \(\Gamma\).}
The vertex set of \(\Gamma\) consists of all connected subgraphs \(H=(S,F)\) of \(G\) with at least one edge (\(|F|\geq 1\)). 
Two vertices \(H_1=(S_1,F_1)\) and \(H_2=(S_2,F_2)\) are adjacent in \(\Gamma\) if and only if \(S_1\cap S_2\neq\emptyset\). 
Thus an independent set in \(\Gamma\) corresponds to a collection of pairwise vertex-disjoint connected subgraphs of \(G\).

\paragraph{Weights.}
For a vertex \(H=(S,F)\) of \(\Gamma\), its weight \(w^{\sigma}(H)\) is defined as
\[
w^{\sigma}(H):= \frac{z^{|F|}\,Z_H^{\sigma}(A-J,\xi)}
{\Bigl(\prod_{v\in S\setminus\Lambda}\sum_{i=1}^q\xi_{v,i}\Bigr)\cdot\prod_{v\in\Lambda\cap S}\xi_{v,\sigma(v)}},
\]
where \(Z_H^{\sigma}(A-J,\xi)\) is the partition function on \(H\) with edge matrix \(A-J\) and external field \(\xi\), under the boundary condition \(\sigma\) restricted to \(\Lambda\cap S\). 
Define also the prefactor
\[
p^{\sigma}(\xi):= \Bigl(\prod_{v\in V\setminus\Lambda}\sum_{i=1}^q\xi_{v,i}\Bigr)\cdot\prod_{v\in\Lambda}\xi_{v,\sigma(v)}.
\]

\paragraph{Polymer representation.}
Lemma~8 of~\cite{Guus2021zerofreetossm} states that
\[
p^{\sigma}(\xi)\,Z_{\Gamma}(w^{\sigma}) = Z_G^{\sigma}(J+z(A-J),\xi),
\]
where \(Z_{\Gamma}(w^{\sigma})\) is the multivariate independence polynomial of \(\Gamma\) evaluated at the weights \(w^{\sigma}(H)\):
\[
Z_{\Gamma}(w^{\sigma}) = \sum_{\substack{\mathcal{I}\subseteq V(\Gamma)\\\mathcal{I}\text{ independent}}}
\;\prod_{H\in\mathcal{I}} w^{\sigma}(H).
\]

\paragraph{Cluster expansion of the ratio.}
Applying the formal cluster expansion (7) to \(\log Z_{\Gamma}(w^{\sigma})\) and differentiating with respect to \(\xi_{v,i}\), one obtains the following series expansion for the ratio (see~\cite[Lemma~9]{Guus2021zerofreetossm}):
\[
P_{G,v,i;A}^{\sigma}(z) = \frac{1}{q} + \sum_{\ell\geq 1} z^{\ell} \sum_{k\geq 1} \frac{1}{k!} 
\sum_{(H_1,\dots,H_k)\in \mathcal{C}_{v,\ell,k}(G)}
\phi\bigl(\Gamma(H_1,\dots,H_k)\bigr)\,
\frac{\partial}{\partial\xi_{v,i}} \prod_{j=1}^{k} \widehat{w}^{\sigma}(H_j)\Big|_{\xi=1},
\]
where
\begin{itemize}
    \item \(\mathcal{C}_{v,\ell,k}(G)\) is the collection of sequences \((H_1,\dots,H_k)\) of connected subgraphs of \(G\) with at least two vertices satisfying
    \(\sum_{j=1}^{k}|E(H_j)|=\ell\), \(v\in\bigcup_{j=1}^{k}V(H_j)\), and the graph \(\Gamma(H_1,\dots,H_k)\) (the subgraph of \(\Gamma\) induced by \(\{H_1,\dots,H_k\}\)) is connected;
    \item \(\phi\) is the Ursell function; 
    \item \(\widehat{w}^{\sigma}(H)=w^{\sigma}(H)\,z^{-|E(H)|}\) are the scaled weights.
\end{itemize}
The crucial observation is that the \(\ell\)-th coefficient of this series depends only on the subgraphs \(H_j\) whose vertex sets lie within distance \(\ell\) of \(v\) in \(G\). 
Consequently, the coefficients of \(z^{0},z^{1},\dots,z^{d-1}\) are identical for two boundary conditions that agree on the \(d\)-neighborhood of \(v\), yielding the LDC property.

\subsection{Limitations of the cluster expansion approach for general 2‑spin systems}

Despite its effectiveness for the restricted setting considered in~\cite{Guus2021zerofreetossm}, the cluster expansion formulation for graph homomorphisms suffers from two fundamental limitations that prevent its application to general 2‑spin systems.

\begin{enumerate}
    \item \textbf{Restricted parameter path.}
    The expansion is derived for a partition function whose edge matrix varies along the specific one‑parameter trajectory \(A(z)=J+z(A-J)\). 
    In other words, the zero‑freeness assumption must hold for a whole family of matrices that interpolate linearly between the all‑ones matrix \(J\) and the target matrix \(A\). 
    For a general 2‑spin system with edge activities \(\beta\) and \(\gamma\), the analogous statement would require zero‑freeness for all parameters lying on a line segment connecting \((\beta,\gamma)=(1,1)\) to the desired values. 
    This is far more restrictive than the univariate zero‑free regions (e.g., in \(\lambda\) alone or in \(\beta\) alone).
    
    \item \textbf{Absence of expansion in the external field variables.}
    The cluster expansion used above treats the auxiliary vertex weights \(w^{\sigma}(H)\) as power series in the single variable \(z\), while the external field parameters \(\xi_{v,i}\) are merely spectators. 
    No expansion is performed in the \(\xi\) variables themselves. 
    Consequently, the method does \emph{not} yield a multivariate series expansion of the ratios as functions of several independent complex parameters. 
    In a general 2‑spin system, one may need to expand simultaneously in \(\lambda\), \(\beta\), and \(\gamma\) to obtain LDC around different points (e.g., \(\lambda=0\) for arbitrary \(\beta,\gamma\), or \(\beta\gamma=1\) for arbitrary \(\lambda\)). 
    The cluster expansion approach, being inherently univariate in its current form, cannot yield such a unified multi‑parameter LDC statement. 
\end{enumerate}

In contrast, the Christoffel--Darboux type identity established in this paper makes no reference to any specific parameter trajectory and does not require a convergent series expansion in any of the variables. 
It directly factors the algebraic difference of partition functions with different pinned vertices, revealing powers of \(\lambda\) and \(\beta\gamma-1\) that depend only on the graph distance. 
This purely combinatorial identity therefore yields LDC uniformly for all three parameters and applies to zero‑free regions of arbitrary shape, overcoming the restrictions of the cluster expansion framework. 

\bibliographystyle{alpha}
\bibliography{new}
\end{document}